\definecolor{jens}{rgb}{0,0,0}
\definecolor{all}{rgb}{0,0,0}
\newcommand{\je}{\textcolor{jens}}
\newcommand{\all}{\textcolor{all}}
\newcommand{\jg}[1]{{\color{black} #1}}
\newcommand{\ja}[1]{{\color{black} #1}}
\newcommand{\ec}[1]{{\textcolor{black}{ #1}}}
\newtheorem{thm}{Theorem}
\newtheorem{lm}{Lemma}
\newtheorem{df}[lm]{Definition}
\newtheorem*{lm*}{Lemma}
\newtheorem{co}[thm]{Corollary}
\newtheorem*{co*}{Corollary}
\newcommand{\Trace}[1]{\operatorname{Tr}\left[#1\right]}
\newcommand{\SmallTrace}[1]{\operatorname{Tr}[#1]}
\DeclareMathOperator{\tr}{tr}
\newcommand{\cc}{\mathbbm{C}}
\newcommand{\Ra}{\Rightarrow}
\newcommand{\ra}{\rightarrow}
\newcommand{\tel}[1]{\frac{1}{#1}}\newcommand{\id}{\mathbb{1}}
\newcommand{\norm}[2]{\| #1 \|_{#2}}
\newcommand{\oneNorm}[1]{\norm{#1}{1}}
\newcommand{\infNorm}[1]{\norm{#1}{\infty}}
\newcommand{\oneoneNorm}[2]{\frac{\oneNorm{#1\kl{#2}}}{\oneNorm{#2}}}
\newcommand{\oneToOne}[1]{\norm{#1}{1\ra 1}}
\newcommand{\toNorm}[2]{\norm{#1}{#2\ra #2}}
\newcommand{\sigmafrac}[2]{\frac{\oneNorm{#1\kl{\sigma_{#2}}}}{\oneNorm{\sigma_{#2}}}}
\newcommand{\normfrac}[2]{\frac{\oneNorm{#1\kl{#2}}}{\oneNorm{#2}}}
\newcommand{\pophim}[1]{p_{#1}(\phi^-)}
\newcommand{\eee}{\epsilon}
\newcommand{\x}{\xi}
\newcommand{\tec}[1]{\tau\kl{#1}}
\newcommand{\inv}[1]{\kl{#1}^{-1}}
\newcommand{\idd}{\id}  
\newcommand{\adj}{^{\dagger}}
\newcommand{\maxs}[1]{\max_{\sigma_{#1}}}
\newcommand{\maxt}[1]{\max_{\Trace{\sigma_{#1}}=0}}
\newcommand{\with}{\quad \text{with} \quad}
\newcommand{\MS}{\mathcal{S}}
\newcommand{\PP}{\mathcal{P}}
\newcommand{\Finf}[1]{{F^{\infty}_{#1}}}
\newcommand{\FFinf}[1]{{\mathcal{F}^{\infty}_{#1}}}
\newcommand{\ZZ}{\mathcal{Z}}
\newcommand{\ZZZ}[1]{\mathcal{Z}\kl{#1}}
\newcommand{\EE}{\mathcal{E}}
\newcommand{\AAn}{\mathcal{A}_n}
\newcommand{\AAnn}{\mathcal{A}_{n+2}}
\newcommand{\EEn}{\mathcal{E}_n}
\newcommand{\BBn}{\mathcal{B}_n}
\newcommand{\CCn}{\mathcal{C}_n}
\newcommand{\DDn}{\mathcal{D}_n}
\newcommand{\FF}{\mathcal{F}}
\newcommand{\kl}[1]{\left(#1\right)}
\newcommand{\skl}[1]{\left\{#1\right\}}
\newcommand{\bkl}[1]{\left| #1 \right|}
\newcommand{\bra}[1]{\left\langle #1 \right|}
\newcommand{\ket}[1]{\left| #1\right\rangle}
\newcommand{\braket}[2]{\left\langle #1 \right|\left. #2 \right\rangle}
\newcommand{\Braket}[3]{\left\langle #1 \right|\left. #2 \right|\left. #3\right\rangle}
\begin{document}

\title{Renormalising entanglement distillation}
\author{Stephan Waeldchen}
\affiliation{Dahlem Center for Complex Quantum Systems, Freie Universit{\"a}t Berlin, 14195 Berlin, Germany}
\author{Janina Gertis}
\affiliation{Dahlem Center for Complex Quantum Systems, Freie Universit{\"a}t Berlin, 14195 Berlin, Germany}
\author{Earl T.\ Campbell}
\affiliation{Dahlem Center for Complex Quantum Systems, Freie Universit{\"a}t Berlin, 14195 Berlin, Germany}
\affiliation{Department of Physics and Astronomy,
University of Sheffield, Sheffield S3 7RH, UK}
\author{Jens Eisert}
\affiliation{Dahlem Center for Complex Quantum Systems, Freie Universit{\"a}t Berlin, 14195 Berlin, Germany}

\date{\today}

\begin{abstract}
Entanglement distillation refers to the task of transforming a collection of weakly entangled pairs into fewer highly entangled ones. It is a core ingredient in quantum repeater protocols, needed to transmit entanglement over arbitrary distances in order to realise quantum key distribution schemes. Usually, it is assumed that the initial entangled pairs are i.i.d. distributed and uncorrelated with each other, an assumption that might not be reasonable at all in any entanglement generation process involving memory channels.  Here, we introduce a framework that captures entanglement distillation in the presence of natural correlations arising from memory channels. Conceptually, we bring together ideas from condensed-matter physics - that of renormalisation and of matrix-product states and operators - with those of local entanglement manipulation, Markov chain mixing, and quantum error correction. We identify meaningful parameter regions for which we prove convergence to maximally entangled states, arising as the fixed points of a matrix-product operator renormalisation flow.
\end{abstract}
\maketitle
 
It has long been noted in the field of quantum information science that entanglement constitutes the key resource in various information processing and specifically communication tasks \cite{Bennett}. Secure quantum key distribution necessarily relies on entanglement, even in prepare and measure schemes \cite{GisinReview,GisinReview2,Lutkenhaus}.  A central goal in quantum information science has 
been the development of techniques to transform less \ja{useful} 
forms of entanglement into more suitable ones, and \ja{enhancing} 
our understanding of the laws governing the manipulation of entanglement.  The task of entanglement distillation specifically captures the resource character of entanglement, in that it aims at preparing maximally entangled states from noisy or less entangled ones \cite{Bennett}. 
 The concept of {\it distillable entanglement} grasps the  maximum rate at which this is asymptotically possible, starting from a collection of many identically prepared quantum systems -- and is hence of profound interest in the conceptual foundations of the field. 
Distillation steps are part of {\it quantum repeater protocols} \cite{Repeater1,Repeater2,Repeater3}, necessary to distribute entanglement over arbitrary distances \all{using} noisy quantum channels: In such a scheme, entanglement is being established between different repeater stations and transferred to the final designated nodes via suitable entanglement swapping steps. Distillation schemes thought of in this context are often iterative schemes, such as the recurrence protocol \cite{DeutschEkert,Bennett} and deterministic protocols based on error-correction codes~\ec{\cite{Bennett,Hartmann,Liang,Ying13}}. While iterative schemes do not achieve the maximum rates set by the distillable entanglement, they require less sophisticated and \all{more} 
practically feasible operations. 

 The silent assumption in almost exclusively all of the proposed distillation schemes, however, is that the initial resources have been identically prepared and show no correlations whatsoever. 
 \all{While this is surely a good assumption in many preparations, it might not be reasonable at all in others.}
Whenever memory effects or channels \cite{HowLongForget,PlenioMemory,Memory} are involved, one expects some correlations between the involved entangled pairs, going beyond an i.i.d.\ setting. 
These correlations are expected to  decay rapidly over several pairs sent through a channel -- reflecting the natural correlation structure arising from a {\it memory channel} (see Fig.~\ref{fig:mpo}).  
\all{This may even be a desired feature, if resetting the system may be too elaborate or too costly in 
resources of time.} The mathematical definition of distillable entanglement in the presence of correlations has been developed~\cite{CopyCorrelatedSources,CorrelatedSources}. \ec{Yet,} the important practical problem of \ja{distilling} 
entanglement from correlated pairs arising from quantum memory channels is still wide open.
 
In this work, we propose a conceptually novel way forward to solve this problem, bringing together ideas from entanglement theory with those of condensed matter theory, specifically of renormalisation \ja{techniques} and tensor network \ja{methods}.
We start by identifying the natural class of states arising from preparations and memory channels as bi-partite {\it matrix-product operators}
(MPO)
\cite{Mixed,Zwolak}. Such classes of states are usually considered in the condensed matter context to capture thermal many-body states or those arising from open systems dynamics \cite{Mixed,Zwolak,Positive,Banuls}. 
Here, we encounter \all{natural bi-partite} instances thereof. 
Entanglement distillation is then identified as a {\it renormalisation of bi-partite matrix-product operators}. 
The methods are inspired \all{by} and derive from renormalisation \cite{RenormalisingMPS} of {\it matrix-product states} \cite{FCS,MPSSurvey,Schollwock201196,TensorNetworkReview,OrusReview,SchuchReview}, 
again from many-body theory. 



\begin{figure}[b!]
\includegraphics[width=0.53\textwidth]{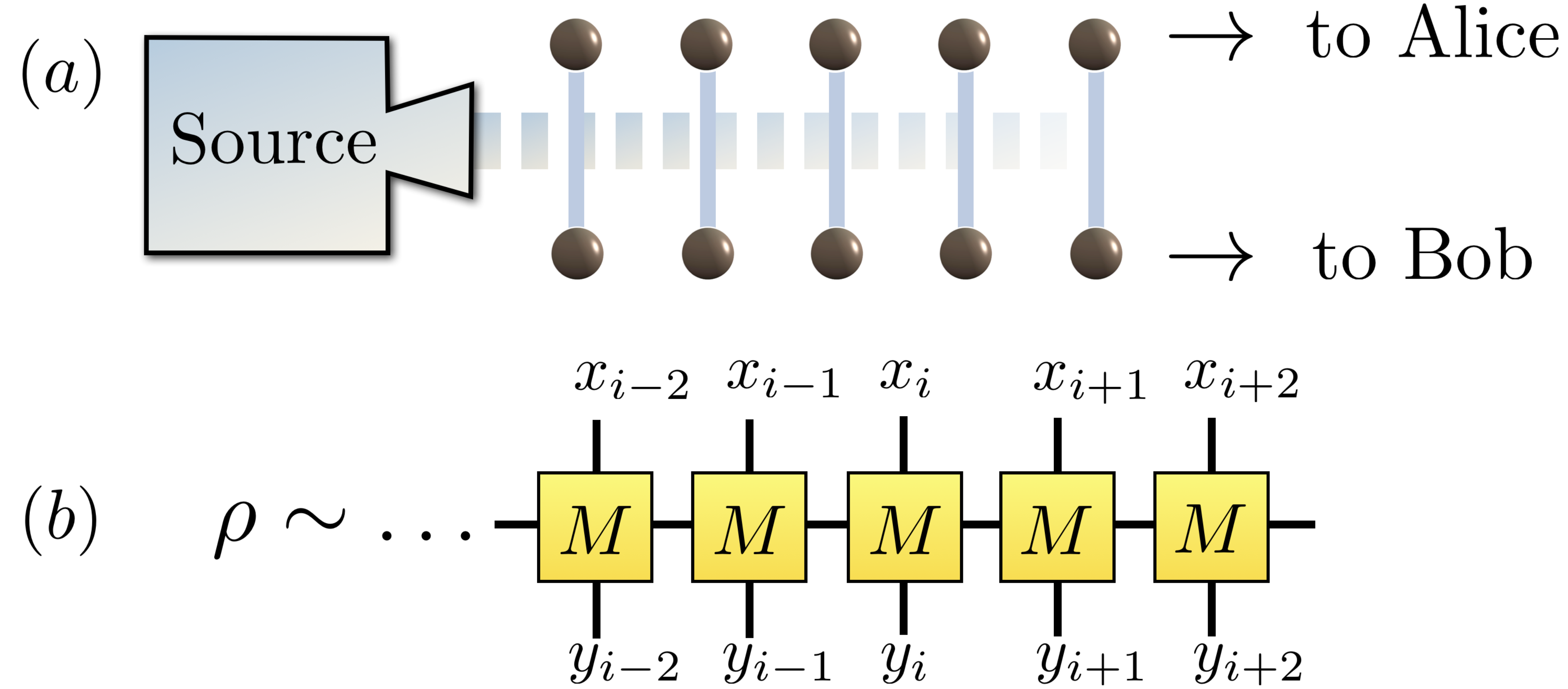}
\caption{(a) Source with memory emitting weakly correlated photon pairs and (b) the MPO $\rho$ naturally describing this setting. 
}
\label{fig:mpo}
\end{figure}

Specifically, we show that both the recurrence protocol and \all{an} error correction \all{based}
protocol \all{\cite{Bennett}}
converge to pairs of maximally entangled pure states for \ec{suitably} correlated pairs which are naturally described by an MPO. 
This leads to entanglement distillation very similar to the i.i.d.\ case\all{;} \ja{surprisingly, }
allowing for principally unwanted correlations between subsequent pairs can even speed up the convergence to maximally entangled pairs compared to the uncorrelated i.i.d.\ case. 
We \all{discuss} a simple physical example where this is the case for a large parameter region \ja{of initial fidelities and correlation strengths}.

\subsection*{Setting and formalism}  
We consider a sequence of $L$ pairs of qubits, where two parties (say Alice and Bob) each hold one qubit from each pair. 
The focus on qubits is set for simplicity of notation only, it is clear
that the same framework can be applied \all{to} systems of other physical dimension.
These pairs are entangled, as well as correlated with each other, as a consequence of the preparation procedure involving stationary quantum memory effects. A natural preparation 
exhibiting such a memory involves an auxiliary quantum system $C$ of some dimension $d$
that embodies all the degrees of freedom of the memory. The state is then prepared in a sequential fashion, with the
memory unitarily interacting with the first entangled pair, then the second, and so on \cite{Sequential,PlenioMemory,MPSSurvey}. 
A state generated in this way is \ja{given by} a matrix-product state, if it is pure, or a \emph{matrix-product
operator} in case of noisy mixed states \cite{Mixed,Zwolak}, as they are considered here, with $d$ taking the role of the \emph{bond dimension}. The decay of memory effects in the distance between the entangled pairs naturally emerges in this construction. \ja{We introduce here how }
naturally correlated bi-partite MPO \ja{arising from this setting.} 

More specifically, we work in a numerically indexed Bell basis $\left( \ket{\phi_1}, \ket{\phi_2}, \ket{\phi_3}, \ket{\phi_4}\right)$, more commonly labelled as $\left( \ket{\phi^+}, \ket{\phi^-}, \ket{\psi^+}, \ket{\psi^-} \right)$.   We consider a sequence of $L$ pairs of qubits, with basis vectors $\ket{ \Phi_{\mathbf{x}} } = \ket{ \phi_{x_1} }\ket{ \phi_{x_2} }\ldots\ket{ \phi_{x_L} }$, where Alice holds the first qubit of each pair and Bob holds its partner.  Translationally invariant mixed states reflecting stationarity of the source are described in the MPO language as
\begin{equation} 
 \bra{ \Phi_{\mathbf{x}}}  \rho   \ket{ \Phi_{\mathbf{y}}}=\Trace{  M^{x_1,y_1}M^{x_2,y_2} \ldots M^{x_L,y_L} }.
 \end{equation}
\all{Purely for simplicity of notation, we take periodic boundary conditions here.}
The dimension of the matrices $M^{x,y}\in \cc^{d\times d}$, $x,y\in\{1,\dots, 4\}$ 
limits the correlations between pairs, and by increasing this bond dimension $d$,
arbitrary quantum states can be described in this formalism.  There is a gauge freedom in our choice of MPO matrices as for any invertible $\all{S}$, 
mapping 
\begin{equation}
	M^{x,y}\mapsto \all{S} M^{x,y} \all{S}^{-1}
\end{equation} 
	will give an alternative description of the same physical state. 
Generally, 16 matrices are needed for the description of each pair, 
\all{reflecting the two-particle density matrix.}
However, without loss of generality we can assume
each $M^{x,y}$ to be Bell diagonal. If the state was not Bell diagonal originally, it can be brought into this form by 
a suitable local group twirl over the Pauli group \cite{Bennett}.
Since the employed protocols \all{only}
make use of Clifford operations, the group twirl will commute with these operations, so that it can be implemented at the very end or
merely at the level of classical data processing. For this reason we \all{use} the shorthand $A=M^{1,1}$, $B=M^{2,2}$, $C=M^{3,3}$ and $D=M^{4,4}$. Without loss of generality, we consider the distillation of maximally entangled $\phi^+$ pairs.  The ``$A$" matrix will be the dominant matrix, \all{while} the others we will call noise matrices.

\begin{figure}[b!]
\includegraphics[width=0.51\textwidth]{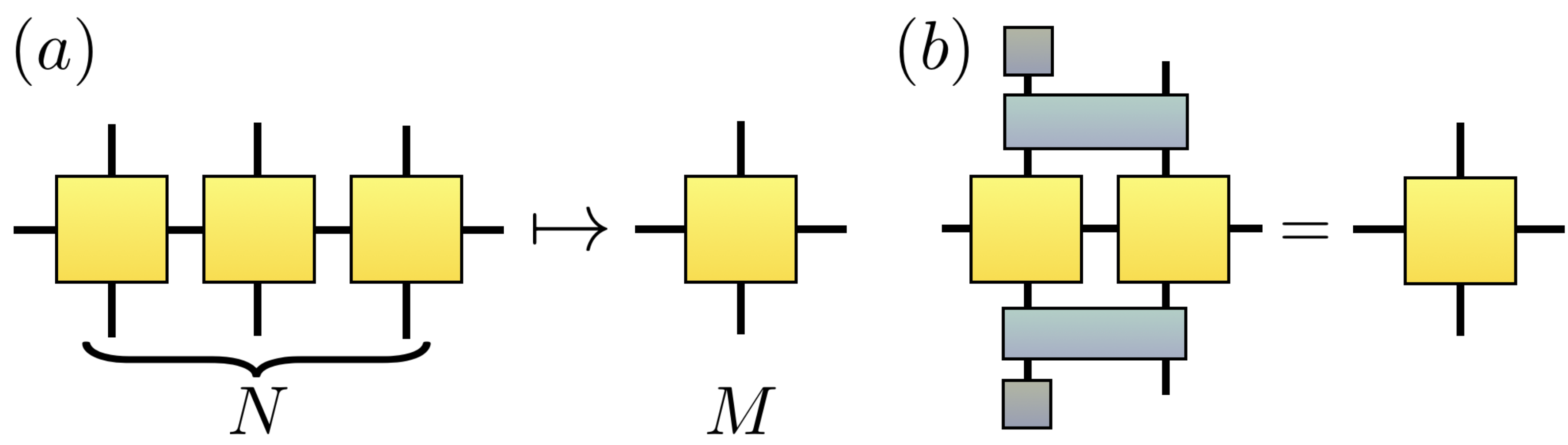}
\caption{Renormalisation schemes of MPO, (a) mapping $N$ pairs to $M$ in an $N\rightarrow M$ scheme. (b) In the $2\rightarrow 1$ recurrence protocol
two neighbouring MPOs are conjugated with a local unitary and subjected to a measurement. Contraction of the tensor network leads to the MPO at the 
subsequent scale.
}
\label{fig:ren}
\end{figure}

\subsection*{Protocols and renormalisation}

An $N \rightarrow M$ iterative protocol for entanglement distillation of i.i.d.\ states will act on $N$ pairs at a time and output $M$ (where $M<N$) pairs. For 
more than $N$ i.i.d.\ pairs, the protocol is performed in parallel on blocks of $N$ pairs. In the MPO setting pairs are not i.i.d.\ 
and so we must specify which pairs are involved in each block of a protocol.   We choose neighbouring pairs so the first $N$ pairs are distilled into $M$ pairs, 
while simultaneously the next $N$ pairs are distilled, and so on.  This natural choice has the practical merit of respecting locality, and has the additional
advantage that the output state is easily shown to again be an MPO of the same bond dimension (see Fig.~\ref{fig:ren}).   
Every iteration of the distillation protocol now acts as a map from an MPO on one scale to the subsequent 
one and reducing the chain length from $L$ to $L {M}/{N}$. After each step, a positive MPO is retained 
\cite{MPOPositive}. Indeed, 
it can be naturally seen as process of \emph{MPO renormalisation}, this being a mixed-state and bi-partite analogue of the renormalisation of matrix product states
discussed in Ref.\ \cite{RenormalisingMPS}. After the $n$-th step, we label the MPO operators $\{ A_n, B_n, C_n, D_n\}$, where the initial raw state provides the $n=0$ matrices.  We prove several results on convergence to entangled states which show the functioning of the schemes; proofs that can also be interpreted as convergence proofs for renormalisation flow of the MPOs. 
Specifically we consider the recurrence protocol and a distillation through error correction using the 5-qubit code.  Both protocols  rely solely on Clifford operations \all{and} use local operations and classical communication with respect to the partition between Alice and Bob. Intuitively, one can say that in many practically relevant settings, the entanglement and correlations between pairs are
being ``renormalised'' into more useful entanglement shared between Alice and Bob, to be employed in subsequent key generation.
 
\subsection*{Recurrence protocol} 
The recurrence protocol is a $2 \rightarrow 1$ iterative protocol 
which uses post-selection.  At every round measurement outcomes are being produced
and we only proceed if certain outcomes are obtained. Here we use a slightly improved version 
\cite{DeutschEkert} of the recurrence scheme \cite{BennettBrassard}. 
Cast into the MPO language, 
the iteration formula \all{after two steps} is 
\begin{eqnarray}
	A_{n+2} &=  (A_{n}^{2}+ B_{n}^2 )^{2}+( C_{n}^2 + D_{n}^2)^2 , \\
	B_{n+2} &=   \{ A_{n} , B_{n} \} ^2 + \{ C_{n} , D_{n} \} ^2   , \\
	C_{n+2} &=   \{ A_{n}^2 + B_{n}^2  ,  C_{n}^2 + D_{n}^2 \} ,  \\
	D_{n+2} &=    \{  \{ A_{n} , B_{n} \} , \{ C_{n} , D_{n} \}  \} ,
\end{eqnarray}
\ec{where curly brackets denote the anti-commutator.} \all{After two steps, the matrices are being re-gauged and rescaled.}  Replacing matrices by \ja{commuting }
scalars recovers the original i.i.d.\ result.  Proving convergence of the initial MPO in the state $\rho$ to the maximally entangled state  $\phi^+$ is achieved by showing that the noise matrices vanish exponentially faster than the $A_{n}$ matrices.  Specifically, an appropriate ratio of norms will be exponentially suppressed.  We define a norm $\|M\|$ in terms of a channel $\mathcal{M}$ Choi-isomorphic to $M$, so that $\|M\|= \| \mathcal{M} \|_{1 \rightarrow 1}$, \all{where} we use the induced ``1-to-1'' 
norm~\cite{bhatia}. 
We introduce the noise contribution of the coefficient matrices $B_{\all{n}}$, $C_{\all{n}}$, and $D_{\all{n}}$ 
as 
\begin{equation}
	\epsilon_n = \max \big(\toNorm{{\cal B}_n}{1}, \toNorm{{\cal C}_n}{1}, \toNorm{{\cal D}_n}{1}\big).
\end{equation}
Due to norm sub-multiplicativity, one finds \ec{initially small $\epsilon_0$ entails $\epsilon_n$ vanishes with $n$}.  However, ensuring $A_{n+2}$ stays large is difficult. \all{To do so, we shall adjust the MPO gauge after two steps,
re-gauging this using a suitable gauge transformation and re-scaling,}
so that ${\cal A}_{n+2}$ \all{is}
trace-preserving and hence $\toNorm{{\cal A}_{n+2}}{1}=1$. 
\all{To quantify how much the gauge transformation 
changes the matrix norm, we rely} on the 
\emph{ergodicity coefficient} of the matrices, 
\begin{equation}
 \tau(\mathcal{M}) = \max_{\Trace{\sigma} =0} \frac{\|\mathcal{M}\kl{\sigma}\|_1}{\|\sigma\|_1},
\end{equation}
which allows a quantification of how rapidly a channel mixes input states into the channel's stationary state. 
We are interested in the ergodicity of $\mathcal{A}_{n}$, for which we use the shorthand $\tau_{n}:= \tau(\mathcal{A}_n)$. We are now ready to state the first 
main result, \ja{which provides} sufficient conditions for entanglement distillation using the recurrence protocol.

\begin{figure}[t!]
\includegraphics[width=0.48\textwidth]{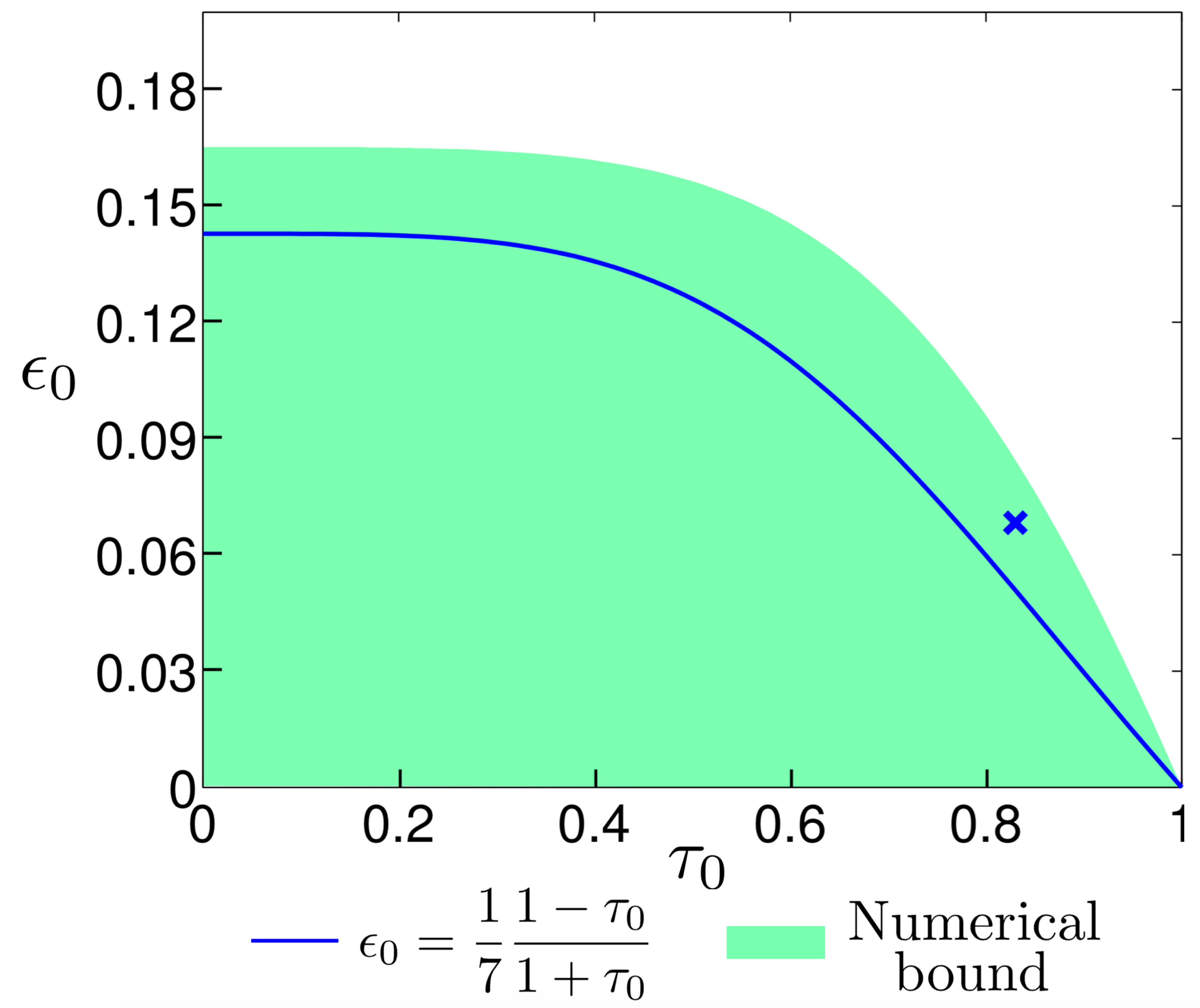}
\caption{Region of convergence for the recurrence protocol. The area under the blue line is the region fulfilling the conditions given in Theorem 1.  The green region is a slightly
improved bound that can be obtained with computer assistance, but for which we have no closed form expression. The blue dot corresponds to our physical example.}
\label{fig:taueps}
\end{figure}

\begin{thm}[Convergence in the recurrence protocol]\label{thm:recurrence}
	Given a translationally invariant \je{Bell} diagonal MPO with coefficient \ja{matrices} $A_0,B_0,C_0,$ and $D_0$, the iterative application of the recurrence protocol leads to convergence to uncorrelated pairs in the \all{maximally} entangled state $\phi_+$ for 
\begin{equation}	
		\epsilon_0 \leq \frac{1}{7} \frac{1-\tau_0^4}{1+\tau_0^4}.
\end{equation}
\end{thm}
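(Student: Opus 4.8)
The plan is to follow two scale-dependent quantities along the renormalisation flow, the relative noise $\epsilon_n$ and the ergodicity coefficient $\tau_n=\tau(\mathcal{A}_n)$, working (w.l.o.g.\ after an initial re-gauge) in the gauge where $\mathcal{A}_n$ is trace-preserving, so that $\|\mathcal{A}_n\|=1$, and re-imposing this gauge after every pair of steps. First I would establish the ``easy'' half of the recursion. Since matrix multiplication of MPO matrices corresponds to composition of the Choi-isomorphic channels, the norm $\|M\|=\|\mathcal{M}\|_{1\to 1}$ is submultiplicative, obeys the triangle inequality, and satisfies $\|\{X,Y\}\|\le 2\|X\|\,\|Y\|$. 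Feeding $\|\mathcal{A}_n\|=1$ and $\|\mathcal{B}_n\|,\|\mathcal{C}_n\|,\|\mathcal{D}_n\|\le\epsilon_n$ into the two-step iteration formulas, one reads off
\begin{equation}
 \|B_{n+2}\|,\ \|C_{n+2}\|,\ \|D_{n+2}\|\ \le\ 4\,\epsilon_n^{2}\kl{1+\epsilon_n^{2}},
\end{equation}
while for the dominant matrix $A_{n+2}=A_n^{4}+E_n$ with $\|E_n\|\le 2\epsilon_n^{2}\kl{1+\order{\epsilon_n^{2}}}$; since $\mathcal{A}_n$ is trace-preserving so is $\mathcal{A}_n^{\circ 4}$, hence $\|A_n^{4}\|=1$, and $A_{n+2}$ is a small perturbation of a trace-preserving channel.

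The crux is the re-gauging. After computing $A_{n+2},\dots,D_{n+2}$ I would apply the gauge transformation and rescaling that turn $\mathcal{A}_{n+2}$ into a trace-preserving map, so that $\|\mathcal{A}_{n+2}\|=1$ afterwards, the same gauge being applied to $\mathcal{B}_{n+2},\mathcal{C}_{n+2},\mathcal{D}_{n+2}$. To bound how much this inflates the noise norms I would use the ergodicity coefficient: two steps of the protocol compose four copies of $\mathcal{A}_n$, and submultiplicativity of the ergodicity coefficient under composition gives $\tau(\mathcal{A}_n^{\circ 4})\le\tau_n^{4}$, so that $\id-\mathcal{A}_n^{\circ 4}$ is invertible on the trace-less subspace with inverse of norm at most $1/(1-\tau_n^{4})$ by a Neumann series; equivalently, the condition number of the re-gauging is controlled by Perron--Frobenius convergence of the four-fold composed transfer map at rate $\tau_n^{4}$, giving a factor at most $(1+\tau_n^{4})/(1-\tau_n^{4})$. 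Propagating this through the similarity transformation acting on the noise matrices and collecting the constants from the previous step yields the closed recursion
\begin{equation}\label{eq:recurrence-recursion}
 \epsilon_{n+2}\ \le\ 7\,\frac{1+\tau_n^{4}}{1-\tau_n^{4}}\,\epsilon_n^{2}.
\end{equation}
In parallel, because the re-gauged $\mathcal{A}_{n+2}$ still differs from $\mathcal{A}_n^{\circ 4}$ only by $\order{\epsilon_n^{2}}$ and $\tau$ is Lipschitz in the $1\to 1$ norm, one also gets $\tau_{n+2}\le\tau_n^{4}+\order{\epsilon_n^{2}}$, so that $\tau_n$ does not drift towards $1$ (and in fact $\tau_n\to 0$ whenever $\tau_0<1$, which is exactly the condition for the right-hand side of the theorem to be positive).

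It remains to close the induction and to turn norm decay into convergence. The hypothesis $\epsilon_0\le\frac{1}{7}\frac{1-\tau_0^{4}}{1+\tau_0^{4}}$ is precisely what makes $\mu_n:=7\frac{1+\tau_n^{4}}{1-\tau_n^{4}}\epsilon_n$ satisfy $\mu_0\le 1$; combined with $\tau_{n+2}\le\tau_n$ (which follows once the induction is running), \eqref{eq:recurrence-recursion} gives $\mu_{n+2}\le\mu_n^{2}\le 1$, and since $\tau_n$ strictly decreases the prefactor $7\frac{1+\tau_n^{4}}{1-\tau_n^{4}}$ drops, forcing $\mu_n\to 0$ and hence $\epsilon_n\to 0$ doubly exponentially. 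Finally, the normalised MPO $\rho_n$ assigns the all-$\phi^+$ pattern weight proportional to $\SmallTrace{A_n^{L}}$, which is $\Theta(1)$ because $\mathcal{A}_n$ is trace-preserving (Perron eigenvalue one), whereas every other Bell pattern contains at least one noise matrix and therefore contributes at most $\order{\mathrm{poly}(L)\,\epsilon_n}$; using Perron--Frobenius estimates for the transfer operator $A_n+B_n+C_n+D_n$ to bound the normalisation, the fidelity of $\rho_n$ with $(\phi^+)^{\otimes L}$ is $1-\order{\epsilon_n}\to 1$, and since only LOCC is used this is the claimed distillation to uncorrelated maximally entangled pairs.

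The main obstacle is the re-gauging step: both $\|\cdot\|$ and $\tau$ are gauge-dependent, so one has to show that the \emph{specific} gauge rendering $\mathcal{A}_{n+2}$ trace-preserving inflates the noise norms by no more than the explicit factor $(1+\tau_n^{4})/(1-\tau_n^{4})$ and simultaneously does not inflate $\tau_{n+2}$; doing both with constants clean enough to land exactly at the threshold $\epsilon_0\le\frac{1}{7}\frac{1-\tau_0^{4}}{1+\tau_0^{4}}$ is the technical heart of the argument. A minor nuisance is the boundary case $\mu_0=1$, handled by noting that $\tau_n$ strictly decreases, so $\mu_n<1$ already after the first pair of steps.
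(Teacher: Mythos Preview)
Your high-level strategy matches the paper's: track the pair $(\epsilon_n,\tau_n)$ in the gauge where $\mathcal{A}_n$ is trace-preserving, re-gauge after every two steps, and maintain the invariant $\frac{1+\tau_n^4}{1-\tau_n^4}\,\epsilon_n\le\tfrac{1}{7}$. The norm estimate $\|B_{n+2}\|,\|C_{n+2}\|,\|D_{n+2}\|\le 4\epsilon_n^2(1+\epsilon_n^2)$ before re-gauging is also correct.

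The genuine gap is in the re-gauging step. You identify the resolvent bound $\|(\id-\mathcal{A}_n^{\circ 4})^{-1}\|\le 1/(1-\tau_n^4)$ on the traceless subspace and then assert that the condition number $\kappa(\mathcal{S}_n)$ of the gauge transform is itself at most $(1+\tau_n^4)/(1-\tau_n^4)$. This is not right, and indeed fails already at $\tau_n=0$, where your bound would give $\kappa=1$ while any nonzero perturbation makes $\kappa>1$. The factor $Z_n:=(1+\tau_n^4)/(1-\tau_n^4)$ is not the condition number; it is the \emph{sensitivity} in an eigenvector perturbation bound. What the paper actually proves (its Theorem on eigenvector perturbation, via the fundamental channel and a Neumann series) is $\|\id-\xi_{n}\|_\infty\le k_n/(1-k_n)$ with $k_n=Z_n\cdot(4\epsilon_n^2+10\epsilon_n^4)$, whence $\kappa(\mathcal{S}_n)\le(1+\Delta_n)/(1-\Delta_n)$ with $\Delta_n=k_n/(1-k_n)$. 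Under the invariant this gives $\kappa(\mathcal{S}_n)=1+O(\epsilon_n)$, so the sharper recursion is $\epsilon_{n+2}\le 5\epsilon_n^2$, not $\epsilon_{n+2}\le 7Z_n\epsilon_n^2$.

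This matters for closing the induction. Your claim $\tau_{n+2}\le\tau_n$ does not hold in general: the paper's ergodicity update has a correction of order $Z_n\epsilon_n^2$ (not $\epsilon_n^2$), and in the regime $\tau_n<\tfrac12$ one can only show $\tau_{n+2}\le\tfrac12$, not monotonicity. Consequently $Z_{n+2}\le Z_n$ can fail, and with your looser recursion $\epsilon_{n+2}\le 7Z_n\epsilon_n^2$ the invariant need not propagate (e.g.\ for $\tau_n<\tfrac12$ one gets $Z_{n+2}\epsilon_{n+2}\lesssim 9\epsilon_n^2$, which for $\epsilon_n$ near $1/7$ exceeds $1/7$). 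The paper closes the argument by a case split on $\tau_n\gtrless\tfrac12$, using the tighter $\epsilon_{n+2}\le 5\epsilon_n^2$ in the small-$\tau$ regime. So the missing ingredient is precisely the eigenvector perturbation theorem that turns the resolvent bound into a bound on $\|\id-\xi\|_\infty$ and hence on $\kappa(\mathcal{S}_n)$; once you have that, the constants line up and the induction (with the case split) goes through.
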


The 
convergence \ja{is} illustrated in Fig.~\ref{fig:taueps}.  The proof \je{is presented} in full length in the appendix. \all{To be self-contained,}
we will sketch it here, 
and provide significant intuition. 

To \all{prove} 
convergence, we need to show that the noise matrices go down exponentially fast, while $A_{n}$ stays large. The first part can be shown by taking into account a double step of the protocol after which all norms of the noise matrices are at least of order $\eee_n^2$. This can be shown via sub-additivity and sub-multiplicativity of the 1-to-1 norm,
\begin{equation}\label{EqEpsilon}
 \je{\eee_{n+2}} \leq 4\kl{1+\eee_n^2}\eee_n^2 \je{.}
\end{equation}
However, to ensure the contribution of the dominant matrix stays large, our approach is to regauge so that $\jg{\mathcal{A}}_{n+2}$ is trace-preserving. A channel $\mathcal{A}_n$ is trace-preserving if and only if \ja{its Hilbert-Schmidt adjoint,} the dual channel $\mathcal{A}_{\all{n}}^{\dagger}$ (the Heisenberg representation of $\mathcal{A}_n$) satisfies $\mathcal{A}_n^{\dagger}(\id)=\id$.  When instead $\mathcal{A}_n^{\dagger}(\xi_n)=\lambda_n \xi_n$ (where $\lambda_n$ is the largest such eigenvalue), then the Perron---Frobenius theorem ensures $\xi_n$ is Hermitian.  \jg{The gauge transformation $\all{S}_n=\sqrt{\xi_\jg{n}}\otimes \sqrt{\xi_\jg{n}}$ and the \all{re-scaling by} $ \lambda_n^{-1}$  will recover trace-preservation, provided 
\je{$\xi_n$} is invertible.}  This transformation potentially increases the norm of the noise matrices.  Using sub-multiplicativity, we find that 
\begin{equation}
	\eee_{n+2} \mapsto \jg{\lambda_{n+2}^{-1}} \kappa\kl{\ja{\MS}_\jg{n+2}}\eee_{n+2}, 
\end{equation}
where $\kappa(\ja{\MS}_n)= \toNorm{\ja{\MS}_n}{1} \toNorm{\ja{\MS}_n^{-1}}{1}$ is the condition number of $\ja{\MS}_n$.  
We wish to show that $\mathcal{A}_{n+\jg{2}}$ stays close to trace-preserving to keep the condition number small. \jg{We show that}
\jg{
\begin{equation}
	\kappa(\ja{\MS}_{n}) \leq \frac{1}{1-2k_{n}},\,\, 
    k_{n}=\frac{1+\tau_{n}^4}{1-\tau_{n}^4}\left(4 \epsilon_{n}^2 +10 \epsilon_{n}^4 \right).
	\end{equation}}
	The proof bears similarities to the perturbation of the steady state of a trace-preserving \emph{quantum Markov chain}
	\cite{Gudder},
which also depends on the ergodicity coefficient of the \jg{transition matrix}.  The basic intuition is that if $\mathcal{A}_n$ is a rapidly mixing channel, with small $\tau_{n}$, then $\mathcal{A}^4_n$ is also rapidly mixing. Before making our gauge transformation $A_{n+2}$ is a sum of ${A}^4_n$ and some small noise matrices.  The more rapidly mixing a channel, the more its eigenstates are robust against the perturbative addition of noise matrices.  We desire the dual eigenstate 
$\xi_{n+2}$ \all{before the gauge} to stay close to $\id$, which we expect provided for rapid mixing (small $\tau_{n}$) 
and low noise (small $\epsilon_{n}$) as we show rigorously in a spirit similar to the ideas of \je{Ref.} \cite{szehr_perturbation_2013}.

\ec{Although ergodicity is not a matrix norm, it has similar properties such as sub-additivity and sub-multiplicativity, from which one can derive an upper bound on $\tau_{n+2}$ in terms of $\tau_{n}$ and $\epsilon_{n}$ so that $\tau_{n+2} \leq \tau_n^4 + f(\epsilon_n, \tau_n)$ where $f$ is an slight correction that vanishes as $\epsilon_n \rightarrow 0$.  This occurs because a double step of the protocol raises the ergodicity coefficient of $A$ to the fourth power, but regauging the MPO results in the adjustment $f$.  The essential point is that we have bounds on the pair $(\epsilon_{n+2}, \tau_{n+2})$ in terms of  $(\epsilon_{n}, \tau_{n})$.  From these it is straightforward to numerically determine the region of convergence to the fixed point $(0, 0)$, which we show in Fig.~(\ref{fig:taueps}).  Also shown in this figure is an analytic curve, for which we show $(\epsilon_{n}, \tau_{n}) \rightarrow (0,0)$ without numerical aid.  Finally, convergence in MPO operators again entails convergence of the density matrix.}


\subsection*{5-qubit protocol} 
For every error correcting code that encodes $k$ qubits into $N$ physical qubits, there exists an iterative $N \rightarrow k$ one-way entanglement distillation protocol~\cite{Bennett}. In these protocols, noise information is extracted by local measurements, but instead of post-selecting when errors are detected we attempt to correct them by determining the smallest weight error consistent with the measurement data.  The advantage over the recurrence protocol is that this protocol is deterministic, and that one-way distillation schemes require much less classical communication.  In particular, we consider one-way entanglement distillation using the 5-qubit code (so a $5 \rightarrow 1$ protocol), which is the smallest known code capable of correcting any single qubit error.  We again find a closed form for the map acting on the coefficient matrices in each iteration, though we omit it here as each expression contains $4^4$ terms. We further introduce the \all{transfer} matrix $E_n=A_n+B_n+C_n+D_
n
$\all{with corresponding channel ${\cal E}_n$}, 
as it has useful properties over the course of the iterations in a 
deterministic protocol, and state the following theorem.

\begin{thm}[Error correction] Given a translationally invariant \je{Bell} diagonal MPO \all{with} 
coefficient \ja{matrices} $A_0,B_0,C_0,$ and $D_0$, the iterative application of the 5-qubit error correcting code leads to convergence 
\all{to} uncorrelated pairs in the \all{maximally entangled} state $\phi_+$ for $\eee_0 \leq {1}/{33}$. $\eee_0$ is defined as above in the gauge where $\mathcal{E}_0$ is trace-preserving.
\end{thm}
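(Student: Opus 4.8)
The plan is to follow the architecture of the proof of Theorem~\ref{thm:recurrence} — drive the noise matrices $B_n,C_n,D_n$ to zero while keeping the dominant matrix under control, and then lift this to trace-norm convergence of the density matrix — while exploiting a structural simplification special to the deterministic setting. Because the $5\to1$ scheme is one-way and performs no post-selection, the transfer matrix $E_n=A_n+B_n+C_n+D_n$ is conserved under coarse-graining: contracting the tensor network over the five input blocks of one round merely concatenates their transfer matrices, so $E_{n+1}=E_n^{5}$ with no residual gauge transformation. Consequently, once the gauge (and overall scale) is fixed so that $\mathcal{E}_0$ is trace-preserving, $\mathcal{E}_n$ remains trace-preserving for every $n$, and no re-gauging is ever needed — which is precisely why no ergodicity coefficient enters the hypothesis, in contrast to Theorem~\ref{thm:recurrence}, where re-gauging after every double step forced control of a condition number through $\tau_n$. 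A convenient by-product of this gauge is that $\mathcal{A}_n,\mathcal{B}_n,\mathcal{C}_n,\mathcal{D}_n$ are completely positive and sum to the trace-preserving map $\mathcal{E}_n$, so that $\mathcal{A}_n\adj(\id)+\mathcal{B}_n\adj(\id)+\mathcal{C}_n\adj(\id)+\mathcal{D}_n\adj(\id)=\id$ with all four summands positive semidefinite; since the $1\to1$ norm of a completely positive map equals the operator norm of its dual evaluated at $\id$, this gives $\toNorm{\mathcal{A}_n}{1}\le1$ and $\epsilon_n\le1$ throughout.

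Next I would establish the quadratic contraction of the noise, which rests on the fact that the five-qubit code corrects every single-qubit error. Writing out the explicit renormalisation map — the $4^4$-term polynomial recursion for $A_{n+1},B_{n+1},C_{n+1},D_{n+1}$, which is closed on these four matrices because the final Pauli twirl commutes with the Clifford protocol — one checks that every monomial occurring in $B_{n+1}$, $C_{n+1}$, or $D_{n+1}$ contains at least two factors drawn from $\{B_n,C_n,D_n\}$: a configuration of the five input pairs with at most one noise pair is decoded correctly and routed back into the $A$-channel. Bounding each such monomial by $\epsilon_n^{k}\le\epsilon_n^{2}$ (for $k\ge2$ noise factors, using $\toNorm{\mathcal{A}_n}{1}\le1$ and $\epsilon_n\le1$), invoking sub-additivity and sub-multiplicativity of the $1\to1$ norm, and summing the combinatorial count of these monomials, yields
\begin{equation}
	\epsilon_{n+1}\le 33\,\epsilon_n^{2},
\end{equation}
the constant being exactly that count. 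Hence $\epsilon_0\le1/33$ forces $\epsilon_n\to0$, in fact doubly exponentially.

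The final step passes from vanishing noise to the maximally entangled fixed point. Since $\mathcal{E}_n$ is trace-preserving and $\mathcal{B}_n,\mathcal{C}_n,\mathcal{D}_n\to0$, the map $\mathcal{A}_n$ converges to a trace-preserving channel and the renormalised MPO becomes supported on the $\phi_+$ component alone. Quantitatively, the Bell weights of an output pair, $p_j^{(n)}=\Trace{M_j^{(n)} E_n^{\ell-1}}$ with $\ell$ the current chain length and $\Trace{E_n^{\ell}}=1$ by normalisation, satisfy $p_2^{(n)},p_3^{(n)},p_4^{(n)}=\order{\epsilon_n}$ and $p_1^{(n)}\to1$; as in the recurrence case, convergence of the MPO coefficient matrices then entails trace-norm convergence of the density matrix to a tensor product of maximally entangled, uncorrelated pairs.

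I expect the main obstacle to be organising the explicit five-qubit renormalisation map so that the two quantitative facts it must supply become transparent: (i) that every noise monomial carries at least two noise factors — which is the statement ``the code corrects all weight-one errors'' transcribed to the level of the MPO coefficient matrices — and (ii) the sharp combinatorial constant $33$. A secondary point is to confirm the conservation law $E_{n+1}=E_n^{5}$ (equivalently, that the deterministic protocol keeps the transfer channel trace-preserving) directly from the tensor-network contraction, and to dispose of the invertibility and gauge-fixing of $\mathcal{E}_0$; these I expect to be routine once the recursion has been written down.
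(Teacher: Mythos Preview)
Your architecture matches the paper's proof exactly: exploit $E_{n+1}=E_n^{5}$ to keep $\mathcal{E}_n$ trace-preserving without re-gauging, use $\toNorm{\mathcal{A}_n}{1}\le\toNorm{\mathcal{E}_n}{1}=1$, and invoke single-error correction to conclude that every monomial in $B_{n+1},C_{n+1},D_{n+1}$ carries at least two noise factors. The structural picture is right.

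The gap is your derivation of the constant $33$. You claim that $33$ is ``exactly the count'' of noise monomials, but each of $B_{n+1},C_{n+1},D_{n+1}$ is a sum of $4^{4}=256$ degree-five words in $A_n,B_n,C_n,D_n$. If you bound every one crudely by $\epsilon_n^{k}\le\epsilon_n^{2}$ using only $\epsilon_n\le 1$, you obtain $\epsilon_{n+1}\le 256\,\epsilon_n^{2}$ and hence a threshold near $1/256$, not $1/33$. The paper instead grades the monomials by the number of noise factors, getting
\[
\epsilon_{n+1}\;\le\;30\,\epsilon_n^{2}+70\,\epsilon_n^{3}+90\,\epsilon_n^{4}+66\,\epsilon_n^{5}
\]
(note $30+70+90+66=256$), and only then uses the a priori bound $\epsilon_n\le 1/30$ to collapse the higher-order terms into $33\,\epsilon_n^{2}$. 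So the constant $33$ is not a monomial count but the output of a graded estimate combined with a smallness assumption; your shortcut loses a factor of roughly eight in the threshold.

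A second, smaller slip: trace-preservation of $\mathcal{E}_n$ does not give $\Trace{E_n^{\ell}}=1$. The matrix $E_n$ has leading eigenvalue $1$, but $\Trace{E_n^{\ell}}=1+O\bigl(\tau(\mathcal{E}_n)^{\ell}\bigr)$ picks up contributions from the subleading spectrum. The paper handles this via Lemma~\ref{lem:TraceBound} and Lemma~\ref{lem:NormBound}, bounding the infidelity by $\epsilon_n$ times a factor $(1+d^{5/2}\tau(\mathcal{E}_0)^{L_0-1})/(1-d^{5/2}\tau(\mathcal{E}_0)^{L_0})$ that is controlled by the \emph{initial} ergodicity and chain length (because $\tau(\mathcal{E}_n)^{L_n}\le\tau(\mathcal{E}_0)^{L_0}$). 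Your final step needs this correction, though the conclusion survives.
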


Again, the full proof \ja{is provided }
in the appendix. Similar to the post-selective case, we show that with a growing number of iterations $n$, the contribution of the dominant matrix $A_n$ \all{grows} exponentially faster than the contribution of the noise matrices. In the deterministic case, we can use our gauge freedom 
to make ${\cal E}_n$ \all{trace-preserving}. 
Since we do not post-select, we do not need to renormalise \all{in} every round.  In the 5-qubit error correction code, the 
\all{transfer} matrix is always mapped to its fifth power, $E_{n+1}=E_n^5$. Thus, if we initially choose a gauge where the corresponding map $\mathcal{E}_0$ is trace-preserving, the \all{transfer} matrix will keep this property over the course of the iteration. Using combinatorical arguments, we then prove that for suitably small $\eee_0$, $\eee_n$ converges to zero, entailing convergence in fidelity of the physical state. 

\subsection*{Numerical studies and physical Hamiltonians} 
To complement the rigorous and analytical results presented above, 
we have also performed numerical studies on randomly drawn matrices. These results \je{demonstrate}
that a state within the distillable region is guaranteed to converge, but \je{several} 
steps in our proof make worst case assumptions and so we may also find convergence for many MPO states outside this region. In \all{a} worst case  scenario, correlations are always \all{pernicious}.
However, our numerics indicate that in many strongly correlated chains the correlations can also be beneficial and enable distillation at noise levels well above the rigorous threshold as well as in cases where the protocols do not converge for uncorrelated i.i.d.\ states. 

\ja{It should be clear that many physical Hamiltonians reflecting interactions with memory channels in sequential preparation procedures are covered by our results.}
To present a \je{paradigmatic} 
example, we compare the performance of the recurrence protocol on (i) perfect memoryless i.i.d.\ distributions of $\phi^+$ states with (ii) sequentially prepared states with implemented memory. For the preparation of the memory, we prepare uncorrelated Werner states, 
which then undergo a unitary interaction $U(t,J) = \exp(itH)$ on Bob's side, where 
\je{the Hamiltonian} $H$ is given by
\begin{equation} 
H(J)=J\kl{X\otimes X+Y\otimes Y+Z\otimes Z} + Z\otimes \id + \id\otimes Z.
\end{equation}
We further implement a de-phasing channel to make the memory forgetful. We will discuss this example in further detail in the appendix. We identify parameter regions where the correlated states remarkably outperform the uncorrelated i.i.d.\ states regarding the speed of convergence. Also, distillation is possible for a larger range of initial fidelities. We are able to use the unwanted correlations from the memory to enhance the distillation, see also Fig.\ \ref{fig:taueps}. 


\subsection*{Perspectives} 
In this work, we have introduced a framework of renormalising entanglement in order to achieve entanglement distillation in the presence of natural correlations.
We have proven that protocols known to work for i.i.d.\ pairs above a threshold fidelity also give rise to feasible entanglement distillation.
We have identified criteria to ensure convergence of correlated pairs described by an MPO to a number of independent maximally entangled pure states. On intuitive grounds,
one might expect that if the MPO is only weakly correlated between the pairs and the reduced density matrix of a single pair is sufficiently close to a Bell pair, the distillation protocols should behave 
similarly to the i.i.d.\ case. Indeed, convergence can be proven for threshold fidelities and conditions on the correlation between the pairs. The programme initiated here shows that correlations
are not necessarily a disadvantage, and one does not have to aim at de-correlating pairs or resetting preparation procedures, steps that will take time and will in practice
lead to further entanglement deterioration. This work shows that such correlations can be largely renormalised away, with no modification to the schemes applied. We hope that this
work triggers further studies on entanglement distillation and repeater protocols in the presence of 
realistic memory effects, \je{as well as of further studies of renormalising matrix-product operators.}
  
\subsection*{Acknowledgements} 
We acknowledge funding from the BMBF (Q.com), the DFG (SPP ), the EU (SIQS, RAQUEL, AQuS), and the ERC (TAQ)
and comments from Martin Kliesch.

\section{5-qubit protocol}

\label{secErrCorr} 
\je{We now} prove a threshold for successful purification for the 5-qubit error correcting code used as an entanglement distillation scheme.
Instead of giving a lower bound on the fidelity, we give an upper bound on the infidelity, or the probability of measuring 
$\phi^-$, $\psi^+$ or $\phi^-$ for a pair of qubits.  

\begin{proof}
For $\phi^-$ we have  
\begin{equation}
 \je{p(\phi^-)}= \mathrm{Tr}[ \rho \left( \je{\phi^-} \otimes \id \right) ] = \frac{\Trace{B E^{L-1}}}{\Trace{E^L}} ,
\end{equation}
where the $\id$ acts on all other pairs of qubits.  Similar expressions hold for $\psi^+$ and $\psi^-$ by replacing $B$ with $C$ or $D$, respectively.
We will find an upper bound of this expression in terms of the channel norm of the associated observable transfer matrices $B$, $C$, and $D$. We assume locally purifiable
MPO's which allows us to make use of an isomorphism between the MPO matrices and completely positive maps. Specifically,
we define a norm $\|M\|$ in terms of a channel $\mathcal{M}$ Choi-isomorphic to $M$, so that $\|M\|= \| \mathcal{M} \|_{1 \rightarrow 1}$ and we use the induced
``1-to-1'' Schatten norm.  See Ref.~\cite{bhatia} and App.~\ref{Norms} for more on norms. We call $\EE$ the channel Choi-isomorphic to the transfer matrix $E$.
Our Lemma \ref{lem:NormBound} of the appendices proves that, assuming $\mathcal{E}$ is trace-preserving, we have
\begin{equation}
\label{FidBound}
\mathrm{Tr}[ \rho \left( \je{\phi^-} \otimes \id \right) ]  \leq \toNorm{\mathcal{B}}{1} \frac{1+d^{5/2}\tec{\EE}^{L-1}}{1-d^{5/2}\tec{\EE}^{L}}.
\end{equation}
Proving convergence of the initial MPO in the state $\rho$ to the maximally entangled state  $\phi^+$ is achieved by showing that the noise matrices vanish exponentially faster than the $A_{n}$ matrices. 

One step of the protocol takes 5 pairs as input and returns one pair as output, correcting all zero and one qubit errors and thus reducing the error probability to at least quadratic order in $\epsilon_n$.
As discussed before, the code applies without post-selection, which means that the state does not have to be renormalised. As the length of the chain $L_n$ is divided by five in every step, the new transfer matrix $E_{n+1}$ is simply the fifth power of the previous transfer matrix $E_{n}$. So if we start with a trace-preserving transfer channel, it remains trace-preserving. Thus, the renormalisation factor, derived in Lemma \ref{lem:TraceBound} is constant,
\begin{equation}
E_{n+1}=E_n^5 \Longrightarrow E_n=E_0^{5^n} 
\end{equation}
and
\begin{equation}
 L_{n+1} = \frac{L_n}{5} \Longrightarrow L_n=\frac{L_0}{5^n}.
\end{equation}
By sub-multiplicativity of the ergodicity, we have
\begin{equation}
\label{EnEvolve}
 \tec{\EEn}^L_n \leq \tec{\EE_0}^{5^n\frac{L_0}{5^n}} = \tec{\EE_0}^{L_0}.
\end{equation}
This already gives us an upper bound for the normalization for every step in the iteration just from the initial length of the chain and ergodicity.
The second step is to upper bound $\toNorm{\BBn}{1}$, $\toNorm{\CCn}{1}$ and $\toNorm{\DDn}{1}$.
We introduced, as a measure of the noise, the quantity
\begin{equation}
	 \eee_n=\max\skl{\toNorm{\BBn}{1},\toNorm{\CCn}{1},\toNorm{\DDn}{1}}.
  \end{equation}
  defined in a gauge and scaling, where $\EEn$ is trace-preserving. This gauge and scaling stays constant over the iteration. 
The complete update rules can be derived following the methodology of Sec.~\ref{secRecc}. We obtain a closed form for the map acting on the coefficient matrices in each iteration, though we omit it here as each expression contains $4^4$ terms and is not very insightful. Rather, we present just the leading terms here,
 \begin{eqnarray}
 A_{n+2}&=A_n^5 + A_n^4B_n + A_n^4C_n + A_n^4D_n + \dots,   \\
 B_{n+2}&=A_n^3B_n^2 + A_n^3C_nD_n + A_n^3D_nC_n + \dots,  \\
 C_{n+2}&=A_n^3B_nC_n + A_n^3C_nB_n + A_n^3D_n^2 + \dots,   \\
 D_{n+2}&=A_n^3B_nD_n + A_n^3C_n^2 + A_n^3D_nB_n + \dots  
\end{eqnarray}
Furthermore, we do not need the explicit iteration rule but only the norm of it. Each term is a product consisting of $A_n$ and some noise matrices. We can upper bound
the channel norm by using sub-multiplicativity and sub-additivity and the definition of our noise measure,
\jg{
\begin{align}
&\max	\left\{\toNorm{{\cal B}_{n+1}}{1},\toNorm{{\cal C}_{n+1}}{1},\toNorm{{\cal D}_{n+1}}{1}  \right\}\\
&~\leq 30\eee_n^2\toNorm{{\cal A}_n}{1}^3 + 70\eee_n^3\toNorm{{\cal A}_n}{1}^2+90\eee_n^4\toNorm{{\cal A}_n}{1}^2+66\eee_n^5.\nonumber
\end{align}
}
\ec{The noise terms are at least of quadratic order.} The norm of $A_n$ can be easily upper bounded using the properties of channels and channel norms 
$\toNorm{\AAn}{1} \leq \toNorm{\EEn}{1} = 1$. The ensuing iteration for the noise measure is thus
\begin{equation}
    \eee_{n+1} \leq 30\eee_n^2+70\eee_n^3+90\eee_n^4+66\eee_n^5.
\end{equation}
Clearly, for errors to reduce $\eee_n$ must be at least smaller than $1/30$. In this regime, we have
\begin{align}
 \eee_{n+1} &\leq 30\eee_n^2+\frac{7}{3}\eee_n^2+\tel{10}\eee_n^2+\frac{11}{4500}\eee_n^2 
 = \kl{30+\frac{7}{3}+\tel{10}+\frac{11}{4500}}\eee_n^2 \leq 33\eee_n^2. 
\end{align}
So we can be sure the iteration converges if $\eee_0 \leq \tel{33} \approx 0.0303$. Numerically, 
we find that $\eee_0 \leq 0.031$.
The speed of the convergence is double \je{exponential} in the number of rounds,
\begin{equation}
 \eee_n \leq \tel{33}\kl{33 \eee_0}^{2^n}.
\end{equation}
Making use of Eq.~(\ref{FidBound}) and Eq.~(\ref{EnEvolve}), we find that infidelity is \je{bounded} by
\begin{equation} 
 \pophim{} \leq \tel{33}\kl{33 \eee_0}^{2^n} \frac{1+d^{5/2}\tec{\EE_0}^{L_0-1}}{1-d^{5/2}\tec{\EE_0}^{L_0}}.
\end{equation}
Since $|\tau(\mathcal{E}_0)| \leq 1$, we have in generic cases that in the limit of an infinite chain
\begin{equation} 
 \pophim{} \leq \tel{33}\kl{33 \eee_0}^{2^n},
\end{equation}
and it converges whenever $\toNorm{\je{\cal B}_0}{1}$, $\toNorm{\je{\cal C}_0}{1}$, $\toNorm{\je{\cal D}_0}{1} \leq \tel{33}
$, \je{which ends the proof.}
\end{proof}

It has hence been shown that we can formulate a threshold depending on the 1-to-1-norm of the three noise channels. This concludes our proof of a threshold for the five-qubit error correcting code.

\section{Recurrence protocol}

\je{We now turn to the post-selected recurrence protocol.}
It turns out that the proof is \je{significantly} more sophisticated since we can not rely on the trace-preserving property of the transfer channel. This requires a refined approach that involves deriving a perturbation bound for the left Perron vector of a quantum channel, \je{using ideas of Markov chain mixing.} 
 
\subsection{Computational vs. Bell basis}
\label{secRecc}

\begin{figure}[b!]
\includegraphics[width=0.35\textwidth]{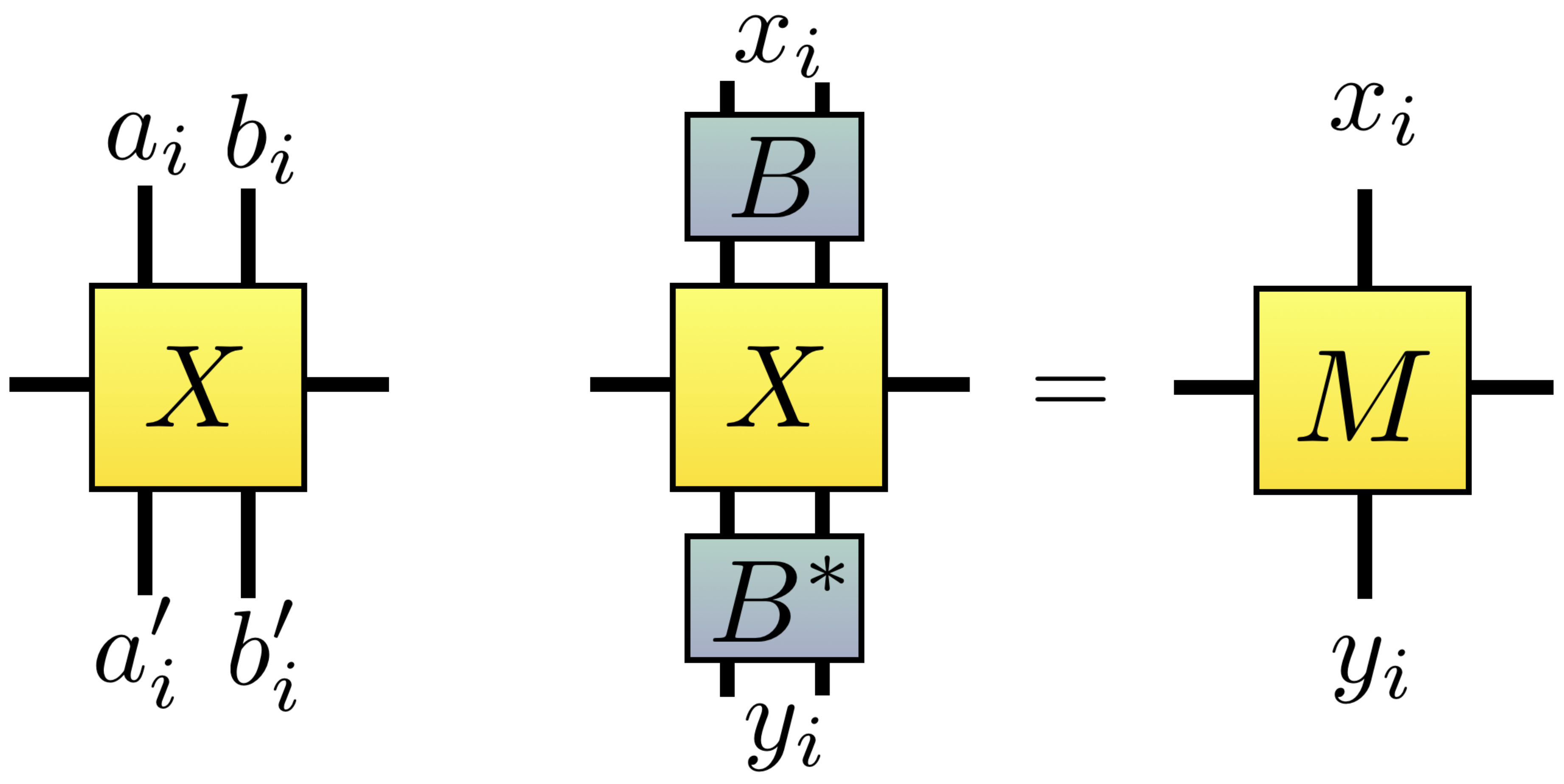}
\caption{A bi-partite matrix product operator (MPO), expressed in the Bell basis.}
\label{fig:mpo2}
\end{figure}

The recurrence protocol can be broken up into two steps performed locally by Alice and also Bob.  So Alice (and also Bob) divide their qubits up into adjacent pairs within the MPO chain.  For each pair the isometry $K=\vert 0 \rangle \langle 0,0 \vert + \vert 1 \rangle \langle 1,1 \vert$ is applied.  This can be implemented using a local CNOT, measurement, post-selection and disposing of the measured qubit.  The second step consists of a Hadamard rotation, again by both Alice and Bob.  It is easy to verify 
that after the first step we again get an MPO with the same bond dimension (see Fig.\,\ref{fig:mpo}) such that
\begin{equation}
	X^{x,y} \mapsto  (X^{x, y})^2.
\end{equation}
where the $X$ is the MPO operator in elements of the computational basis of 2 qubits, and so $x,y \in \{(0,0),(1,0),(0,1),(1,1) \}$. Indeed, this is direct analogous to the i.i.d case where the density matrix elements map as $\rho_{x,y} \mapsto \rho_{x,y}^2 $, though of course only in the computational basis. The phase noise is dealt with by the second step.  The bilateral Hadamard operation effectively swaps bit and phase flip noise so that both are 
dealt with. In the computational basis the bilateral Hadamard operation is unwieldily so we switch to the Bell basis (see Fig.~\ref{fig:mpo2}).  For the Bell basis we use the following shorthand
\begin{align}
	A = M^{1,1}= X^{(0,0),(0,0)}+X^{(1,1),(0,0)}+X^{(0,0),(1,1)}+X^{(0,0),(1,1)} , \\
    B = M^{2,2}= X^{(0,0),(0,0)}-X^{(1,1),(0,0)}-X^{(0,0),(1,1)}+X^{(0,0),(1,1)} , \\
    C = M^{3,3}= X^{(0,1),(0,1)}+X^{(0,1),(1,0)}+X^{(1,0),(0,1)}+X^{(1,0),(1,0)} , \\
    D = M^{4,4}= X^{(0,1),(0,1)}-X^{(0,1),(1,0)}-X^{(1,0),(0,1)}+X^{(1,0),(1,0)} ,
\end{align}
which only defines the MPO operators for the diagonal elements in the Bell-basis, but we will see that they are decoupled from other elements and so it is sufficient to consider these alone. In this basis the recurrence procedure implements
\begin{align}
	A_{n} &\mapsto A_{n}^2 + B_{n}^2 \mapsto   A_{n}^2 + B_{n}^2   = A_{n+1} , \\
	B_{n} &\mapsto  \{ A_{n} , B_{n} \}   \mapsto   C_{n}^2 + D_{n}^2  = B_{n+1} ,\\
	C_{n} &\mapsto  C_{n}^2 + D_{n}^2\mapsto  \{ A_{n} , B_{n} \}     = C_{n+1}, \\
	D_{n} &\mapsto  \{ C_{n} , D_{n} \}   \mapsto   \{ C_{n} , D_{n} \}   =  D_{n+1},
\end{align}
where each of the two steps are shown, and a subscript $n$ is introduced to denote the MPO after $n$ iterations.  The brackets $\{ \cdot , \cdot \}$ denote the anti-commutator.  For matrices that are simply scalars, where $B_{n}, C_{n} , D_{n} \sim \epsilon_{n}$ and $A_{n}\sim 1$, we see $B_{n+1}, D_{n+1} \sim 2 \epsilon^2$ but $C_{n+1} \sim O(\epsilon)$.  This occurs because in a single round only one type of noise is decreased, and so to see an overall $\epsilon^2$ error reduction we must consider two rounds of iteration
\begin{eqnarray}
	A_{n+2} &=  (A_{n}^{2}+ B_{n}^2 )^{2}+( C_{n}^2 + D_{n}^2)^2 , \\
	B_{n+2} &=   \{ A_{n} , B_{n} \} ^2 + \{ C_{n} , D_{n} \} ^2   , \\
	C_{n+2} &=   \{ A_{n}^2 + B_{n}^2  ,  C_{n}^2 + D_{n}^2 \} ,  \\
	D_{n+2} &=    \{  \{ A_{n} , B_{n} \} , \{ C_{n} , D_{n} \}  \} .
	\label{eqRec}
\end{eqnarray}
Now treating the matrices as scalars, we see  $B, C, D$ all go from size $O(\epsilon)$ to $O(\epsilon^2)$ or smaller.  This is the intuition from the i.i.d. case and we next turn to making this rigorous by quantifying this size with appropriate matrix norms. 
\subsection{Detailed outline of convergence proof for recurrence protocol}
\label{detailed}
In the main text we presented a sketch of the proof of \je{Theorem} \ ~\ref{thm:recurrence}.  We first recap this sketch, filling in some details, and clearly stating required lemmas.  Numerous technical tools relating to norms and the ergodicity coefficient are covered in Sec.~\ref{Norms}.  
The first step is showing the iterative formulae relating MPO operators after
$n+1$ distillation rounds as a polynomial of MPO operators after $n$ rounds, as
introduced in Eq.~\ref{eqRec}. \jg{We introduce a prime in the outcomes of Eq.~(\ref{eqRec}), $A_{n+2}',B_{n+2}',C_{n+2}',D_{n+2}'$, and we introduce the noise measure  
\begin{equation}
\epsilon_n' = \max \left(\toNorm{\mathcal{B}_n'}{1}, \toNorm{\mathcal{C}_n'}{1}, \toNorm{\mathcal{D}_n'}{1}\right). 
\end{equation}
In a second step, since the channel $\mathcal{A}'_{n+2}$ is not trace-preserving, we make use of the following lemma.}
 
\begin{lm}[\je{Conjugation of quantum channels}]
	Let $\FF$ be a completely-positive \je{map}, with largest eigenvalue $\lambda$ and $\FF^{\dagger}(\xi)=\lambda \xi$,
	\je{$\xi=\xi^\dagger$}. 
	Let $\mathcal{S}$ be a channel mapping $\mathcal{S}(\rho)=\xi^{1/2}\rho \xi^{1/2}$.  If $\xi$ is invertible, $\mathcal{S}^{-1}$ exists and $\mathcal{S} \circ \FF \circ \mathcal{S}^{-1}$ is trace-preserving.
\end{lm}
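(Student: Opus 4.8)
The plan is to verify the three assertions of the lemma in turn: that $\mathcal{S}^{-1}$ exists, that $\xi$ (and hence $\xi^{1/2}$) can be taken Hermitian and positive so that the expression $\xi^{1/2}\rho\,\xi^{1/2}$ makes sense, and that $\mathcal{S}\circ\FF\circ\mathcal{S}^{-1}$ is trace-preserving. The existence of $\mathcal{S}^{-1}$ is immediate: since $\xi$ is invertible, $\mathcal{S}^{-1}(\rho)=\xi^{-1/2}\rho\,\xi^{-1/2}$ is well-defined and is a two-sided inverse of $\mathcal{S}$ as a linear map on operators. For the Hermiticity of $\xi$: because $\FF$ is completely positive, its Heisenberg-dual $\FF^\dagger$ is completely positive and maps the cone of positive semidefinite operators into itself; the Perron--Frobenius theorem for cone-preserving maps then guarantees that the eigenvector $\xi$ associated with the spectral-radius eigenvalue $\lambda$ can be chosen positive semidefinite, and under the assumed invertibility it is in fact positive definite, so $\xi^{1/2}$ is the unique positive definite square root.

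First I would record what trace-preservation of a channel $\mathcal{G}$ means dually: $\mathcal{G}$ is trace-preserving if and only if $\mathcal{G}^\dagger(\id)=\id$, where $\mathcal{G}^\dagger$ is the Hilbert--Schmidt adjoint. So it suffices to show $(\mathcal{S}\circ\FF\circ\mathcal{S}^{-1})^\dagger(\id)=\id$. Using $(\mathcal{S}\circ\FF\circ\mathcal{S}^{-1})^\dagger=(\mathcal{S}^{-1})^\dagger\circ\FF^\dagger\circ\mathcal{S}^\dagger$, and noting that $\mathcal{S}(\rho)=\xi^{1/2}\rho\,\xi^{1/2}$ is self-adjoint with respect to the Hilbert--Schmidt inner product (so $\mathcal{S}^\dagger=\mathcal{S}$ and likewise $(\mathcal{S}^{-1})^\dagger=\mathcal{S}^{-1}$, since $\xi^{1/2}$ is Hermitian), we get
\begin{equation}
(\mathcal{S}\circ\FF\circ\mathcal{S}^{-1})^\dagger(\id)=\mathcal{S}^{-1}\bigl(\FF^\dagger(\mathcal{S}(\id))\bigr)=\xi^{-1/2}\,\FF^\dagger(\xi)\,\xi^{-1/2}=\xi^{-1/2}(\lambda\xi)\xi^{-1/2}=\lambda\,\id.
\end{equation}
This shows $\mathcal{S}\circ\FF\circ\mathcal{S}^{-1}$ is \emph{trace-rescaling} by a factor $\lambda$; to land exactly on a trace-preserving map one divides by $\lambda$, which is the re-scaling step already described in the main text. (If the lemma is read with the convention that the scalar $\lambda$ is absorbed, or that $\FF$ already has spectral radius normalised to one, the last line reads $\id$ directly; I would state this normalisation explicitly.)

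The only genuine subtlety — the "hard part," though it is mild here — is the appeal to Perron--Frobenius: one must know that the top eigenvalue of a completely positive map is attained by a positive eigenvector, and that invertibility of that eigenvector is exactly the hypothesis needed to make the similarity transformation well-defined (it fails precisely when $\FF$ is not primitive / the Perron eigenvector is rank-deficient, reflecting a non-full-rank fixed point of the associated channel). I would cite the standard Perron--Frobenius statement for positive maps on matrix cones and then the verification above is a one-line dual computation. Everything else — existence of $\mathcal{S}^{-1}$, self-adjointness of $\mathcal{S}$ under Hermiticity of $\xi^{1/2}$, and the duality characterisation of trace-preservation — is routine.
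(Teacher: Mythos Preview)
Your proposal is correct and follows essentially the same approach as the paper: compute the Heisenberg dual of the conjugated map on $\id$ and verify it returns (a scalar multiple of) $\id$. The paper's proof in Sec.~\ref{secTP} performs exactly this computation, writing $\EE^\dagger(\id)\mapsto \xi^{-1/2}\EE^\dagger(\xi)\xi^{-1/2}=\id$, though less carefully --- it silently drops the factor $\lambda$ that you correctly flag, and it does not spell out the Perron--Frobenius step guaranteeing $\xi\geq 0$. Your observation that the lemma as stated only yields a trace-\emph{rescaling} map and that the $\lambda^{-1}$ factor must be inserted separately is spot on; the paper handles this rescaling immediately after invoking the lemma rather than inside it.
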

We give further steps of the proof in Sec.~\ref{secTP}.  Applying the lemma to $\mathcal{A}'_{n+2}$, gives a gauge transform we herein label $\mathcal{S}_n$. More specifically, we use
\jg{
\begin{align}
	A_{n+2} = \lambda_{n+2}^{-1} S_{n+2}A_{n+2}'S_{n+2}^{-1},\\ 
	B_{n+2} = \lambda_{n+2}^{-1} S_{n+2}B_{n+2}'S_{n+2}^{-1}, \\
	C_{n+2} = \lambda_{n+2}^{-1} S_{n+2}C_{n+2}'S_{n+2}^{-1}, \\
	D_{n+2} = \lambda_{n+2}^{-1} S_{n+2}D_{n+2}'S_{n+2}^{-1}. 
\end{align}
The matrix norms are not gauge invariant and m}uch of the proof centres around obtaining an upper bound on 
\begin{equation}
\label{eqNoiseAgain}
    \epsilon_n = \max \left(\toNorm{\mathcal{B}_n}{1}, \toNorm{\mathcal{C}_n}{1}, \toNorm{\mathcal{D}_n}{1}\right),
\end{equation}
in this gauge where $\mathcal{A}_{n}$ is trace-preserving. \jg{From sub-multiplicativity, we know that $\epsilon_{n+2} = \kappa(\mathcal{X}_{n+2}) \lambda_{n+2}^{-1} \epsilon_{n+2}'$ and as} argued \je{in the main text}, we obtain a recursive relation
\begin{equation}
\label{eqEPIandCond}
    \epsilon_{n+2} \leq 4\kappa( \mathcal{S}_{n}) \kl{1+\eee_n^2}\eee_n^2
\end{equation}
where 
\begin{equation}
	\kappa(\mathcal{S}_{n})=\toNorm{ \MS_n }{1} \,  \toNorm{ \MS_n^{-1} }{1}
\end{equation}
is the condition number of $\MS_n$ \jg{and where we omitted $\lambda$, as $\lambda \geq 1$}. A substantial amount of our technical work goes into proving the following lemma.

\begin{lm}[\je{Condition number}]
\label{LemKappa}
The condition number of $\kappa(\mathcal{X}_{n})$ where $\mathcal{X}_{n}$ is our gauge change, is upper bounded by   
\begin{equation}
\label{lemCond}
    \kappa(\MS_n) \leq (1-2k_{n})^{-1}  \with k_n=\frac{1+\tau_n^4}{1-\tau_n^4}\left(4 \epsilon_n^2 +10 \epsilon_n^4 \right) .
\end{equation}
\end{lm}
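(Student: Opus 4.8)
\textbf{Proof proposal for Lemma \ref{LemKappa} (Condition number).}

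The plan is to bound $\kappa(\MS_n) = \toNorm{\MS_n}{1}\,\toNorm{\MS_n^{-1}}{1}$ by controlling how far the gauge $\MS_n$ departs from the identity, and this in turn is governed by how far the dual Perron eigenvector $\xi_{n+2}$ (of the pre-gauge channel $\mathcal{A}_{n+2}'$) departs from $\id$. Since $\MS_n$ acts as $\rho\mapsto\xi^{1/2}\rho\,\xi^{1/2}$ with $\xi=\xi_{n+2}$, its $1\to1$ norm and that of its inverse are controlled by $\infNorm{\xi}$ and $\infNorm{\xi^{-1}}$, so writing $\xi = \id + \Delta$ it suffices to show $\infNorm{\Delta}\le k_n$; then $\toNorm{\MS_n}{1}\le 1+k_n$, $\toNorm{\MS_n^{-1}}{1}\le(1-k_n)^{-1}$, and $\kappa(\MS_n)\le(1+k_n)/(1-k_n)\le(1-2k_n)^{-1}$, which is the claimed bound. (The half-power $\xi^{1/2}$ only improves the constants, so I would carry the cruder $\xi$-level estimate.)

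The core is therefore a perturbation estimate for the dual Perron eigenvector, in the spirit of Szehr--Wolf \cite{szehr_perturbation_2013} and the Markov-chain steady-state perturbation bounds \cite{Gudder}. First I would write, from Eq.~\ref{eqRec}, $\mathcal{A}_{n+2}' = (\mathcal{A}_n^2+\mathcal{B}_n^2)^2 + (\mathcal{C}_n^2+\mathcal{D}_n^2)^2 = \mathcal{A}_n^4 + \mathcal{R}$, where the remainder $\mathcal{R}$ collects all terms containing at least one noise factor; by sub-additivity and sub-multiplicativity of the $1\to1$ norm, together with $\toNorm{\mathcal{A}_n}{1}=1$ (the gauge at step $n$ makes $\mathcal{A}_n$ trace-preserving) and the definition of $\epsilon_n$, one gets $\toNorm{\mathcal{R}}{1}\le 4\epsilon_n^2 + O(\epsilon_n^4)$ — precisely the $4\epsilon_n^2+10\epsilon_n^4$ shape appearing in $k_n$. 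Next, since $\mathcal{A}_n$ is trace-preserving its dual fixes $\id$, and the key stability input is that $\mathcal{A}_n^4$ has ergodicity coefficient $\tau(\mathcal{A}_n^4)\le \tau_n^4$ by sub-multiplicativity of $\tau$; a channel with small ergodicity coefficient has a spectral gap, so the Perron eigenprojector of $(\mathcal{A}_n^4)^\dagger$ is well-separated. I would then run a standard first-order perturbation argument: the perturbed dual eigenvector satisfies $\xi_{n+2} = \id + (\text{reduced resolvent})\,\mathcal{R}^\dagger(\id) + \cdots$, and the reduced resolvent of a trace-preserving channel restricted to the traceless subspace is bounded in $1\to1$ (equivalently the relevant dual norm) by a geometric series $\sum_{j\ge1}\tau(\mathcal{A}_n^4)^j \le \tau_n^4/(1-\tau_n^4)$, which after accounting for the $(1+\tau_n^4)$-type factor that arises when passing between the channel and its dual and when summing both the $j=0$ contribution and the tail, yields the factor $(1+\tau_n^4)/(1-\tau_n^4)$ multiplying $\toNorm{\mathcal{R}}{1}$. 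Combining, $\infNorm{\xi_{n+2}-\id}\le \frac{1+\tau_n^4}{1-\tau_n^4}(4\epsilon_n^2+10\epsilon_n^4) = k_n$, and the bound on $\kappa$ follows as above.

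The main obstacle I anticipate is making the perturbation step fully rigorous rather than merely first-order: one must show the Neumann/resolvent series actually converges (which needs $k_n<1$, i.e.\ a smallness assumption on $\epsilon_n$ relative to the gap $1-\tau_n^4$, consistent with the hypothesis of Theorem \ref{thm:recurrence}), control the higher-order tail so it does not spoil the clean $4\epsilon_n^2+10\epsilon_n^4$ coefficients, and verify that $\xi_{n+2}$ is invertible (so that $\MS_n^{-1}$ exists and the Conjugation Lemma applies) — invertibility is exactly what $\infNorm{\xi_{n+2}-\id}<1$ delivers. A secondary technical point is handling the non-normality of quantum channels: the ergodicity coefficient $\tau$, not the spectral gap of a diagonalisation, is the right quantity, and one must check that $\tau$ genuinely bounds the norm of the reduced resolvent on the traceless subspace; this is where the cited Markov-chain-mixing perspective \cite{Gudder,szehr_perturbation_2013} does the work, and I would lean on the norm/ergodicity toolkit assembled in Sec.~\ref{Norms} to discharge it.
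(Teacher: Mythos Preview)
Your strategy is exactly the paper's: bound $\kappa(\MS_n)$ through $\infNorm{\xi}\,\infNorm{\xi^{-1}}$, control $\infNorm{\xi-\id}$ by a Perron-eigenvector perturbation argument for $\mathcal{A}_{n+2}'=\mathcal{A}_n^4+\PP_n$ against the trace-preserving $\mathcal{A}_n^4$, and use that the ergodicity coefficient governs the resolvent on the traceless subspace. Two concrete points, however, are off.

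First, your norm estimate for the remainder is too large by a factor of two. With $\toNorm{\mathcal{A}_n}{1}=1$ one gets $\toNorm{\PP_n}{1}\le 2\epsilon_n^2+5\epsilon_n^4$, not $4\epsilon_n^2+10\epsilon_n^4$. The extra factor of two in $k_n$ does not come from $\PP_n$ itself but from a step you have omitted: the eigenvector perturbation bound (the paper's Theorem~\ref{TracePreservingDeviation}) requires both channels to have spectral radius~$1$, so one must compare $\mathcal{A}_n^4$ with $\FF_2:=\lambda^{-1}(\mathcal{A}_n^4+\PP_n)$, and $\toNorm{\mathcal{A}_n^4-\FF_2}{1}\le |1-\lambda^{-1}|+\lambda^{-1}\toNorm{\PP_n}{1}\le 2\toNorm{\PP_n}{1}$, using $1\le\lambda\le 1+\toNorm{\PP_n}{1}$. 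Without this rescaling the fixed-point identity $\FF_2^\dagger(\xi)=\xi$ fails and your resolvent argument does not close.

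Second, the rigorous perturbation bound is not $\infNorm{\xi-\id}\le k_n$ but $\infNorm{\xi-\id}\le k_n/(1-k_n)$: the fundamental-channel argument (Lemmas~\ref{lemTracePreservingDeviation} and~\ref{ZToTau}) produces the geometric-series factor $1/(1-k_n)$ that resums all orders, exactly the ``obstacle'' you flagged. Combining this with $\kappa(\MS_n)\le(1+\Delta_n)/(1-\Delta_n)$ at $\Delta_n=k_n/(1-k_n)$ gives $(1-2k_n)^{-1}$ on the nose, rather than via the extra weakening $(1+k_n)/(1-k_n)\le(1-2k_n)^{-1}$ you propose. So the route is right, but the bookkeeping of the factor of two and of the higher-order tail needs to be done as above to arrive at the stated $k_n$.
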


\je{This we prove in Sec.~\ref{secCondLem}}\jg{, introducing} \je{Theorem} \ \ref{TracePreservingDeviation}\jg{, an eigenvalue perturbation theorem, which we prove in Sec.~\ref{secEVThm}.} Notice that the \jg{bound on the} condition number depends on both $\tau_{n}$ and $\epsilon_n$, and in turn $\epsilon_{n+2}$ is now upper bounded by a function of only $\epsilon_n$ and $\tau_n$.
We already understand the iterative behaviour of $\epsilon_n$, but not of $\tau_{n}$.  In Sec.~\ref{secErgodUpdate} we show that
\begin{lm}[\je{Upper bound to ergodicity coefficient}]
\label{TauUpdate}
The ergodicity coefficient $\mathcal{A}_{n}$ obeys
\begin{align}
    \tau_{n+2} \leq \tau_n^4\kl{1+\Delta_n} + \frac{1+\Delta_n}{1-\Delta_n}\kl{\Delta_n + 2\epsilon_n^2 + 5\epsilon_n^4}
\end{align}
where 
\begin{equation}
    \Delta_n = k_n/(1 -k_n)
\end{equation}
with $k_n$ as in Lemma~\ref{lemCond}.
\end{lm}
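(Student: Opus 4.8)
The plan is to propagate the ergodicity coefficient through a single double step of the protocol using only the three properties of $\tau$ collected in Sec.~\ref{Norms} --- sub-additivity, sub-multiplicativity for compositions of trace-preserving maps, and domination by the $1\!\to\!1$ norm --- together with the control of the gauge transformation already secured in Lemma~\ref{LemKappa}. First I would split the double-step map (\ref{eqRec}) into a dominant trace-preserving part and a genuinely quadratic remainder: in the gauge where $\mathcal{A}_n$ is trace-preserving one has $A_{n+2}' = A_n^4 + R_n$ with $R_n = A_n^2 B_n^2 + B_n^2 A_n^2 + B_n^4 + (C_n^2 + D_n^2)^2$ collecting exactly the terms with at least two noise factors. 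Passing to the channels isomorphic to these matrices, $\mathcal{A}_{n+2}' = \mathcal{A}_n^{\circ 4} + \mathcal{R}_n$, where $\mathcal{A}_n^{\circ 4}$ is trace-preserving, so that $\toNorm{\mathcal{A}_n^{\circ 4}}{1} = 1$ and $\tau(\mathcal{A}_n^{\circ 4}) \leq \tau_n^4$ by sub-multiplicativity of $\tau$, while sub-additivity and sub-multiplicativity of the $1\!\to\!1$ norm together with $\toNorm{\mathcal{A}_n}{1} = 1$ and the definition of $\epsilon_n$ give $\toNorm{\mathcal{R}_n}{1} \leq 2\epsilon_n^2 + 5\epsilon_n^4$.

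Next, since $\mathcal{A}_{n+2} = \lambda_{n+2}^{-1}\,\mathcal{S}_{n+2}\circ\mathcal{A}_{n+2}'\circ\mathcal{S}_{n+2}^{-1}$ with $\lambda_{n+2} \geq 1$, sub-additivity of $\tau$ splits the task into a conjugated dominant term $\tau(\mathcal{S}_{n+2}\circ\mathcal{A}_n^{\circ 4}\circ\mathcal{S}_{n+2}^{-1})$ and a conjugated remainder $\tau(\mathcal{S}_{n+2}\circ\mathcal{R}_n\circ\mathcal{S}_{n+2}^{-1})$. The remainder is immediate: bounding $\tau$ by the $1\!\to\!1$ norm and using sub-multiplicativity, it is at most $\kappa(\mathcal{S}_{n+2})\,\toNorm{\mathcal{R}_n}{1} \leq \frac{1+\Delta_n}{1-\Delta_n}\kl{2\epsilon_n^2 + 5\epsilon_n^4}$, where I use the elementary identity $\kappa(\mathcal{S}_{n+2}) = (1 - 2k_n)^{-1} = \frac{1+\Delta_n}{1-\Delta_n}$ supplied by Lemma~\ref{LemKappa}.

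For the dominant term I would write $\mathcal{S}_{n+2} = \mathrm{id} + \Delta\mathcal{S}$ and $\mathcal{S}_{n+2}^{-1} = \mathrm{id} + \Delta\mathcal{S}'$, expand the conjugation into four terms, and bound each $\tau$ separately, crucially exploiting that $\mathcal{A}_n^{\circ 4}$ is trace-preserving: it maps traceless operators to traceless operators, so $\tau(\Delta\mathcal{S}\circ\mathcal{A}_n^{\circ 4}) \leq \toNorm{\Delta\mathcal{S}}{1}\tau_n^4$, and it has unit $1\!\to\!1$ norm, so $\tau(\mathcal{A}_n^{\circ 4}\circ\Delta\mathcal{S}') \leq \toNorm{\Delta\mathcal{S}'}{1}$ and $\tau(\Delta\mathcal{S}\circ\mathcal{A}_n^{\circ 4}\circ\Delta\mathcal{S}') \leq \toNorm{\Delta\mathcal{S}}{1}\toNorm{\Delta\mathcal{S}'}{1}$. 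Collecting, $\tau(\mathcal{S}_{n+2}\circ\mathcal{A}_n^{\circ 4}\circ\mathcal{S}_{n+2}^{-1}) \leq \kl{1 + \toNorm{\Delta\mathcal{S}}{1}}\kl{\tau_n^4 + \toNorm{\Delta\mathcal{S}'}{1}}$. Feeding in the individual operator norms $\toNorm{\mathcal{S}_{n+2}}{1} = 1 + \toNorm{\Delta\mathcal{S}}{1} \leq 1 + \Delta_n$ and $\toNorm{\mathcal{S}_{n+2}^{-1}}{1} = 1 + \toNorm{\Delta\mathcal{S}'}{1} \leq (1-\Delta_n)^{-1}$ --- which follow from the tight spectral control of $\xi_{n+2}$ around $1$ obtained in the proof of Lemma~\ref{LemKappa} through Theorem~\ref{TracePreservingDeviation}, after a normalisation of $\xi_{n+2}$ is fixed and propagated through its operator square root --- gives $(1+\Delta_n)\tau_n^4 + \frac{1+\Delta_n}{1-\Delta_n}\Delta_n$ for the dominant term. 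Adding the remainder and using $\lambda_{n+2}^{-1}\leq 1$ then assembles exactly the stated inequality.

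The main obstacle is the bookkeeping around the gauge transformation. Because $\tau$ is not conjugation-invariant and $\mathcal{S}_{n+2}$ does not preserve the traceless subspace, the naive bound $\tau(\mathcal{S}\mathcal{M}\mathcal{S}^{-1}) \leq \kappa(\mathcal{S})\tau(\mathcal{M})$ is simply unavailable; moreover, dressing the whole fourth-power term $\tau_n^4$ with the full condition number $\kappa(\mathcal{S}_{n+2})$ would be too lossy to close the recursion against the desired fixed point $(0,0)$. This is precisely why the fourth-power term must be isolated and only amplified by the single factor $(1+\Delta_n)$, forcing the first-order expansion in $\Delta\mathcal{S}$ above rather than a crude conjugation estimate. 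The secondary technical hurdle is extracting from the eigenvector perturbation theorem the bounds on $\toNorm{\mathcal{S}_{n+2}}{1}$ and $\toNorm{\mathcal{S}_{n+2}^{-1}}{1}$ \emph{separately}, not merely on their product, which requires being explicit about how $\xi_{n+2}$ is normalised.
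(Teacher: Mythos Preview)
Your overall strategy is essentially the paper's: split $\mathcal{A}_{n+2}'=\mathcal{A}_n^{\circ 4}+\mathcal{R}_n$, control the conjugation $\mathcal{S}_{n+2}\circ(\cdot)\circ\mathcal{S}_{n+2}^{-1}$ by expanding around the identity, and bound each piece with sub-additivity of $\tau$, sub-multiplicativity, and $\tau\le\toNorm{\cdot}{1}$. The paper organises the expansion slightly differently (it invokes the reverse triangle inequality $|\tau(\FF_1)-\tau(\FF_2)|\le\tau(\FF_1-\FF_2)$ of Lemma~\ref{BoundingTau} and then decomposes $\mathcal{A}_n^{\circ 4}-\mathcal{S}(\mathcal{A}_n^{\circ 4}+\mathcal{P}_n)\mathcal{S}^{-1}$ into three pieces), but your four-term expansion $(\mathrm{id}+\Delta\mathcal{S})\mathcal{A}_n^{\circ 4}(\mathrm{id}+\Delta\mathcal{S}')$ is a cosmetic reorganisation of the same idea.

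There is, however, a genuine gap in the final step. You write $\toNorm{\mathcal{S}_{n+2}}{1}=1+\toNorm{\Delta\mathcal{S}}{1}$ and read off $\toNorm{\Delta\mathcal{S}}{1}\le\Delta_n$; likewise for $\Delta\mathcal{S}'$. The equality is simply false --- the triangle inequality gives only $\toNorm{\mathcal{S}}{1}\le 1+\toNorm{\Delta\mathcal{S}}{1}$, which is the wrong direction for extracting a bound on $\toNorm{\Delta\mathcal{S}}{1}$ from $\toNorm{\mathcal{S}}{1}\le 1+\Delta_n$. What one actually needs is a direct bound on $\toNorm{\mathrm{id}-\mathcal{S}}{1}$, and the paper establishes (Sec.~\ref{secGaugeBounds}, Eqs.~\eqref{eq:SS1}--\eqref{eq:SS2}) only
\[
\toNorm{\mathrm{id}-\mathcal{S}_{n+2}}{1}\le 3\Delta_n,\qquad
\toNorm{\mathrm{id}-\mathcal{S}_{n+2}^{-1}}{1}\le \frac{3\Delta_n}{1-\Delta_n},
\]
by writing $(\mathrm{id}-\mathcal{S})(\rho)=(\id-\xi^{1/2})\rho+\rho(\id-\xi^{1/2})-(\id-\xi^{1/2})\rho(\id-\xi^{1/2})$ and using H{\"o}lder. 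Feeding these into either your decomposition or the paper's yields the bound with $3\Delta_n$ in place of your $\Delta_n$, namely
\[
\tau_{n+2}\le \tau_n^4\kl{1+3\Delta_n}+\frac{1+\Delta_n}{1-\Delta_n}\kl{3\Delta_n+2\epsilon_n^2+5\epsilon_n^4},
\]
which is exactly what the paper derives in Sec.~\ref{secErgodUpdate} and uses downstream as Eq.~\eqref{eq:TauUpdate}. (The statement of the lemma in the paper omits the factors of $3$; the proof and all subsequent estimates carry them.) Your hand-wave about ``propagating through the operator square root'' does not produce the sharper constant $\Delta_n$: passing from $\infNorm{\xi-\id}\le\Delta_n$ to $\infNorm{\xi^{1/2}-\id}$ and then to the $1\!\to\!1$ norm of the two-sided conjugation unavoidably picks up the factor~$2$ from the anticommutator term.
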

Notice that $\tau_{n+2}$ depends only on $\tau_n$ and $\epsilon_n$.  Therefore, these two lemmas provide a pair of coupled equations that provide $(\epsilon_{n+2}, \tau_{n+2} )$ as a function of $(\epsilon_{n}, \tau_{n} )$.  Therefore, it is straightforward to numerically study the initial conditions $(\epsilon_{0}, \tau_{0} )$ flow towards the desired point $(0,0)$, and this is presented in Fig.~\ref{fig:taueps}.  These results show that a state within this distillable region is guaranteed to converge, but many steps in our argument make worst case assumptions and so we may also find convergence for many actual MPO states with $(\epsilon_{0}, \tau_{0} )$ outside this region.

The \je{precise shape of the} distillable region is difficult to characterise analytically. 
However, we can analytically prove convergence on a slightly smaller region.
\begin{lm}[\je{Proof of convergence}]
\label{AnalyticConverge}
Given the iterative formulae for upper bounds on $\epsilon_n$ and $\tau_n$, we know that $(\epsilon_n, \tau_n )\rightarrow (0,0)$ whenever 
\begin{equation}
\epsilon_0 \leq \min \left( \frac{1}{7} \frac{1-\tau_0^4}{1+\tau_0^4} \right).
\end{equation}
\end{lm}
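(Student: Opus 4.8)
The plan is to exhibit a region of the $(\epsilon,\tau)$–plane that is invariant under the two–step renormalisation map $(\epsilon_n,\tau_n)\mapsto(\epsilon_{n+2},\tau_{n+2})$ built from Eq.~(\ref{eqEPIandCond}) and Lemmas~\ref{LemKappa} and~\ref{TauUpdate}, to show that $\epsilon_n$ is doubly exponentially suppressed on that region, and to deduce that the suppression of $\epsilon_n$ in turn forces $\tau_n\to0$. It is convenient to write $\mu_n:=(1+\tau_n^4)/(1-\tau_n^4)\ge1$, so that the hypothesis reads exactly $\mu_0\epsilon_0\le1/7$, and to take as candidate region $\mathcal{R}=\{(\epsilon,\tau):0\le\tau<1,\ \epsilon\le\tfrac17\tfrac{1-\tau^4}{1+\tau^4}\}$, i.e.\ $\mu\epsilon\le1/7$.

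\textbf{A priori bounds and forward invariance.} Assume $(\epsilon_n,\tau_n)\in\mathcal{R}$. Since $\mu_n\ge1$ one has $\epsilon_n\le1/7$ and $\mu_n\epsilon_n^2\le\epsilon_n/7\le1/49$, so the formula $k_n=\mu_n(4\epsilon_n^2+10\epsilon_n^4)$ of Lemma~\ref{LemKappa} gives a uniform bound $k_n\le206/2401<0.086$, hence $\kappa(\mathcal{S}_n)\le(1-2k_n)^{-1}<1.21$ and $\Delta_n=k_n/(1-k_n)<0.094$; keeping the $\tau$–dependence alive, $\mu_n\epsilon_n^2\le\epsilon_n/7$ together with $\epsilon_n\le\tfrac17\tfrac{1-\tau_n^4}{1+\tau_n^4}$ also yields the crucial refined estimates $\Delta_n\lesssim(1-\tau_n^4)$ and $\epsilon_n^2\le(1-\tau_n^4)/49$ with small explicit constants. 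These are the only properties of the recursion I would use. Feeding them into Eq.~(\ref{eqEPIandCond}) gives $\epsilon_{n+2}\le4\kappa(\mathcal{S}_n)(1+\epsilon_n^2)\epsilon_n^2\le5\epsilon_n^2\le5/(49\mu_n^2)$, and feeding them into Lemma~\ref{TauUpdate}, using $\tau_n^4\le1$ to collapse every $\Delta_n$– and $\epsilon_n^2$–term against $(1-\tau_n^4)$, gives a one–dimensional bound of the shape $\tau_{n+2}\le a\,\tau_n^4+b$ with $a+b\le1$ and $b<1/3$ (one can take $a\approx0.74$, $b\approx0.26$). In particular $\tau_{n+2}<1$, so $\mu_{n+2}$ is defined and bounded by $(1+r^4)/(1-r^4)$ with $r=a\tau_n^4+b$, and forward invariance of $\mathcal{R}$ reduces to the single scalar inequality $(1+r^4)/(1-r^4)\le1.4\,\mu_n^2$. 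Since both $\mu_n$ and $r$ are explicit functions of $\tau_n^4\in[0,1)$, this is routine to verify, the only cases worth examining being $\tau_n\to0$ (where it reads $\approx1\le1.4$) and $\tau_n\to1$ (where the right–hand side diverges like $\delta^{-2}$ while the left–hand side only like $\delta^{-1}$, with $\delta=1-\tau_n$). Induction then gives $(\epsilon_n,\tau_n)\in\mathcal{R}$ for all $n$.

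\textbf{Convergence.} On $\mathcal{R}$ we have $\epsilon_{n+2}\le5\epsilon_n^2$, and since $5\epsilon_0\le5/7<1$ the substitution $\eta_n=5\epsilon_n$ turns this into $\eta_{n+2}\le\eta_n^2$, so $\epsilon_n\to0$ doubly exponentially. For $\tau_n$ I would combine two facts: the a priori bound $\tau_{n+2}\le a\tau_n^4+b$ above shows the sequence is eventually trapped in an interval $[0,\tau_\ast]$ with $\tau_\ast<1$ the attracting fixed point of $t\mapsto at^4+b$, so $\limsup_n\tau_n<1$; and the sharper estimate $\tau_{n+2}\le\tau_n^4+C\epsilon_n$ for an explicit $C$ (again from the $\tau$–aware bounds), together with $\epsilon_n\to0$, gives after passing to $\limsup$ that $\tau^\ast:=\limsup_n\tau_n$ satisfies $\tau^\ast\le(\tau^\ast)^4$; for $\tau^\ast\in[0,1)$ this forces $\tau^\ast=0$. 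Hence $(\epsilon_n,\tau_n)\to(0,0)$, and convergence of the MPO operators entails convergence in fidelity of the density matrix, as in the main text. (Any additional clause hidden in the $\min$ of the statement only shrinks $\mathcal{R}$ and leaves the argument unchanged.)

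\textbf{Main obstacle.} The delicate step is the forward invariance of $\mathcal{R}$: the crude uniform bounds $k_n<0.086$, $\Delta_n<0.094$ are by themselves insufficient, because the product $\mu_{n+2}\epsilon_{n+2}$ is largest precisely in the regime where $\epsilon_n$ is small and $\tau_n$ is near $1$ (or the reverse), and closing the induction genuinely requires the coupled constraint $\epsilon_n\le\tfrac17\tfrac{1-\tau_n^4}{1+\tau_n^4}$, so that the correction terms of Lemma~\ref{TauUpdate} are damped by the same factor $(1-\tau_n^4)$ that controls $\mu_n$. The particular constant $1/7$ is what creates enough slack for the final scalar inequality to hold uniformly on $\tau_n\in[0,1)$; extracting the optimal constant would require replacing the hand estimate of this step by the computer–assisted analysis that produces the larger green region in Fig.~\ref{fig:taueps}.
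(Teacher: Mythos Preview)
Your proposal is correct and follows the same strategy as the paper: both take the region $\mathcal{R}=\{\mu\epsilon\le 1/7\}$ (the paper writes $Z_n\epsilon_n\le 1/7$ with $Z_n=\mu_n$), establish the same intermediate bounds $\Delta_n\le\tfrac{7}{10}\epsilon_n$ and $\epsilon_{n+2}\le 5\epsilon_n^2$, and then prove forward invariance of $\mathcal{R}$.

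The only tactical divergence is in the forward-invariance step. The paper does not package $\tau_{n+2}$ into an affine bound $a\tau_n^4+b$ and a single scalar inequality $(1+r^4)/(1-r^4)\le 1.4\mu_n^2$; instead it substitutes $\epsilon_n\le\tfrac17\tfrac{1-\tau_n^4}{1+\tau_n^4}$ into the $\tau$--update to obtain
\[
\tau_{n+2}\le \tau_n^4+\tfrac{3}{10}\tau_n^4\,\tfrac{1-\tau_n^4}{1+\tau_n^4}+\tfrac{3}{7}\,\tfrac{1-\tau_n^4}{1+\tau_n^4},
\]
and then splits into the two cases $\tau_n\ge\tfrac12$ (where one checks $\tau_{n+2}\le\tau_n$, so both coordinates decrease and invariance is immediate) and $\tau_n<\tfrac12$ (where $\tau_{n+2}\le\tfrac12$, giving a uniform bound $Z_{n+2}\le\tfrac{17}{15}$ so that $Z_{n+2}\epsilon_{n+2}\le\tfrac{17}{15}\cdot 5\epsilon_n^2\le\tfrac17$). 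This case split is quicker and avoids having to produce and verify your constants $a\approx0.74$, $b\approx0.26$, which you assert but do not derive; note in particular that with the version of Lemma~\ref{TauUpdate} actually proved in the paper (which carries factors $3\Delta_n$ rather than $\Delta_n$), the value at $\tau_n=0$ already exceeds $1/3$, so your claimed $b<1/3$ would fail there. Conversely, your $\limsup$ argument for $\tau_n\to0$ from $\tau_{n+2}\le\tau_n^4+C\epsilon_n$ is a genuine addition: the paper's own write-up stops after showing $\epsilon_n\to0$ and does not spell out this last step, even though the lemma as stated claims $(\epsilon_n,\tau_n)\to(0,0)$.
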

We give a full proof in Sec.~\ref{secAnaCon}. We have shown that the MPO noise matrices vanish, while the $\mathcal{A}_n$ remains trace preserving and so of constant norm. \jg{As $\epsilon_n \rightarrow 0$ we have $\AAn \rightarrow \EEn$ and we can conclude convergence in fidelity as follows from Lemma \ \ref{lem:NormBound}}.
\section{Mathematical concepts and proofs}

\je{In this section, we present the detailed proofs of the statements on the recurrence protocol
of the main text. 
Since this requires some preparation, we will first define and introduce all tools made use of.}

\subsection{Definitions and properties of norms}
\label{Norms}
\subsubsection{Properties of norms}

As norms for the states we use the Schatten $p$-norm, 
\begin{equation}
	 \norm{\rho}{p}=\kl{\sum_i \sigma_i(\rho)^p }^{1/p} ,
\end{equation}	
where $\sigma_i(\rho)$ is the $i$th singular value of $\rho$ \je{\cite{bhatia}}. For the maps we will use norms \jg{induced by} Schatten norms 
\begin{equation}
	\norm{\FF}{a\ra b}=\maxs{}\frac{\norm{(\FF{\sigma)}}{a}}{\norm{\sigma}{b}}.
\end{equation}
We can now use, that
\begin{equation}
	\norm{\sigma_1}{p}=\maxs{2}\frac{\Trace{\sigma_2\sigma_1}}{\norm{\sigma_2}{q}} \with \tel{p}+\tel{q}=1 \jg{\leftrightarrow} q=\frac{p}{p-1},
\end{equation}
and deduce the following lemma.

\begin{lm}[\je{Norm of the dual channel}]
\label{AdjointNorm}
Let $\FF$ be a map. Then, for all $r,s>1$
 \begin{equation}
 \norm{\FF}{r\ra s} = \norm{\FF\adj}{\frac{s}{s-1}\ra\frac{r}{r-1}}.
 \end{equation}
\end{lm}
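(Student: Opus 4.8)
The plan is to deduce the identity directly from the variational (trace-duality) characterisation of the Schatten norms, combined with the defining property of the adjoint map $\FF\adj$. The only input needed is the dual formula already quoted above: for any operator $X$ and any exponent $p>1$ with conjugate exponent $p'=p/(p-1)$, one has $\norm{X}{p}=\max_{\sigma\neq 0}\Trace{\sigma X}/\norm{\sigma}{p'}$, with the maximum attained since all spaces here are finite-dimensional. This is just the Hölder/Riesz duality of the Schatten classes \cite{bhatia}; no positivity or Hermiticity of $\FF$, $\sigma$, or $\omega$ is assumed.

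First I would unfold \emph{both} Schatten norms appearing in $\norm{\FF}{r\ra s}$ using this formula. Writing $s'=s/(s-1)$ and $r'=r/(r-1)$,
\begin{equation}
	\norm{\FF}{r\ra s}
	= \max_{\sigma\neq 0}\frac{\norm{\FF(\sigma)}{s}}{\norm{\sigma}{r}}
	= \max_{\sigma\neq 0}\;\max_{\omega\neq 0}\;\frac{\Trace{\omega\,\FF(\sigma)}}{\norm{\omega}{s'}\,\norm{\sigma}{r}}.
\end{equation}
Next I would transfer $\FF$ onto $\omega$ via $\Trace{\omega\,\FF(\sigma)}=\Trace{\FF\adj(\omega)\,\sigma}$, which is exactly the defining relation of the adjoint with respect to the trace pairing (valid for all operators, the usual conjugation being absorbed by the phase-invariance of the norms). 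Since the two maxima run over independent variables and are attained, they may be interchanged, giving
\begin{equation}
	\norm{\FF}{r\ra s}
	= \max_{\omega\neq 0}\frac{1}{\norm{\omega}{s'}}\;\max_{\sigma\neq 0}\frac{\Trace{\FF\adj(\omega)\,\sigma}}{\norm{\sigma}{r}}
	= \max_{\omega\neq 0}\frac{\norm{\FF\adj(\omega)}{r'}}{\norm{\omega}{s'}},
\end{equation}
where the last equality applies the dual formula once more, now to $\norm{\FF\adj(\omega)}{r'}$ with conjugate exponent $r$. The right-hand side is by definition $\norm{\FF\adj}{s'\ra r'}=\norm{\FF\adj}{\frac{s}{s-1}\ra\frac{r}{r-1}}$, which is the claimed identity.

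The proof is essentially bookkeeping, and the only points warranting care — the closest thing to an obstacle — are that $p\mapsto p/(p-1)$ is an involution on $(1,\infty)$, so the primed exponents pair up correctly when the duality formula is invoked twice, and that the double supremum is attained and may be reordered, which is immediate here since the Schatten unit spheres are compact in finite dimension. It is worth noting that the statement transparently contains the familiar limiting case $\norm{\FF}{1\ra 1}=\norm{\FF\adj}{\infty\ra\infty}$ (trace norm versus operator norm), obtained formally as $r,s\downarrow 1$.
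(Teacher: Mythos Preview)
Your argument is correct and follows exactly the same route as the paper: unfold the outer Schatten norm via trace duality, pass $\FF$ to $\FF\adj$ through the trace pairing, then fold back using duality once more. One minor slip: you have inverted the paper's convention, writing $\norm{\FF}{r\ra s}=\max_\sigma \norm{\FF(\sigma)}{s}/\norm{\sigma}{r}$, whereas the paper defines $\norm{\FF}{a\ra b}=\max_\sigma \norm{\FF(\sigma)}{a}/\norm{\sigma}{b}$; since the claimed identity is symmetric under swapping $r\leftrightarrow s$ this does not affect correctness, but you should align with the paper's notation.
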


Here we use $\FF^{\dagger}$ to denote the \je{dual channel, .i.e., the} unique channel such that for all $A, B$ we have 
\begin{equation}
	\tr( A \FF(B))=\tr(  \FF^{\dagger}(A)B) .
\end{equation}

\begin{proof} This statement follows immediately from the definition of the norms. We have
 \begin{align}
\norm{\FF}{r\ra s} &= \maxs{1}\frac{\norm{\FF(\sigma_1)}{r}}{\norm{\sigma}{s}} 
= \maxs{1}\maxs{2} \frac{\Trace{\sigma_2 \FF (\sigma_1)}}{\norm{\sigma_2}{\frac{r}{r-1}}{\norm{\sigma_1}{s}}} \\
&= \maxs{1}\maxs{2} \frac{\Trace{\sigma_1 \FF\adj (\sigma_2)}}{\norm{\sigma_2}{\frac{r}{r-1}}{\norm{\sigma_1}{s}}}\nonumber \\
&= \maxs{2} \frac{ \norm{\FF\adj(\sigma_2)}{\frac{s}{s-1}}}{\norm{\sigma_2}{\frac{r}{r-1}}}\nonumber \\
&= \norm{\FF\adj}{\frac{s}{s-1}\ra\frac{r}{r-1}},\nonumber
\end{align}
which is the statement to be proven.
\end{proof}
Using Lemma \ref{AdjointNorm}, we state that $ \toNorm{\FF}{1}=\toNorm{\FF\adj}{\infty}$. 

\begin{lm}[\je{Restriction to positive operators}]
 Let $\FF$ be a positive map, \je{then}
 \begin{equation}
  \max_{\sigma}{\normfrac{\FF}{\sigma}}=\max_{\sigma\geq0}{\normfrac{\FF}{\sigma}}.
 \end{equation}
\end{lm}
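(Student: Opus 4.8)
The plan is to prove the two inequalities separately, with all of the content on one side. The bound $\max_{\sigma\geq 0}\normfrac{\FF}{\sigma}\leq \max_{\sigma}\normfrac{\FF}{\sigma}$ is immediate, as the left-hand maximisation runs over a subset of the operators considered on the right. So everything reduces to the reverse estimate.

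For that, I would first note that a positive map is Hermiticity-preserving, so the maximum of $\normfrac{\FF}{\sigma}$ on the right is attained on a Hermitian $\sigma$. Fixing such a $\sigma$, I would pass to its Jordan decomposition into positive and negative parts, $\sigma=\sigma_+-\sigma_-$ with $\sigma_\pm\geq 0$ supported on orthogonal subspaces. Then the trace norm is exactly additive, $\oneNorm{\sigma}=\oneNorm{\sigma_+}+\oneNorm{\sigma_-}$, while sub-additivity of $\oneNorm{\cdot}$ gives $\oneNorm{\FF(\sigma)}\leq\oneNorm{\FF(\sigma_+)}+\oneNorm{\FF(\sigma_-)}$. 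Combining these two facts and then invoking the elementary mediant inequality $\tfrac{a+b}{c+d}\leq\max\{a/c,\,b/d\}$, valid for $a,b\geq 0$ and $c,d>0$, yields
\begin{align}
\normfrac{\FF}{\sigma} &\leq \frac{\oneNorm{\FF\kl{\sigma_+}}+\oneNorm{\FF\kl{\sigma_-}}}{\oneNorm{\sigma_+}+\oneNorm{\sigma_-}} \nonumber\\
&\leq \max\skl{\normfrac{\FF}{\sigma_+},\,\normfrac{\FF}{\sigma_-}} \;\leq\; \max_{\rho\geq 0}\normfrac{\FF}{\rho}.
\end{align}
Taking the maximum over Hermitian $\sigma$ on the left then gives $\max_{\sigma}\normfrac{\FF}{\sigma}\leq\max_{\rho\geq 0}\normfrac{\FF}{\rho}$, which together with the trivial direction establishes the lemma.

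There is no genuine obstacle here; the only points requiring a word of care are the degenerate cases in which $\sigma$ is already semidefinite, so that $\sigma_+$ or $\sigma_-$ vanishes — then one term in the mediant step drops out and the claim is immediate — and the (standard) reduction to Hermitian arguments. The heart of the matter is simply that the trace norm is additive across the orthogonal positive and negative parts of a Hermitian operator, so that decomposing $\sigma$ costs nothing in the denominator, while sub-additivity controls the numerator.
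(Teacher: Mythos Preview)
Your proof is correct and shares the same skeleton as the paper's: both reduce to a Hermitian maximiser, take its Jordan decomposition $\sigma=\sigma_+-\sigma_-$, and exploit that $\oneNorm{\sigma}=\oneNorm{\sigma_+}+\oneNorm{\sigma_-}$ together with sub-additivity of $\oneNorm{\FF(\cdot)}$. The one genuine difference is the closing step. The paper uses positivity of $\FF$ a second time to note that $\oneNorm{\FF(\sigma_+)}+\oneNorm{\FF(\sigma_-)}=\Trace{\FF(\sigma_+)}+\Trace{\FF(\sigma_-)}=\oneNorm{\FF(\sigma_++\sigma_-)}$, thereby exhibiting the single positive operator $|\sigma|=\sigma_++\sigma_-$ as a witness achieving at least the same ratio. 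You instead apply the mediant inequality to land on one of $\sigma_\pm$. Your argument is marginally more elementary---after the reduction to Hermitian inputs it uses only linearity and the triangle inequality, so it in fact establishes the lemma for all Hermiticity-preserving maps---whereas the paper's version has the small bonus of constructing an explicit positive maximiser. Neither the paper nor you spells out the reduction to Hermitian $\sigma$ in full; you at least flag it explicitly as the point requiring care, which the paper simply assumes by writing $\sigma_{\max}=\sigma^+-\sigma^-$ from the outset.
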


\begin{proof}
Assume that the maximum for $\FF$ is reached by $\sigma_{\rm max}=\sigma^{+} - \sigma^{-}$ where $\sigma^+,\sigma^{-}\geq 0$ and $\tr ((\sigma^+)^{\dagger} \sigma^-)=0$.  Though this matrix is potentially non-positive ($\sigma \neq 0$), we show that there always exists a non-negative matrix $\sigma_{\rm max}$ that also achieves the maximum.  A completely positive map $\FF$ preserves positivity, and so
\begin{align}
  \oneNorm{ \FF \kl{\sigma_{\rm max}}} &= \oneNorm{ \FF  \kl{\sigma^+}- \FF  \kl{\sigma^-}} 
	\leq \oneNorm{\FF \kl{\sigma^+}} + \oneNorm{ \FF  \kl{\sigma^-}} \je{,}
\end{align}
where we again use the triangle inequality.  For a positive Hermitian matrix $||M||_1 = \tr(M)$, and since $\sigma^\pm$ are positive and $\FF$ preserves positivity (it is a \je{completely positive} map), we infer
\begin{align}
  \oneNorm{ \FF \kl{\sigma_{\rm max}}} &\leq \Trace{ \FF \kl{\sigma^+}} + \Trace{ \FF  \kl{\sigma^-}} 
	= \Trace{ \FF \kl{\sigma^+} + \FF  \kl{\sigma^-}} \nonumber\\
  &= \Trace{ \FF \kl{\sigma^+ + \sigma^-}} \nonumber\\
	&= \oneNorm{ \FF  \kl{\sigma^+ + \sigma^-}}.
\end{align}
Additionally we have
\begin{equation}
  \oneNorm{\sigma_{\rm max}} = \sum_i{\bkl{\lambda_i\kl{\sigma_{\rm max}}}} = \oneNorm{ \sigma^+ + \sigma^-}.
\end{equation}
So we can conclude that
\begin{equation}
 \normfrac{\FF}{\sigma_{\rm max}} \leq \normfrac{\FF}{\sigma^+ + \sigma^-}.
\end{equation}
Therefore, there exists a strictly positive matrix $\sigma'_{\mathrm{max}}=\sigma^+ + \sigma^-$ that also 
\je{achieves} the maximum value.
\end{proof} 

\begin{co}[\je{Relationship between norms}]
\label{coM}
 If $\FF$ is a positive map then
\begin{equation}
  \toNorm{\FF}{1}=\infNorm{\FF\adj\kl{\id}}.
  \end{equation}
\end{co}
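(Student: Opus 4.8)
The plan is to combine the two lemmas immediately preceding the corollary with the elementary variational description of the operator norm of a positive semidefinite matrix. First, I would invoke the ``restriction to positive operators'' lemma to restrict the supremum defining the $1\to1$ norm to positive inputs,
\begin{equation}
  \toNorm{\FF}{1} = \max_{\sigma \geq 0} \frac{\oneNorm{\FF\kl{\sigma}}}{\oneNorm{\sigma}}.
\end{equation}
For $\sigma \geq 0$ positivity of $\FF$ (complete positivity is not needed here) gives $\FF(\sigma) \geq 0$, hence $\oneNorm{\FF\kl{\sigma}} = \Trace{\FF\kl{\sigma}}$, while $\oneNorm{\sigma} = \Trace{\sigma}$. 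Using the defining property of the dual map, $\Trace{\FF\kl{\sigma}} = \Trace{\id\,\FF\kl{\sigma}} = \Trace{\FF\adj\kl{\id}\sigma}$, so
\begin{equation}
  \toNorm{\FF}{1} = \max_{\sigma \geq 0} \frac{\Trace{\FF\adj\kl{\id}\sigma}}{\Trace{\sigma}}.
\end{equation}

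Second, I would observe that the dual of a positive map is positive and $\id \geq 0$, so $H := \FF\adj\kl{\id}$ is positive semidefinite and Hermitian. For any Hermitian $H$ the quantity $\max_{\sigma \geq 0}\Trace{H\sigma}/\Trace{\sigma}$ equals the largest eigenvalue of $H$: the upper bound follows from $H \leq \lambda_{\max}(H)\,\id$ together with $\sigma \geq 0$ (so $\Trace{(\lambda_{\max}(H)\id - H)\sigma}\geq 0$), and equality is attained at $\sigma = \ketbra{v}$ for $v$ a top eigenvector of $H$. Since $H \geq 0$ its largest eigenvalue is exactly $\infNorm{H}$, which yields $\toNorm{\FF}{1} = \infNorm{\FF\adj\kl{\id}}$, as claimed.

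I do not expect any genuine obstacle here; the only two points that need a sentence of care are the reduction to positive test operators (which is precisely the content of the lemma just established) and the sharpness of the variational characterisation of $\lambda_{\max}$ over the positive cone. As a sanity check one could also route through $\toNorm{\FF}{1} = \toNorm{\FF\adj}{\infty}$ from Lemma~\ref{AdjointNorm}: for a positive map the maximum of $\infNorm{\FF\adj(\sigma)}/\infNorm{\sigma}$ over the operator-norm ball is attained at $\sigma = \id$ (since $-\infNorm{\sigma}\id \leq \sigma \leq \infNorm{\sigma}\id$ is preserved by the positive map $\FF\adj$), recovering $\infNorm{\FF\adj\kl{\id}}$ directly.
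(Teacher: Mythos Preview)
Your proof is correct and follows the same route as the paper: restrict to positive $\sigma$ via the preceding lemma, replace the trace norms by traces, pass to the dual, and identify the resulting quotient with $\infNorm{\FF\adj(\id)}$. You are in fact more explicit than the paper about the final identification, spelling out why $\max_{\sigma\geq 0}\Trace{H\sigma}/\Trace{\sigma}=\lambda_{\max}(H)=\infNorm{H}$ for $H=\FF\adj(\id)\geq 0$, whereas the paper simply invokes the variational characterisation of the Schatten norms at that point.
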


\begin{proof}
Since in the variational definition of the $1\ra1$-norm stated above, the maximum is always reached on a positive state, we can use the trace properties and 
the variational relations of the Schatten norms,
 \begin{align}
  \max_{\sigma\geq0}{\normfrac{ \FF }{\sigma}} &= \max_{\sigma\geq0}{\frac{\Trace{\FF\kl{\sigma}}}{\oneNorm{\sigma}}} 
					   = \max_{\sigma\geq0}{\frac{\Trace{\id\FF\kl{\sigma}}}{\oneNorm{\sigma}}} \nonumber\\
					   &= \max_{\sigma\geq0}{\frac{\Trace{\FF\adj\kl{\id}\sigma}}{\oneNorm{\sigma}}} \nonumber\\
					  & = \infNorm{\FF\adj\kl{\id}}					 .  
 \end{align}
 \end{proof}
From this corollary, two further corollaries follow.

\begin{co}[\je{Norms for quantum channels}]
\label{NormOne}
 If $\FF$ is a positive and trace-preserving map then $\toNorm{\FF}{1} = 1$.
\end{co}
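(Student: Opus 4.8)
The plan is to read this off from Corollary \ref{coM}. Since a trace-preserving positive map is in particular a positive map, that corollary applies and gives $\toNorm{\FF}{1}=\infNorm{\FF\adj(\id)}$, so the whole statement reduces to evaluating $\FF\adj(\id)$ under the trace-preservation hypothesis.

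First I would recall the defining property of the dual map, $\Trace{A\,\FF(B)}=\Trace{\FF\adj(A)\,B}$ for all $A,B$, and specialise it to $A=\id$, obtaining $\Trace{\FF(B)}=\Trace{\FF\adj(\id)\,B}$ for every $B$. Trace preservation of $\FF$ says the left-hand side equals $\Trace{B}$, hence $\Trace{\bigl(\FF\adj(\id)-\id\bigr)B}=0$ for all $B$; by non-degeneracy of the Hilbert--Schmidt inner product (equivalently, by letting $B$ run over a basis of matrices) this forces $\FF\adj(\id)=\id$. Plugging back in, $\toNorm{\FF}{1}=\infNorm{\FF\adj(\id)}=\infNorm{\id}$, and $\infNorm{\id}$ is the largest singular value of the identity operator, which is $1$, so $\toNorm{\FF}{1}=1$.

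There is essentially no obstacle here: the nontrivial content has already been absorbed into Corollary \ref{coM} (and, upstream of it, the lemma that the maximiser in the variational definition of $\toNorm{\cdot}{1}$ may be taken positive). The only point worth stating carefully is the elementary equivalence ``$\FF$ trace-preserving $\iff$ $\FF\adj(\id)=\id$''. As a sanity check one could also argue directly: for positive $\FF$ the restriction-to-positive-operators lemma lets us compute $\toNorm{\FF}{1}=\max_{\sigma\geq 0}\normfrac{\FF}{\sigma}$, and for $\sigma\geq 0$ trace preservation gives $\oneNorm{\FF(\sigma)}=\Trace{\FF(\sigma)}=\Trace{\sigma}=\oneNorm{\sigma}$, so every admissible ratio equals $1$; this yields the same conclusion without reference to the dual map.
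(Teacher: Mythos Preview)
Your proof is correct and follows exactly the route the paper intends: the corollary is stated immediately after Corollary~\ref{coM} with the remark that it follows from it, and your argument spells out precisely this implication by noting that trace preservation is equivalent to $\FF\adj(\id)=\id$, whence $\toNorm{\FF}{1}=\infNorm{\id}=1$. The alternative direct computation you give at the end is also valid and amounts to unwinding the proof of Corollary~\ref{coM} in this special case.
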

\begin{co}[\je{Norms for conjugations}]\label{cor:SimpleChannelNorm}
	For a \je{map} of the form $\mathcal{S}\kl{\rho} = \xi^{1/2} \rho \xi^{1/2}$ \jg{with $\xi^\dag=\xi$},
we have that
\begin{equation}
	\toNorm{\mathcal{S}}{1}=\infNorm{{\xi}^{1/2} \id{\xi}^{1/2}} = \infNorm{\xi}.
\end{equation}
\end{co}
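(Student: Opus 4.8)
The plan is to reduce the claim to Corollary~\ref{coM}, which already gives $\toNorm{\FF}{1}=\infNorm{\FF\adj(\id)}$ for any positive map $\FF$. Only two things then need checking: that $\mathcal{S}$ is a positive map, and what its dual is.

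First I would note that, since $\xi=\xi\adj$ and $\xi\geq 0$ (the relevant case, $\xi$ being a Perron--Frobenius eigenvector in the applications of this corollary), the operator $\xi^{1/2}$ is a well-defined positive Hermitian matrix, so $\mathcal{S}(\rho)=\xi^{1/2}\rho\,\xi^{1/2}$ is completely positive, having the single Kraus operator $\xi^{1/2}=(\xi^{1/2})\adj$; in particular $\mathcal{S}$ is positive, so Corollary~\ref{coM} applies. I would also remark that $\mathcal{S}$ is generally \emph{not} trace-preserving, since $\Trace{\mathcal{S}(\rho)}=\Trace{\xi\rho}$, so Corollary~\ref{NormOne} cannot be invoked directly and the detour through Corollary~\ref{coM} is genuinely needed.

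Next I would identify the dual. Cyclicity of the trace gives, for all $A,B$,
\begin{equation}
\Trace{A\,\mathcal{S}(B)}=\Trace{A\,\xi^{1/2}B\,\xi^{1/2}}=\Trace{\xi^{1/2}A\,\xi^{1/2}\,B},
\end{equation}
so that $\mathcal{S}\adj(A)=\xi^{1/2}A\,\xi^{1/2}=\mathcal{S}(A)$, i.e.\ $\mathcal{S}$ is self-dual, whence $\mathcal{S}\adj(\id)=\xi^{1/2}\id\,\xi^{1/2}$. Feeding this into Corollary~\ref{coM} yields $\toNorm{\mathcal{S}}{1}=\infNorm{\xi^{1/2}\id\,\xi^{1/2}}$, and the final identity $\infNorm{\xi^{1/2}\id\,\xi^{1/2}}=\infNorm{\xi}$ is immediate from $\xi^{1/2}\xi^{1/2}=\xi$. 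There is essentially no obstacle in this argument; the only point that warrants a word of justification is that $\mathcal{S}$ deserves to be called a positive map, which is precisely why the statement is implicitly understood with $\xi\geq0$.
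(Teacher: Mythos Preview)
Your proposal is correct and follows exactly the route the paper intends: the paper states this result as an immediate consequence of Corollary~\ref{coM} with no further argument, and you have spelled out precisely those intended steps (positivity of $\mathcal{S}$ via the single Kraus operator $\xi^{1/2}$, self-duality of $\mathcal{S}$, and the identity $\xi^{1/2}\id\,\xi^{1/2}=\xi$). Your observation that the implicit hypothesis $\xi\geq 0$ is what makes $\mathcal{S}$ a positive map is a welcome clarification the paper leaves tacit.
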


\subsubsection{Bounds of gauge maps}
\label{secGaugeBounds}
\jg{The condition number is small provided $\xi$ is close to $\id$. Specifically, following Cor.~(\ref{cor:SimpleChannelNorm}) we can upper bound
\begin{align}
	\toNorm{\mathcal{S}}{1}&= \infNorm{\xi}, \\
	\toNorm{\mathcal{S}^{-1}}{1}&= \infNorm{\xi^{-1}},
\end{align}
Observe that if $\xi$ is close to the identity then $\xi - \id$ is small, and it is helpful to reformulate the bounds on $\xi$ in terms of $\Delta:=\infNorm{\xi-1}$.  We find
\begin{equation}\label{eq:xi1}
 \infNorm{\xi}=\infNorm{\id-(\id-\xi)} \leq 1+\infNorm{\id-\xi} = 1+ \Delta.
\end{equation}
Similarly, we can deduce
\begin{align}
 \infNorm{\id} &= \infNorm{\xi^{-1} - \xi^{-1}(\id-\xi)}, \\
 1 &\geq \infNorm{\xi^{-1}} - \infNorm{\xi^{-1}}\infNorm{\id-\xi}, \\
 &\Ra \infNorm{\xi^{-1}} \leq \tel{1-\Delta},\label{eq:xi2}
\end{align}
as well as
\begin{align}
	\infNorm{\id - \xi^{-1}} &= \infNorm{\xi^{-1}\left( \xi - \id \right)} 
	\leq \infNorm{\xi^{-1}}\infNorm{\id - \xi} = \frac{\Delta}{1-\Delta}.
\end{align}
We can deduce similar bounds for $\MS$. If we apply $(\id -S)$ on some state $\rho$, we get
\begin{align}
	&(\id - \MS)(\rho) = \rho - \xi^{1/2}\rho\xi^{1/2} 
	=(\id - \xi^{1/2})\rho + \rho(\id - \xi^{1/2}) - (\id - \xi^{1/2})\rho(\id - \xi^{1/2}).
\end{align}
We can use the H{\"o}lder inequality and the fact that $\infNorm{\id - \xi^{1/2}}\leq \infNorm{\id - \xi} < 1$ to find
\begin{align}
	\toNorm{\id - \MS}{1} &\leq 3 \toNorm{\id - \xi}{1} = 3 \Delta,\label{eq:SS1}\\
	\toNorm{\id - \MS^{-1}}{1} &\leq \frac{3\Delta}{1-\Delta}.\label{eq:SS2}
\end{align}
}
\je{These bounds will be used later.}

\subsection{Ergodicity coefficient and fundamental channel}
\label{secErgod}

We make frequent use of another concept originating from 
\je{the theory of}
Markov chains, which is the \emph{ergodicity coefficient}. It is a measure for 
 how close a quantum channel is to a projection onto its steady state.
 The ergodicity is defined as follows: If $\rho$ is the steady state of a completely positive map and 
 \begin{equation}
 \oneNorm{\sigma}=\Trace{(\sigma\adj\sigma)^{1/2}} 
\end{equation}
is the trace norm of $\sigma$, then
 \begin{equation}
  \tec{\FF} = \max_{\Trace{\sigma\adj\rho}=0} \frac{\oneNorm{\FF\kl{\sigma}}}{\oneNorm{\sigma}} .
 \end{equation}
 The ergodicity coefficient is similar to the second eigenvalue of the map, and in fact it is straightforward to see that it always upper bounds the second eigenvalue.
 The ergodicity coefficient is sub-multiplicative. If two maps $\FF_1$ and $\FF_2$ are trace preserving then
 \begin{equation}
  \tec{\FF_1\FF_2} \leq \tec{\FF_1}\tec{\FF_2} .
 \end{equation}
So the product of the two maps is at least as close to being a projector as both maps individually. Multiplying 
\je{a large number of}
quantum channels, each with an ergodicity smaller than \je{unity}, 
\je{then} eventually leads to a projection.  

\subsection{Preserving the trace}
\label{secTP}
The observation that $A_iA_{i+1} = (A_i S)(S^{-1}A_{i+1})$ 
\je{for any $S\in Gl(d,\cc)$} 
means that an MPO is not uniquely defined and offers a gauge freedom. Since we deal with unnormalised MPO, we also have the freedom of rescaling. There is a canonical gauge and scale corresponding to the transfer
 channel $\EE$ being trace preserving.
We apply the gauge transformation $A_i\je{\mapsto} 
\je{A^\prime_i}=\jg{S}\je{A_i} \jg{S}^{-1}$ with $\jg{S=\xi^{1/2}\otimes\xi^{1/2}}$,\jg{ where $\xi$ is the left Perron eigenstate of $\EE$}. Consequently,
 \begin{align}
  E &\je{\mapsto} E^\prime=\kl{\xi^{1/2}\otimes\xi^{1/2}}E\kl{\xi^{-1/2}\otimes\xi^{-1/2}} 
 \end{align}
 and
  \begin{align}  
  \EE\kl{\rho} & \mapsto \xi^{1/2}\EE\kl{\xi^{-1/2}\rho\xi^{-1/2}}\xi^{1/2}.
 \end{align}
 Now we are in a gauge where $\id$ is the left Perron state,
 \begin{align}\label{Gauge}
  \EE\adj\kl{\id} & \mapsto \xi^{-1/2}\EE\kl{\xi^{1/2}\id\xi^{1/2}}\xi^{-1/2} = \xi^{-1/2}\EE\kl{\xi}\xi^{-1/2}
 = \xi^{-1/2}\xi\xi^{-1/2} = \id.
 \end{align}
 
\subsection{Condition number}
\label{secCondLem}

Here we prove the condition number bound presented as \je{Lemma} \ref{LemKappa}. \jg{We use Eqs.~\eqref{eq:xi1} and \eqref{eq:xi2} to derive a} new expression for the condition number
\begin{equation}
	\kappa(\MS_n)=||\xi_n ||_{\infty}||\xi_n^{-1} ||_{\infty} \leq \frac{1 + \Delta_n}{1 - \Delta_n}.\label{eq:kappa}
\end{equation}
To proceed we need to upper bound $\Delta_n$, which we achieve using the following powerful result.

\begin{thm}\label{TracePreservingDeviation} (Eigenvector perturbation theorem)
Let $\FF_1$ and $\FF_2$ be completely positive maps with spectral radius 1, and $\FF_1$ is trace-preserving. If $\xi$ is the left Perron state for $\FF_2$, so that  $\FF_2^{\dagger}(\xi)=\xi$, then
\begin{equation}
  \infNorm{\id-\x} \leq \frac{k}{1-k}
\end{equation}
where
\begin{equation}
     k = \left( \frac{1 + \tau(\FF_1)}{1 - \tau(\FF_1)} \right) \toNorm{\FF_1-\FF_2}{1}.
\end{equation}
\end{thm}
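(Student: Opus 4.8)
The plan is to quantify how much the left Perron eigenvector $\xi$ of $\FF_2$ can deviate from the identity, using the fact that $\FF_1$ is trace-preserving with $\id$ as its own left Perron state and that $\FF_1$ and $\FF_2$ are close in the $1\to1$ norm. Write $\FF_2 = \FF_1 + \mathcal{R}$ with $\mathcal{R} = \FF_2 - \FF_1$, so $\toNorm{\mathcal{R}}{1}$ is small. The eigenvalue equation $\FF_2^\dagger(\xi) = \xi$ becomes $\FF_1^\dagger(\xi) + \mathcal{R}^\dagger(\xi) = \xi$, i.e. $(\id - \FF_1^\dagger)(\xi) = \mathcal{R}^\dagger(\xi)$. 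Decompose $\xi = \id + \delta$ where $\delta := \xi - \id$; since $\FF_1^\dagger(\id) = \id$, the left side collapses to $(\id - \FF_1^\dagger)(\delta)$, giving the clean relation $(\id - \FF_1^\dagger)(\delta) = \mathcal{R}^\dagger(\xi) = \mathcal{R}^\dagger(\id) + \mathcal{R}^\dagger(\delta)$.

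The key structural point is that $\FF_1$ is trace-preserving, so its dual $\FF_1^\dagger$ is unital and has $\id$ as its unique leading eigenvector; restricted to the trace-zero subspace (the orthogonal complement of $\id$ with respect to the Hilbert–Schmidt inner product, which $\mathcal{R}^\dagger(\id)$ and $\mathcal{R}^\dagger(\delta)$ sit in after a suitable renormalisation — note $\tr\xi$ can be fixed so that $\delta$ is trace-zero) the map $\id - \FF_1^\dagger$ is invertible. I would control the inverse via the ergodicity coefficient: on the trace-zero subspace, $\FF_1$ acts with norm at most $\tau(\FF_1) < 1$, so by the geometric series $\|(\id - \FF_1)^{-1}\|_{1\to 1} \le (1-\tau(\FF_1))^{-1}$ on that subspace, and dualising (Lemma~\ref{AdjointNorm}) gives the corresponding bound for $(\id - \FF_1^\dagger)^{-1}$ in the relevant norm. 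The factor $\frac{1+\tau(\FF_1)}{1-\tau(\FF_1)}$ rather than $\frac{1}{1-\tau}$ arises because one must pass the perturbation through the projection onto the trace-zero subspace and then back, picking up an extra $(1+\tau(\FF_1))$-type factor when re-expressing the $1\to1$ norm of $\mathcal{R}^\dagger$ acting on the relevant Schatten norms; this is exactly the style of estimate in Ref.~\cite{szehr_perturbation_2013}.

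Putting the pieces together, take $\infNorm{\cdot}$ norms on $\delta = (\id - \FF_1^\dagger)^{-1}\big(\mathcal{R}^\dagger(\id) + \mathcal{R}^\dagger(\delta)\big)$, use $\mathcal{R}^\dagger(\id)$ bounded by $\toNorm{\mathcal{R}}{1}$ (via $\toNorm{\mathcal{R}}{1} = \infNorm{\mathcal{R}^\dagger(\id)}$, essentially Cor.~\ref{coM} applied to the difference, or its absolute-value version), and $\infNorm{\mathcal{R}^\dagger(\delta)} \le \toNorm{\mathcal{R}^\dagger}{\infty}\infNorm{\delta} = \toNorm{\mathcal{R}}{1}\infNorm{\delta}$. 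This yields a self-consistent inequality $\infNorm{\delta} \le k(1 + \infNorm{\delta})$ with $k = \frac{1+\tau(\FF_1)}{1-\tau(\FF_1)}\toNorm{\FF_1 - \FF_2}{1}$, which rearranges to $\infNorm{\delta} \le \frac{k}{1-k}$ whenever $k < 1$. The main obstacle I anticipate is being careful about the non-normality of $\FF_1^\dagger$: the bound $\|(\id - \FF_1)^{-1}\| \le (1-\tau)^{-1}$ on the trace-zero sector is not automatic from the eigenvalue bound alone, and one must invoke the ergodicity coefficient's definition directly (it bounds the operator norm of $\FF_1$ restricted to that sector, not merely its spectral radius), together with submultiplicativity to sum the Neumann series — plus tracking which Schatten norm lives on which side when dualising, since the perturbation $\mathcal{R}^\dagger(\delta)$ must be measured in the same norm in which we are ultimately bounding $\delta$.
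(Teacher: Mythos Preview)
Your overall strategy matches the paper's: both derive the self-consistent inequality $\infNorm{\id-\xi}\leq k(1+\infNorm{\id-\xi})$ from the perturbed eigenvector equation and then rearrange. The gap is in your inversion step. You propose to invert $\id-\FF_1^\dagger$ by restricting to the trace-zero subspace, summing the Neumann series for $\FF_1$ there (where the $1\to1$ norm is $\tau$), and then ``dualising via Lemma~\ref{AdjointNorm}''. But $\FF_1^\dagger$ is unital, not trace-preserving, so it does \emph{not} preserve the trace-zero subspace; the Neumann series $\sum_k(\FF_1^\dagger)^k$ therefore leaves that subspace, and Lemma~\ref{AdjointNorm} applies to maps on the full space, not to their restrictions. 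Relatedly, the right-hand side $\mathcal{R}^\dagger(\xi)$ lies in the range of $\id-\FF_1^\dagger$, which is $\{X:\tr(\rho_1 X)=0\}$ with $\rho_1$ the steady state of $\FF_1$, not the trace-zero subspace; your normalisation $\tr\xi=d$ puts $\delta$ in the wrong complement. Your account of where the extra $(1+\tau)$ factor comes from is correspondingly vague.

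The paper avoids these subspace difficulties by introducing the \emph{fundamental channel} $\ZZ_1=(\id-\FF_1+\FF_1^\infty)^{-1}$, which is invertible on the \emph{whole} space: adding the rank-one projector $\FF_1^\infty$ removes the kernel of $\id-\FF_1$ without altering its action on the complement. With the normalisation $\tr(\rho_1\xi)=1$ one computes directly $(\ZZ_1^\dagger)^{-1}(\id-\xi)=(\FF_1^\dagger-\FF_2^\dagger)(\xi)$, after which one can dualise $\toNorm{\ZZ_1^\dagger}{\infty}=\toNorm{\ZZ_1}{1}$ via Lemma~\ref{AdjointNorm} with no subspace obstruction. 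The bound $\toNorm{\ZZ_1}{1}\leq\frac{1+\tau}{1-\tau}$ (the paper's Lemma~\ref{ZToTau}) comes from writing $\ZZ_1=\id+\sum_{k\geq0}\FF_1^k\circ(\FF_1-\FF_1^\infty)$ and noting that $\FF_1-\FF_1^\infty=\FF_1\circ(\id-\FF_1^\infty)$ maps into the trace-zero sector with $1\to1$ norm at most $2\tau$, while $\sum_k\FF_1^k$ restricted there has norm at most $(1-\tau)^{-1}$; the leading ``$1$'' from the identity term gives $1+\frac{2\tau}{1-\tau}=\frac{1+\tau}{1-\tau}$, which is precisely the factor you could not pin down.
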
 
We prove this result in the following subsections, but here make direct use of it.  We set $\FF_1=\mathcal{A}_{n}^4$, which inherits the required properties from $\mathcal{A}_{n}$.  Likewise, we set 
\begin{equation} 
	\FF_2:=\lambda^{-1} (\AAn^4+ \PP_n), 
\end{equation}
where $\PP_n := \mathcal{A}_{n+2}' -\mathcal{A}_n^4 = \left\{ \AAn^2, \BBn^2 \right\} + \BBn^4 + \left( \CCn^2 + \DDn^2 \right)^2$ is a perturbation composed of noise matrices and $\lambda$ is a normalisation constant ensuring that $\FF_2$ has spectral radius 1.  Note that $\FF_2$ differs from $\mathcal{A}'_{n+2}$ by only a constant so they have the same eigenvectors, e.g. $\xi$.
Therefore, the theorem tells us that $\Delta_n \leq k_n /( 1 - k_n) $ where
\begin{equation}
\label{eqKnAppen}
    k_n = \left( \frac{1 + \tau( \mathcal{A}_n^4 )}{1 - \tau( \mathcal{A}_n^4 )} \right) \toNorm{ \mathcal{A}_n^4 - \lambda^{-1}(\mathcal{A}_n^4+ \PP_n)}{1} \je{.}
\end{equation}
Since $\mathcal{A}_n$ is trace-preserving $ \tau( \mathcal{A}_n^4 ) = \tau( \mathcal{A}_n )^4 = \tau_n$.  Looking at the second factor, we collect the $\mathcal{A}_n^4$ terms and use the triangle inequality
\begin{eqnarray}
\toNorm{\FF_1-\FF_2}{1} &=& \toNorm{(1-\lambda^{-1}) \mathcal{A}_n^4-\lambda^{-1} \PP_n}{1} 
 \leq  |1-\lambda^{-1}| \toNorm{\mathcal{A}_n^4}{1} + \lambda^{-1}  \toNorm{\PP_n^4}{1}.
\end{eqnarray}
To proceed, we need information about $\lambda$, which is the spectral radius of $\mathcal{A}_n^4 + \PP_n$ and so $\lambda \leq 1 + \toNorm{\PP_n}{1} $.  Furthermore, because $\mathcal{A}_n^4$ and $\PP_n$ are both positive channels, we know that $\lambda$ must exceed the spectral radius of $\mathcal{A}_n^4$ and so $1 \leq \lambda$.  Therefore,  $\lambda^{-1}\leq 1$ and $|1-\lambda^{-1}| \leq  \toNorm{\PP_n}{1} $. Combining these observations, we have
\begin{eqnarray}
\toNorm{\FF_1-\FF_2}{1} & \leq & 2 \toNorm{\PP_n}{1}\je{,}
\end{eqnarray}
and so
\begin{equation}
    k_n =2 \left( \frac{1 + \tau_n^4}{1 - \tau_n^4} \right) \toNorm{\PP_n}{1}.
\end{equation}
To upper bound $\toNorm{\PP_n^4}{1}$, we have to refer back to the iterative formulae and use norm sub-multiplicativity 
to show $\toNorm{\PP_n^4}{1} \leq 2 \epsilon_n^2 + 5 \epsilon_n^4$.  Substituting this into $k_n$ we get
\begin{equation}
    k_n = \left( \frac{1 + \tau_n^4}{1 - \tau_n^4} \right) (4 \epsilon_n^2 + 10 \epsilon_n^4),
\end{equation}
which proves \je{Lemma} \ref{LemKappa}.  However, the proof rests upon \je{Theorem} \ \ref{TracePreservingDeviation}, which we turn to in the next section.

\subsection{Proof of the eigenvector perturbation theorem}

Our methodology for proving \je{Theorem} \ \ref{TracePreservingDeviation} is in the spirit 
of \je{Ref.} \cite{szehr_perturbation_2013}, but \je{significantly}
generalised so that one of the channels need not be trace-preserving. The proof requires some new concepts we have not yet introduced, including the fundamental channel.
 \begin{df}[\je{Fundamental channel}]
Let $\FF$ be a channel, then the fundamental channel of $\FF$ is  
\begin{equation}
    \mathcal{Z} = \kl{\idd - \FF + \FFinf{}}^{-1}\je{.}
\end{equation}
\end{df}
This definition is central to the following two lemmas.

\label{secEVThm}
\begin{lm}[\je{Bound for fundamental channels}]
\label{lemTracePreservingDeviation} 
 $\FF_1$ and $\FF_2$ are completely positive maps with spectral radius 1, and $\FF_1$ is trace-preserving.
 $\ZZ_1$ is the fundamental channel of $\FF_1$. The left Perron state for $\FF_2$ is $\xi$.
 Then
 \begin{equation}
  \infNorm{\id-\x} \leq \frac{\toNorm{\ZZ_1}{1}\toNorm{\FF_1-\FF_2}{1}}{1-\toNorm{\ZZ_1}{1}\toNorm{\FF_1-\FF_2}{1}}
  \je{.}
 \end{equation}
\end{lm}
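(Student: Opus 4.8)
The plan is to work throughout in the Heisenberg picture, with the dual maps acting on observables, and to regard $\xi$ as a perturbation of $\id$, the left Perron eigenvector of the trace-preserving map $\FF_1$. It is convenient to abbreviate $c := \toNorm{\ZZ_1}{1}\,\toNorm{\FF_1-\FF_2}{1}$, and we may assume $c<1$ since otherwise the asserted bound is vacuous. First I would record the two fixed-point relations already available: $\FF_2\adj(\xi)=\xi$ by hypothesis, and $\FF_1\adj(\id)=\id$ because $\FF_1$ is trace-preserving. Subtracting them yields the identity
\begin{equation}
	(\id-\FF_1\adj)(\xi) = (\FF_2\adj-\FF_1\adj)(\xi) = -(\FF_1-\FF_2)\adj(\xi) .
\end{equation}

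Next I would invert $\id-\FF_1\adj$ on the relevant subspace using the fundamental channel. From $\ZZ_1=(\id-\FF_1+\FFinf{1})^{-1}$ and the fact that $\FFinf{1}$ is the spectral projection of $\FF_1$ onto the eigenvalue $1$ (so that $\ZZ_1\FFinf{1}=\FFinf{1}$ and all three maps commute), one gets $\ZZ_1(\id-\FF_1)=\id-\FFinf{1}$, hence, dualising, $\ZZ_1\adj(\id-\FF_1\adj)=\id-(\FFinf{1})\adj$. Applying $\ZZ_1\adj$ to the identity above therefore gives
\begin{equation}
	\xi - (\FFinf{1})\adj(\xi) = -\ZZ_1\adj(\FF_1-\FF_2)\adj(\xi) .
\end{equation}
Since $\FF_1$ is trace-preserving with (unique) stationary state $\rho_\infty$, its dual peripheral projection acts as $(\FFinf{1})\adj(X)=\Trace{\rho_\infty X}\,\id$, so $(\FFinf{1})\adj(\xi)=\alpha\,\id$ with $\alpha:=\Trace{\rho_\infty \xi}$, leaving
\begin{equation}
	\xi - \alpha\,\id = -\ZZ_1\adj(\FF_1-\FF_2)\adj(\xi) .
\end{equation}

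Then I would fix the normalisation and close the estimate. Taking the induced $\infty\to\infty$ norm, using sub-multiplicativity of induced norms together with the identity $\toNorm{\FF}{1}=\toNorm{\FF\adj}{\infty}$ recorded after Lemma~\ref{AdjointNorm} (applied to both $\ZZ_1$ and $\FF_1-\FF_2$), one obtains $\infNorm{\xi-\alpha\,\id}\leq c\,\infNorm{\xi}$. If $\alpha=0$ this reads $\infNorm{\xi}\leq c\,\infNorm{\xi}$, which with $c<1$ forces $\xi=0$ and contradicts that $\xi$ is an eigenvector; hence $\alpha\neq 0$, and after rescaling $\xi$ so that $\alpha=1$ (equivalently $\Trace{\rho_\infty\xi}=1$, the normalisation implicit in the statement) the last display becomes $\xi-\id=-\ZZ_1\adj(\FF_1-\FF_2)\adj(\xi)$. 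The triangle inequality then gives
\begin{equation}
	\infNorm{\id-\xi} \leq c\,\infNorm{\xi} \leq c\bigl(\infNorm{\id-\xi}+\infNorm{\id}\bigr) = c\bigl(\infNorm{\id-\xi}+1\bigr) ,
\end{equation}
so that $\infNorm{\id-\xi}\leq c/(1-c)$, which is the claim.

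The norm manipulations are routine; I expect the delicate part to be the two structural facts used above as black boxes: that the peripheral eigenvalue $1$ of a trace-preserving completely positive map is semisimple (so that $\ZZ_1$ is well defined and $\ZZ_1\FFinf{1}=\FFinf{1}$ holds), and that the associated spectral projection is $X\mapsto\Trace{\rho_\infty X}\,\id$, the latter using uniqueness of the stationary state of $\FF_1$ — the regime in which the lemma is later invoked (e.g.\ $\FF_1=\AAn^4$ with $\tau_n<1$). Both are standard but should be stated explicitly; once granted, the proof is exactly the short perturbative computation sketched here.
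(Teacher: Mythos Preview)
Your proof is correct and follows essentially the same route as the paper's: both obtain the key identity $\id-\xi=\ZZ_1\adj(\FF_1\adj-\FF_2\adj)(\xi)$ (you via $\ZZ_1(\id-\FF_1)=\id-\FFinf{1}$ applied to the fixed-point relation, the paper by directly computing $(\ZZ_1\adj)^{-1}(\id-\xi)$), and then close with the same $\infty$-norm triangle inequality and rearrangement. The only cosmetic differences are that you argue $\alpha\neq 0$ before normalising whereas the paper imposes $\Trace{\rho_1\adj\xi}=1$ at the outset, and you split the norm via sub-multiplicativity one step earlier.
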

\noindent
\je{The following also holds.}

\begin{lm}[\je{Second bound for fundamental channel}]
\label{ZToTau}
 Let $\FF$ be a CPT-map, and denote $\ZZ$ to be the fundamental channel of $\FF$. It follows that
 \begin{equation}
 \toNorm{\ZZ}{1} \leq \frac{1+\tec{\FF}}{1-\tec{\FF}}.
\end{equation}
\end{lm}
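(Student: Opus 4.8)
The plan is to expand the fundamental channel as a Neumann‑type series and estimate it term by term using the ergodicity coefficient. First I would record the structure that makes $\ZZ$ well defined: since $\FF$ is trace‑preserving and (as implicitly assumed whenever $\FFinf{}$ exists) primitive, it has a unique steady state, a density operator which I denote $\rho$, and $\FFinf{}$ is the projection $\sigma\mapsto\Trace{\sigma}\rho$. One checks the elementary relations $\FFinf{}^2=\FFinf{}$, $\FF\,\FFinf{}=\FFinf{}\,\FF=\FFinf{}$, and that $(\FF-\FFinf{})$ sends any $\sigma$ to $\FF\sigma_0$, where $\sigma_0:=\sigma-\Trace{\sigma}\rho$ is trace‑zero.

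The key step is the series identity
\[
\ZZ(\sigma)=\sigma+\sum_{m=0}^{\infty}\FF^m\bigl((\FF-\FFinf{})\sigma\bigr),
\]
which I would prove by applying $(\idd-\FF+\FFinf{})$ to the right‑hand side and checking, via a telescoping sum and the relations above (noting $\FFinf{}\FF^m(\FF-\FFinf{})\sigma=0$ since the argument is trace‑zero), that one recovers $\sigma$. Convergence holds whenever $\tec{\FF}<1$ — and if $\tec{\FF}=1$ the claimed bound is vacuous — because $\eta:=(\FF-\FFinf{})\sigma$ is trace‑zero, so $\|\FF^m\eta\|_1\le\tec{\FF^m}\|\eta\|_1\le\tec{\FF}^m\|\eta\|_1$ by the definition of the ergodicity coefficient together with its sub‑multiplicativity.

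The estimate is then routine. Writing $\eta=\FF\sigma_0$, I would bound $\|\sigma_0\|_1\le\|\sigma\|_1+|\Trace{\sigma}|\,\|\rho\|_1\le 2\|\sigma\|_1$ (triangle inequality, $\|\rho\|_1=1$, $|\Trace{\sigma}|\le\|\sigma\|_1$), hence $\|\eta\|_1=\|\FF\sigma_0\|_1\le\tec{\FF}\,\|\sigma_0\|_1\le 2\tec{\FF}\|\sigma\|_1$ because $\sigma_0$ is trace‑zero. Summing the geometric series,
\[
\|\ZZ(\sigma)\|_1\le\|\sigma\|_1+\sum_{m=0}^{\infty}\tec{\FF}^m\|\eta\|_1
=\|\sigma\|_1+\frac{\|\eta\|_1}{1-\tec{\FF}}
\le\Bigl(1+\frac{2\tec{\FF}}{1-\tec{\FF}}\Bigr)\|\sigma\|_1
=\frac{1+\tec{\FF}}{1-\tec{\FF}}\,\|\sigma\|_1,
\]
and taking the supremum over $\sigma$ gives $\toNorm{\ZZ}{1}\le(1+\tec{\FF})/(1-\tec{\FF})$.

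The argument is short, so the only real subtlety — and the step I would be most careful about — is choosing the right form of the Neumann series. Peeling off the single term $\sigma$, which contributes exactly $\|\sigma\|_1$, before summing the geometric tail is what yields the sharp constant: the more naive split $\ZZ(\sigma)=\Trace{\sigma}\rho+\sum_{k\ge0}\FF^k\sigma_0$ only gives the weaker bound $(3-\tec{\FF})/(1-\tec{\FF})$, since the geometric series then starts at $\sigma_0$ rather than at $\FF\sigma_0$ and so loses the extra factor $\tec{\FF}$.
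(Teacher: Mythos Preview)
Your argument is correct and follows essentially the same route as the paper: both proofs expand $\ZZ=\idd+\sum_{k\ge 0}\FF^k(\FF-\FFinf{})$, use that $(\FF-\FFinf{})\sigma$ is trace-zero, bound $\oneNorm{(\FF-\FFinf{})\sigma}\le 2\tec{\FF}\oneNorm{\sigma}$ (the paper via the factorisation $\FF-\FFinf{}=\FF(\idd-\FFinf{})$ and $\toNorm{\idd-\FFinf{}}{1}\le 2$, you via the explicit decomposition $\sigma_0=\sigma-\Trace{\sigma}\rho$), and then sum the geometric series on traceless inputs to obtain $1/(1-\tec{\FF})$. Your closing remark about why peeling off the identity term is what produces the sharp constant is a nice addition not spelled out in the paper.
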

Combining these results straightforwardly leads to \je{Theorem} \ \ref{TracePreservingDeviation}, and so the remainder of this section will prove these lemmas. 

\begin{proof}
We \je{begin} by bounding how much the left Perron state is perturbed from the identity. First we look at the projectors of our maps in the matrix picture,
\begin{align}
 \Finf{1}&=\ket{\rho_1}\bra{\id} \with \braket{\id}{\rho_1}=\Trace{\rho_1}=1, \\
 \Finf{2}&=\ket{\rho_2}\bra{\x} \with \braket{\x}{\rho_2}=\Trace{\x\adj\rho_2}=1.
 \end{align}
 Since the eigenvector matters only up to a constant we can rescale $\ket{\x}$ and $\ket{\rho_2}$ so they still satisfy $\SmallTrace{\x\adj\rho_2}=1$, but also satisfy $\braket{\rho_1}{\x}=\SmallTrace{\rho_1\adj \x}=1$.
 
Since we are dealing with the left eigenvectors we have to transpose our maps for easier notation.
Thus, the transposed projectors are
\begin{align}
 (\Finf{1})^{\dagger}&=\ket{\id}\bra{\rho_1} \with \braket{\id}{\rho_1}=1, \\
 (\Finf{2})^{\dagger}&=\ket{\x}\bra{\rho_2} \with \braket{\xi}{\rho_1}=1.
 \end{align}
 We start by applying $(\ZZ_1\adj)^{-1}$, 
 the inverse fundamental matrix of transposed $\FF_1$ to the difference between $\xi$ and the identity matrix $\id$.
 \je{This gives}
\begin{align}
 \inv{\ZZ_1\adj}\kl{\id-\xi}&=\kl{\idd - \FF_1\adj + \FFinf{1}\adj}\kl{\id-\xi} \nonumber\\
 &=\id - \id + \id - \xi + \FF_1\adj\kl{\xi} - \id\Trace{\rho_1\adj \xi} \nonumber \\
 &= \xi + \FF_1\adj\kl{\xi} \je{.}
\end{align}
Going from first to second line, we have used $(\FF_1)\adj( \id ) =\id$ and $(\FFinf{1})\adj( \id ) =\id$.  Going from second to third we use the normalisation condition $\SmallTrace{\rho_1\adj \xi}=1$ and cancel the identities.  It is a condition of the lemma that  $\xi = \FF_2^{\dagger}(\xi) $ and so
\begin{equation}
 \inv{\ZZ_1\adj}\kl{\id-\xi} =\kl{\FF_1\adj-\FF_2\adj}\kl{\xi}\je{.}
\end{equation}
Now we multiply by $\ZZ_1\adj$ on both sides and take the $\infty$-norm,
\begin{eqnarray}
 \infNorm{\kl{\id-\x}}&=&\infNorm{\ZZ_1\adj\circ(\FF_1\adj-\FF_2\adj)\kl{\x}} \\
				  &\leq & \frac{\infNorm{\ZZ_1\adj\circ(\FF_1\adj-\FF_2\adj)\kl{\x}}}{\infNorm{\x}}\infNorm{\x}
				  \nonumber \\ 
				  &\leq & \toNorm{\ZZ_1\adj\circ(\FF_1\adj-\FF_2\adj)}{\infty}\infNorm{\x} \nonumber
				 \\
				  &\leq & \toNorm{\ZZ_1\adj\circ(\FF_1\adj-\FF_2\adj)}{\infty}\kl{\infNorm{\id} + \infNorm{\id-\x}} 
				  \je{.}
				 \nonumber
\end{eqnarray}
We now make use of Lemma \ref{AdjointNorm} to deal with the adjoints and use that the identity has 
\je{unit} $\infty$-norm, \je{to get}
 \begin{equation}
 \infNorm{\kl{\id-\x}} \leq \toNorm{(\FF_1-\FF_2)\circ \ZZ_1}{1}\kl{1 + \infNorm{\id-\x}}
 \end{equation}
 Now we subtract the term $\toNorm{\kl{\FF_1-\FF_2}\circ \ZZ_1}{1}\infNorm{\id-\x}$ from both sides,
 \je{to arrive at}
 \begin{align}
  \infNorm{\kl{\id-\x}}&\kl{1-\toNorm{\kl{\FF_1-\FF_2}\circ \ZZ_1}{1}} 
	 \leq \toNorm{\kl{\FF_1-\FF_2}\circ \ZZ_1}{1} .
 \end{align}
Assuming that $1-\toNorm{\kl{\FF_1-\FF_2}\circ \ZZ_1}{1} \geq 0$, which is true for $\FF_1$ and $\FF_2$ being close enough we divide by $1-\toNorm{\kl{\FF_1-\FF_2}\circ \ZZ_1}{1}$,
\je{to get}
 \begin{align}
  \infNorm{\id-\x} \leq \frac{\toNorm{\ZZ_1}{1}\toNorm{\FF_1-\FF_2}{1}}{1-\toNorm{\ZZ_1}{1}\toNorm{\FF_1-\FF_2}{1}}
  \je{.}
 \end{align}
This completes the proof of \je{Lemma} \ref{lemTracePreservingDeviation}.
\end{proof}

Next we turn our attention to proving \je{Lemma} \ref{ZToTau}.  

\begin{proof}
First we reformulate the definition of the fundamental channel,
 \begin{align}
 \ZZZ{\FF}&=\kl{\idd - \FF + \FFinf{}}^{-1} =\sum_{k=0}^{\infty}{\kl{\FF-\FFinf{}}^k} \\
 &= \idd + \sum_{k=1}^{\infty}\kl{\FF^k-\FFinf{}} \nonumber \\
 &= \idd + \sum_{k=0}^{\infty}\kl{\FF-\FFinf{}}^k\kl{\FF-\FFinf{}}  \nonumber\\
 &= \idd + \sum_{k=0}^{\infty}\FF^k\kl{\FF-\FFinf{}}\je{.} \nonumber
\end{align}
In those steps we used that $\kl{\FF_1-\FFinf{}}^k=\FF_1^k-\FFinf{}$ and that $\FFinf{}\kl{\FF-\FFinf{}}=0$.
It is important to note that for any $\sigma$, the expression $\kl{\FF-\FFinf{}}\kl{\sigma}$ is traceless.
\je{This gives}
\begin{align}
 &\toNorm{\ZZ}{1} = \maxs{1}\sigmafrac{\kl{ \idd +\sum_{k=0}^{\infty}\FF^k\circ\kl{\FF-\FFinf{}}}}{1}  \\
	       &\leq \maxs{1}\sigmafrac{\idd}{1} + \sigmafrac{\sum_{k=0}^{\infty}\FF^k\circ\kl{\FF-\FFinf{}}}{1} \nonumber\\
	       &\leq 1 + \maxs{1} \normfrac{\sum_{k=0}^{\infty}\FF^k}{\FF-\FFinf{}\kl{\sigma_1}} \normfrac{\FF-\FFinf{}}{\sigma_1} \nonumber\\
	       &\leq 1 + \maxt{2}\normfrac{\sum_{k=0}^{\infty}\FF^k}{\sigma_2} \maxs{1}\normfrac{\FF-\FFinf{}}{\sigma_1}.
	        \nonumber
\end{align}
We now upper bound both terms separately. We note that $\FF-\FFinf{}=\FF\kl{\idd - \FFinf{}}$.
\je{We therefore get}
\begin{align}
&\maxs{1}\normfrac{( \FF-\FFinf{} )}{\sigma_1} \\
&= \maxs{1}\normfrac{\FF}{\idd-\FFinf{}\kl{\sigma_1}}\normfrac{\idd-\FFinf{}}{\sigma_1}  \nonumber\\
					  &\leq \maxt{1}\normfrac{\FF}{\idd-\FFinf{}\kl{\sigma_1}} \maxs{2}\normfrac{(\idd-\FFinf{})}{\sigma_1}  \nonumber\\
					  &\leq \tec{\FF} \toNorm{\idd-\FFinf{}}{1} \leq \tec{\FF} \kl{\toNorm{\idd}{1}+\toNorm{\FFinf{}}{1}}  \nonumber\\
					  &\leq 2\tec{\FF}. \nonumber
\end{align}
Here, we have used the fact that both $\idd$ and $\FFinf{}$ are trace-preserving and completely positive and Cor.~(\ref{NormOne}). \je{Hence,}
\begin{align}
 \maxt{2}&\normfrac{\sum_{k=0}^{\infty}\FF^k}{\sigma_2} \leq \sum_{k=0}^{\infty}\maxt{1}\frac{\oneNorm{\FF^k\kl{\sigma_1}}}{\oneNorm{\sigma_1}} \nonumber\\
 &=  \sum_{k=0}^{\infty}\tec{\FF^k} \leq \sum_{k=0}^{\infty}\tec{\FF}^k = \tel{1-\tec{\FF}}.
\end{align}
Thus we can conclude that
\begin{equation}
 \toNorm{\ZZ}{1} \leq 1+ \frac{2\tec{\FF}}{1-\tec{\FF}} = \frac{1+\tec{\FF}}{1-\tec{\FF}}.
 \end{equation}
This completes the proof of \je{Lemma} \ref{ZToTau}.
\end{proof}


\subsection{Change of the ergodicity}
\label{secErgodUpdate}
After re-gauging, the ergodicity coefficient changes. This change can be bounded in the following way.

\begin{lm}[\je{Bounding the change of the ergodicity coefficient}]\label{BoundingTau}
 Let $\FF_1$ and $\FF_2$ be trace-preserving, c.p.\ maps. Then
 \begin{equation}
 \bkl{\tec{\FF_1}-\tec{\FF_2}} \leq \tec{\FF_1-\FF_2}.
 \end{equation}
\end{lm}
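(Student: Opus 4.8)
The plan is to observe that, restricted to the linear subspace of trace-zero operators, the ergodicity coefficient $\tec{\cdot}$ is a seminorm on linear maps, so that the desired estimate is nothing but its reverse triangle inequality. The one point that needs care is that $\FF_1-\FF_2$ is not a quantum channel and has no steady state; however, since $\FF_1$ and $\FF_2$ are trace-preserving, each of them sends $\{\sigma:\Trace{\sigma}=0\}$ into itself, hence so does $\FF_1-\FF_2$ by linearity, and the three quantities $\tec{\FF_1}$, $\tec{\FF_2}$, $\tec{\FF_1-\FF_2}$ are then all defined as maxima over the same set $\{\sigma:\Trace{\sigma}=0\}$. (Equivalently, if one insists on the steady-state formulation of $\tau$, one reads the statement under the standing assumption that $\FF_1$ and $\FF_2$ share a steady state $\rho$ and interprets $\tec{\FF_1-\FF_2}$ relative to that common $\rho$; the argument below is unchanged.)

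First I would fix an arbitrary $\sigma\neq 0$ with $\Trace{\sigma}=0$ and decompose $\FF_1(\sigma)=\FF_2(\sigma)+(\FF_1-\FF_2)(\sigma)$. Applying the triangle inequality for the trace norm and then the definitions of the two ergodicity coefficients on the right,
\begin{equation}
 \oneNorm{\FF_1(\sigma)} \leq \oneNorm{\FF_2(\sigma)} + \oneNorm{(\FF_1-\FF_2)(\sigma)} \leq \big(\tec{\FF_2}+\tec{\FF_1-\FF_2}\big)\oneNorm{\sigma}.
\end{equation}
Dividing by $\oneNorm{\sigma}$ and maximising over trace-zero $\sigma$ gives $\tec{\FF_1}\leq \tec{\FF_2}+\tec{\FF_1-\FF_2}$, i.e.\ $\tec{\FF_1}-\tec{\FF_2}\leq \tec{\FF_1-\FF_2}$. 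Interchanging the roles of $\FF_1$ and $\FF_2$ yields $\tec{\FF_2}-\tec{\FF_1}\leq \tec{\FF_2-\FF_1}$, and since $\oneNorm{-M(\sigma)}=\oneNorm{M(\sigma)}$ for every $\sigma$ we have $\tec{\FF_2-\FF_1}=\tec{\FF_1-\FF_2}$. The two one-sided bounds combine to $\bkl{\tec{\FF_1}-\tec{\FF_2}}\leq\tec{\FF_1-\FF_2}$, which is the claim.

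I do not expect any genuine obstacle: the content of the lemma is exactly the reverse triangle inequality for the seminorm $\tau$, and the only thing worth flagging is the definitional point above about maps that are not channels. This subadditivity of $\tau$ is complementary to its sub-multiplicativity recorded in Sec.~\ref{secErgod}, and together they are what allow the regauged ergodicity coefficient to be controlled in Lemma~\ref{TauUpdate}, where $\FF_1$ plays the role of the ideal update $\AAn^4$ and $\FF_2$ that of its perturbed, re-gauged counterpart.
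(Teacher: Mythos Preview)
Your proof is correct and follows essentially the same route as the paper: both arguments boil down to the reverse triangle inequality for the seminorm $\tau$ on the trace-zero subspace, obtained from the ordinary triangle inequality for $\oneNorm{\cdot}$. Your explicit remark that $\FF_1-\FF_2$ preserves the trace-zero subspace (so that $\tec{\FF_1-\FF_2}$ is well-defined as a maximum over $\Trace{\sigma}=0$) is a useful clarification that the paper leaves implicit.
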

\begin{proof}
Without loss of generality we assume $\tec{\FF_1}$ to be \je{larger} than $\tec{\FF_2}$.
\je{We then have}
\begin{align}
 |\tec{\FF_1}-\tec{\FF_2}|
 &=\maxt{1}\oneoneNorm{\FF_1}{\sigma_1}-\maxt{2}\oneoneNorm{\FF_2}{\sigma_2}  \\
			  &\leq\maxt{1}\oneoneNorm{\FF_1}{\sigma_1}-\oneoneNorm{\FF_2}{\sigma_1}\nonumber \\
			  &\leq\maxt{1}\oneoneNorm{(\FF_1-\FF_2)}{\sigma_1}
			  \nonumber\\
			  & =\tec{\FF_1-\FF_2}	.\nonumber		  
\end{align}
\end{proof}

Next, we still need to bound the ergodicity coefficient for $\MS_n\kl{\AAn^4+\PP_n}\MS^{-1}_n$. For this purpose we use Lemma \ref{BoundingTau} (in the second step),
\je{to get}
\begin{align}
 &\tec{\MS_n\kl{\AAn^4+\PP_n}\MS^{-1}_n} \nonumber \\
 &\leq \tec{\AAn^4} + \bkl{\tec{\AAn^4}-\tec{\MS_n\kl{\AAn^4+\PP_n}\MS^{-1}_n}} \nonumber \\
 &\leq \tec{\AAn}^4 + \tec{\AAn^4-\MS_n\kl{\AAn^4+\PP_n}\MS^{-1}_n}.
\end{align}
Subsequently, we bound the second term\je{,}
 \begin{align}
 \AAn^4&-\MS_n\kl{\AAn^4+\PP_n}\MS^{-1}_n 
 =\AAn^4-\MS_n \AAn^4\MS^{-1}_n - \MS_n \PP_n\MS^{-1}_n  \\
 &=(\idd-\MS_n)\AAn^4 + \MS_n \AAn^4 (\idd-\MS^{-1}_n) - \MS_n \PP_n\MS^{-1}_n.\nonumber
\end{align}
Next, we use the fact that $\tec{\FF_1+\FF_2} \leq \tec{\FF_1}+\tec{\FF_2}$ and that $\tec{\FF} \leq \toNorm{\FF}{1}$ in the following steps,
\begin{align}
 \tec{\AAn^4-\MS_n\kl{\AAn^4+\PP_n}\MS^{-1}_n} 
 &\leq \tec{(\idd-\MS_n)\AAn^4} + \tec{\MS_n \AAn^4 (\idd-\MS^{-1}_n)}  \\
 &  + \tec{\MS_n \PP_n\MS^{-1}_n} \nonumber\\
 & \leq \tec{\idd-\MS_n}\tec{\AAn^4} + \toNorm{\MS_n \AAn^4 (\idd-\MS^{-1}_n)}{1} \nonumber\\
 & + \toNorm{\MS_n \PP_n\MS^{-1}_n}{1}\nonumber \\
 & \leq \oneToOne{\idd-\MS_n}\tec{\AAn}^4 \nonumber\\
 &  + \oneToOne{\MS_n}\oneToOne{\AAn^4}\oneToOne{\idd-\MS^{-1}_n} \nonumber\\
 &  + \oneToOne{\MS_n}\oneToOne{\PP_n}\oneToOne{\MS^{-1}_n}.\nonumber
\end{align}
Using Eqs.\ \eqref{eq:xi1} to \eqref{eq:SS2} as well as \eqref{eq:kappa}, we conclude that
\begin{align}
 &\tec{\AAn^4-\MS_n\kl{\AAn^4+\PP_n}\MS^{-1}_n} 
 \leq \tec{\AAn}^4 3\Delta_n + \frac{1+\Delta_n}{1-\Delta_n}3\Delta_n + \frac{1+\Delta_n}{1-\Delta_n}\oneToOne{\PP_n}.
\end{align}
Therefore,
\begin{align}
 &\tec{\MS_n\kl{\AAn^4+\PP_n}\MS^{-1}_n} 
 \leq \tec{\AAn^4}(1 + 3\Delta_n) + \frac{1+\Delta_n}{1-\Delta_n} \left(3\Delta_n + \oneToOne{\PP_n}\right) 
 \je{.}
\end{align}
Next, we can upper bound the new ergodicity $\tec{\AAnn}$,
\je{to get}
\begin{align}
 \tec{\AAnn} &= \tec{\frac{\MS_n\kl{\AAn^4+\PP_n}\MS^{-1}_n}{1+\delta_R}} 
 \leq \tec{\MS_n\kl{\AAn^4+\PP_n}\MS^{-1}_n}.
\end{align}
It follows, from assuming the worst case of a growing ergodicity, that
\begin{align}
 &\tec{\AAnn} 
 \leq \tec{\AAn}^4\kl{1+3\Delta_n} + \frac{1+\Delta_n}{1-\Delta_n}\kl{3\Delta_n + \oneToOne{\PP_n}}.
\end{align}

\subsection{Simplifying parameter space}
\label{secAnaCon}
We present our bounds making use of \je{the} abbreviations $\tau_n\je{:=}\tec{A_n}$, $Z_n\je{:=}({1+\tau_n^4})/({1-\tau_n^4} )$ and $P_n\je{:=}\toNorm{\PP_n}{1}$. \je{These quantities can be bounded as follows,}
 \begin{align}
 P_n &\leq 2\eee_n^2+5\eee_n^4, \\ 
 \Delta_n &\leq \frac{2Z_nP_n}{1-2Z_nP_n}, \\
 \tau_{n+2} &\leq \tau_n^4\kl{1+3\Delta_n}+\frac{1+\Delta}{1-\Delta_n}\kl{3\Delta_n+P_n}, \label{eq:TauUpdate}\\ 
 \eee_{n+2} &\leq 4\frac{1+\Delta_n}{1-\Delta_n}\kl{\eee_n^2+\eee_n^4} .
\end{align}
To derive a manageable convergence threshold for $\eee_0$ depending $\tau_0$, we choose the ansatz
\begin{equation}\label{eq:ansatz}
 Z_n\eee_n \leq \tel{7}. 
\end{equation} 
Next, we show that given \eqref{eq:ansatz} for all $n$, $\eee_n$ converges to zero. Subsequently, we demonstrate that if \eqref{eq:ansatz} is fulfilled
at step $n$, it is also fulfilled for $n+2$. First, we assert that \eqref{eq:ansatz} implies
\begin{equation}
 \eee_n \leq \tel{7}\frac{1-\tau_n^4}{1+\tau_n^4} \leq \tel{7}. \label{eq:EpsBound}
\end{equation}
Now we can upper bound $P_n$,
\begin{equation}\label{eq:PBound}
 P_n \leq \kl{2+5\eee_n^2}\eee_n^2 \leq \frac{11}{5} \eee_n^2.
\end{equation}
Furthermore 
\begin{align}
 \Delta_n &\leq \frac{2Z_nP_n}{1-2Z_nP_n} \leq \frac{2\frac{11}{5}Z_n\eee_n^2}{1-2\frac{11}{5}Z_n\eee_n^2}
 \leq \frac{\frac{2\cdot11}{5\cdot7}\eee_n}{1-\frac{2\cdot11}{5\cdot7}\eee_n} =\frac{\frac{22}{35}\eee_n}{1-\frac{22}{35}\eee_n} \leq \frac{7}{10} \eee_n   \label{eq:DeltaBound} 
\end{align}
is true for all $\eee_n \leq \tel{7}$ and leads us to
\begin{equation}
 \frac{1+\Delta_n}{1-\Delta_n} \leq 1 + \frac{31}{20} \eee_n \label{eq:FracBound}.
\end{equation}
Finally, we can upper bound $\eee_{n+2}$ \je{as}
\begin{equation}
 \eee_{n+2} \leq 4\kl{1 + \frac{31}{20} \eee_n}\kl{1+ \eee_n^2} \leq 5\eee_n^2. \label{eq:Converge}
\end{equation}
This converges to zero for $\eee_0 < \tel{5}$ which is implied by \eqref{eq:ansatz}. Next, we show that if \eqref{eq:ansatz} is true for step $n$ it also holds
for step $n+2$. We look at the update rule of $\tau_n$, \eqref{eq:TauUpdate}, and insert the bounds \eqref{eq:PBound}, \eqref{eq:DeltaBound} and \eqref{eq:FracBound}, \je{to find}
\begin{align}
 \tau_{n+2} &\leq \tau_n^4\kl{1+\frac{21}{10}\eee_n} + \kl{1+ \frac{31}{20}}\kl{\frac{21}{10} + \frac{11}{5}\eee_n} \eee_n \nonumber\\
    &\leq \tau_n^4 + \tau_n^4\frac{21}{10}\eee_n + 3 \eee_n .
\end{align}
Inserting \eqref{eq:EpsBound} yields
\begin{equation} 
 \tau_n^4 + \tau_n^4\frac{3}{10}\frac{1-\tau_n^4}{1+\tau_n^4} + \frac{3}{7} \frac{1-\tau_n^4}{1+\tau_n^4}.
\end{equation}
We \je{investigate} two different cases, $\frac{1}{2}\leq \tau_n \leq 1$ and $0 \leq \tau_n \leq \frac{1}{2}$.
In the first regime, we know
\begin{equation}
 \tau_{n+2} \leq \tau_n^4 + \tau_n^4\frac{3}{10}\frac{1-\tau_n^4}{1+\tau_n^4} + \frac{3}{7} \frac{1-\tau_n^4}{1+\tau_n^4} \leq \tau_n.
\end{equation} 
Therefore, if \eqref{eq:ansatz} is fulfilled, $\tau_{n+2}\leq \tau_n$ and from \eqref{eq:Converge} we know $\eee_{n+2} \leq \eee_n$. If $\eee_n$ and $\tau_n$ satisfy ansatz
\eqref{eq:ansatz}, than $\eee_{n+2}$ and $\tau_{n+2}$ necessarily do the same, since they are both smaller.
In the second regime we know
\begin{equation} 
 \tau_{n+2} \leq \tau_n^4 + \tau_n^4\frac{3}{10}\frac{1-\tau_n^4}{1+\tau_n^4} + \frac{3}{7} \frac{1-\tau_n^4}{1+\tau_n^4} \leq \frac{1}{2},
\end{equation}
so once $\tau_n$ is smaller than $\frac{1}{2}$ it stays smaller than $\frac{1}{2}$. Therefore, 
\begin{equation}
 Z_{n+2}\eee_{n+2} \leq \frac{17}{15}5\eee_n^2 \leq \frac{17}{15}\frac{5}{7} \eee_n \leq \tel{7},
\end{equation}
and the ansatz holds true in step $n+2$. By induction we can conclude that if \eqref{eq:ansatz} is satisfied for $n=0$ it stays satisfied for all $n=2k \with k\in \mathbb{N}$.
The assumption for step $n=0$ can be reformulated in the form
\begin{equation}
 \eee_0 \leq \tel{7} \frac{1-\tau_0^4}{1+\tau_0^4}.
\end{equation}
The speed of the convergence is given by
\begin{equation}
 \eee_n \leq \tel{5}\kl{5\eee_0}^{2^{n/2}}.
\end{equation}
This concludes the derivation of a threshold for the recurrence protocol that ensures convergence to maximally entangled pure states.

\section{Bounding observables}
\label{secPhysCon}


So far we always dealt with unnormalised MPOs, since it allows the explicit description of the MPO matrices independent of the length of the MPO. 
An MPO with the same matrices but of a different length will generally have a different norm.
We now \je{turn to showing} that if the transfer matrix is trace-preserving the norm of an MPO will be exponentially close to unity in $L$, the length of the chain. \je{We start by stating a helpful lemma.}

\begin{lm}[\je{Trace bound}]\label{lem:TraceBound}
 If $E$ is isomorphic to $\EE$, a trace-preserving completely positive map, then 
 \begin{equation}
  1 - d^{5/2}\tec{\EE}^L \leq \Trace{E^L} \leq 1+ d^{5/2}\tec{\EE}^L,
  \end{equation}
  where $\tec{\EE}$ is the ergodicity coefficient of $\EE$.
\end{lm}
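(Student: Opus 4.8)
The plan is to isolate the contribution of a fixed point of the trace-preserving channel $\EE$ and to show that everything else is suppressed at the rate set by the ergodicity coefficient. First I would pass from the transfer matrix to the channel: under the isomorphism between $E$ and $\EE$, the $d\times d$ matrix $E$ is the matrix (Liouville) representation of the superoperator $\EE$, so that $\Trace{E^{L}}=\tr(\EE^{L})$, where $\tr$ henceforth denotes the trace of a superoperator. Being completely positive and trace preserving, $\EE$ possesses a stationary density operator $\rho_\ast$ with $\EE(\rho_\ast)=\rho_\ast$ and $\Trace{\rho_\ast}=1$. Introduce the idempotent channel $\EE^{\infty}(\sigma):=\Trace{\sigma}\rho_\ast$. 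Trace preservation of $\EE$ together with stationarity of $\rho_\ast$ give $\EE\,\EE^{\infty}=\EE^{\infty}\EE=(\EE^{\infty})^{2}=\EE^{\infty}$, and a binomial expansion of these two commuting maps yields $(\EE-\EE^{\infty})^{L}=\EE^{L}-\EE^{\infty}$ for all $L\geq1$. Evaluating the superoperator trace in a Hilbert--Schmidt orthonormal basis $\{\ket{a}\bra{b}\}$ of the operator space gives $\tr(\EE^{\infty})=\Trace{\rho_\ast}=1$, so that
\begin{equation}
  \Trace{E^{L}} = 1 + \tr\!\big((\EE-\EE^{\infty})^{L}\big) .
\end{equation}

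Second I would bound the remainder. The key observation is that $(\EE-\EE^{\infty})(\sigma)=\EE\big(\sigma-\Trace{\sigma}\rho_\ast\big)$ lands in the traceless subspace, which $\EE$ maps into itself; combined with sub-multiplicativity of the ergodicity coefficient (Sec.~\ref{secErgod}) and $\oneNorm{\sigma-\Trace{\sigma}\rho_\ast}\leq 2\oneNorm{\sigma}$ this gives $\toNorm{(\EE-\EE^{\infty})^{L}}{1}\leq 2\,\tec{\EE}^{L}$. Evaluating the superoperator trace term by term in the same basis, $|\tr(\mathcal{G})|\leq d\,\toNorm{\mathcal{G}}{1}$ for every superoperator $\mathcal{G}$ on the $d$-dimensional operator space, whence
\begin{equation}
  \big|\Trace{E^{L}}-1\big| \leq 2d\,\tec{\EE}^{L} \leq d^{5/2}\,\tec{\EE}^{L},
\end{equation}
the final step being a deliberately crude overestimate ($2d\leq d^{5/2}$ for $d\geq2$, the case $d=1$ being trivial). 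Rearranging gives exactly the asserted two-sided bound.

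The argument is short, and the only place that calls for any care is the bookkeeping of the dimensional prefactor -- precisely which power of $d$ is incurred when the $1\!\to\!1$ bound on $(\EE-\EE^{\infty})^{L}$ is converted into a bound on its superoperator trace; since only an upper bound is needed, absorbing it into the generous constant $d^{5/2}$ is harmless, and one does not even need to single out $\rho_\ast$ as unique, because all the identities above hold for an arbitrary stationary density operator. Equivalently and somewhat more sharply, one may run the same computation spectrally: $\Trace{E^{L}}=\sum_{j}\mu_{j}^{L}$ with $\mu_{1}=1$ the Perron eigenvalue of $\EE$ (inherited from trace preservation) and $|\mu_{j}|\leq\tec{\EE}$ for $j\geq2$, the latter because the corresponding eigenvectors are traceless, which yields the remainder $(d-1)\,\tec{\EE}^{L}$.
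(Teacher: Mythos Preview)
Your decomposition $\Trace{E^L}=1+\tr\bigl((\EE-\EE^\infty)^L\bigr)$ and the bound $\toNorm{(\EE-\EE^\infty)^L}{1}\le 2\,\tec{\EE}^L$ are both fine, and structurally this is the same split the paper makes: the identity contribution in a Pauli basis gives the ``$1$'' and the traceless remainder is controlled by ergodicity.

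The slip is exactly where you flag caution. The inequality $|\tr(\mathcal G)|\le d\,\toNorm{\mathcal G}{1}$ is false for superoperators on $M_d$: take $\mathcal G=\mathrm{id}$, for which $\tr(\mathcal G)=d^2$ while $\toNorm{\mathcal G}{1}=1$. Evaluating the superoperator trace termwise in the basis $\{\ket a\bra b\}$ produces $d^2$ terms, each bounded by $\toNorm{\mathcal G}{1}$, so the honest constant is $d^2$. Your main line then yields $2d^2\,\tec{\EE}^L$, which exceeds $d^{5/2}\tec{\EE}^L$ for $d=2,3$ and so does not establish the lemma as stated.

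Two fixes are available. First, your spectral alternative at the end is actually the cleanest route: the $d^2$ eigenvalues of $\EE$ satisfy $\mu_1=1$ and, for $j\ge 2$, the associated eigenoperators are traceless (from trace preservation when $\mu_j\ne 1$; if $1$ is degenerate then $\tec{\EE}\ge 1$ and the bound is vacuous), hence $|\mu_j|\le\tec{\EE}$. This gives $|\Trace{E^L}-1|\le (d^2-1)\,\tec{\EE}^L\le d^{5/2}\tec{\EE}^L$, which is sharper than the paper's constant. Second, the paper's own argument avoids the detour through $\toNorm{\cdot}{1}$ altogether: it bounds each traceless-basis term directly by H\"older, $|\Trace{\sigma\,\EE^L(\sigma)}|\le \|\sigma\|_\infty\,\tec{\EE}^L\|\sigma\|_1$, and uses $\|\sigma\|_\infty\le\|\sigma\|_2=1$, $\|\sigma\|_1\le\sqrt d\,\|\sigma\|_2=\sqrt d$ to get $\sqrt d\,\tec{\EE}^L$ per term and $d^{5/2}\tec{\EE}^L$ after summing $d^2$ terms. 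Either repair closes the gap.
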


\begin{proof}
 
The proof is based on the idea to write the trace explicitly in the matrix picture for $E$ as a sum over a basis. We chose the generalised Pauli matrices $P$, which vectorised and normalised
will give us a convenient basis for the trace. We use that except for the identity, all these matrices have trace equal to zero.
\begin{align}
 \Trace{E^{L}}
 &=\Braket{{\id} d^{-1/2}}{E^{L_n-1}_n}{{\id} d^{-1/2}} + \sum_{\sigma\in P/\id}{\Braket{\sigma}{E^{L_n}_n}{\sigma}}.
 \end{align}
 Now we change into the channel picture to use the ergodicity coefficient.
 \begin{align}
 \Trace{E^{L}} &\geq \tel{d}\Trace{\EE\adj\kl{\id}} -  \max_{\Trace{\sigma}=0} d^2\frac{\Trace{\sigma\EEn^{L_n}\kl{\sigma}}}{\norm{\sigma}{2}^2}.
 \end{align}
 Next, we can use that the identity is a fixed point of $\EE\adj$. We can freely lower bound by changing from $\norm{\sigma}{2}$ to $\infNorm{\sigma}$, but we have to introduce a factor ${d}^{1/2}$
 to replace $\norm{\sigma}{2}$ with $\oneNorm{\sigma}$. \je{This gives}
 \begin{align}
 \Trace{E^{L}}
 &\geq 1- d^2\sqrt{d} \max_{\Trace{\sigma}=0} \frac{\Trace{\sigma\EE^{L}\kl{\sigma}}}{\infNorm{\sigma}\oneNorm{\sigma}} \\
 &\geq 1- d^{5/2} \max_{\Trace{\sigma_1}=0} \max_{\Trace{\sigma_2}=0} \frac{\Trace{\sigma_1\EE^{L}\kl{\sigma_2}}}{\infNorm{\sigma_1}\oneNorm{\sigma_2}} \nonumber\\
 &\geq 1- d^{5/2} \max_{\Trace{\sigma_2}=0} \frac{\oneNorm{\EE^{L}\kl{\sigma_2}}}{\oneNorm{\sigma_2}}\nonumber \\
 &\geq 1- d^{5/2} \tec{\EE}^{L}.\nonumber
\end{align}
The upper bound can be achieved analogously.
\end{proof}


We now show that it is possible to upper bound the expectation value of an Hermitian operator with its 1-to-1 norm. The transfer matrix of an operator supported on $r$
sites is denoted as $E_O^r$, and the isomorphic channel as $\EE_O^r$.

\begin{lm}[\je{Bounds to transfer operators}]\label{lem:NormBound}
 If $\EE$ is a trace-preserving completely positive map and $\EE_O^r$ is an operator channel supported on $r$ sites, then
 \begin{equation}
  \frac{\Trace{E_O^rE^{L-r}}}{\Trace{E^L}} \leq \toNorm{E_O^r}{1} \frac{1+d^{5/2}\tec{\EE}^{L-r}}{1-d^{5/2}\tec{\EE}^{L}}
  \end{equation}
\ec{and by setting $r=1$ and $E_{O}=A,B,C,D$ we find the local fidelity w.r.t.\ the respective Bell states.}
\end{lm}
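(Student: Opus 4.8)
The plan is to follow the template of the proof of Lemma~\ref{lem:TraceBound}: bound the numerator $\Trace{E_O^r E^{L-r}}$ from above, bound the denominator $\Trace{E^L}$ from below (this is exactly Lemma~\ref{lem:TraceBound}, which gives $\Trace{E^L}\ge 1-d^{5/2}\tec{\EE}^{L}>0$ once $L$ is large enough), and divide. For the numerator I would expand the bond-space trace over an orthonormal basis of vectorised, normalised generalised Pauli operators $\sigma$, so that $\|\sigma\|_2=1$ and all but $\sigma=\id/\sqrt d$ are traceless,
\begin{equation}
\Trace{E_O^r E^{L-r}} = \Braket{\id d^{-1/2}}{E_O^r E^{L-r}}{\id d^{-1/2}} + \sum_{\sigma\neq\id}\Braket{\sigma}{E_O^r E^{L-r}}{\sigma},
\end{equation}
and treat the ``bulk'' (identity) term and the ``mixing'' (traceless) terms separately.

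For the identity term I would move the $L-r$ plain transfer matrices onto $\ket{\id d^{-1/2}}$ from the right and pass to the channel picture. Since $\EE$ is completely positive and trace-preserving, $\EE^{L-r}(\id)$ is a positive operator with $\oneNorm{\EE^{L-r}(\id)}=\Trace{\EE^{L-r}(\id)}=d$, so with $|\Trace{X}|\le\oneNorm{X}$ and the definition of the $1\to1$ norm,
\begin{equation}
\big|\Braket{\id d^{-1/2}}{E_O^r E^{L-r}}{\id d^{-1/2}}\big|=\tfrac1d\big|\Trace{\EE_O^r\kl{\EE^{L-r}(\id)}}\big|\le\tfrac1d\,\toNorm{E_O^r}{1}\,\oneNorm{\EE^{L-r}(\id)}=\toNorm{E_O^r}{1}.
\end{equation}
This term produces the leading $1$ in the claimed numerator; the key point is that trace preservation lets the $L-r$ neutral transfer matrices collapse without any factor of $d$, while only the single operator insertion $\EE_O^r$ costs a $\toNorm{E_O^r}{1}$.

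For each of the fewer than $d^2$ traceless Paulis, pushing $\EE^{L-r}$ onto $\sigma$ keeps it traceless, so sub-multiplicativity of the ergodicity coefficient gives $\oneNorm{\EE^{L-r}(\sigma)}\le\tec{\EE}^{L-r}\oneNorm{\sigma}$; then H{\"o}lder's inequality and the $1\to1$ bound on $\EE_O^r$ give $|\Braket{\sigma}{E_O^r E^{L-r}}{\sigma}|\le\infNorm{\sigma}\,\toNorm{E_O^r}{1}\,\tec{\EE}^{L-r}\,\oneNorm{\sigma}$, and the same $d^{1/2}$-juggling as in Lemma~\ref{lem:TraceBound} ($\infNorm{\sigma}\le\|\sigma\|_2$, $\oneNorm{\sigma}\le d^{1/2}\|\sigma\|_2$, $\|\sigma\|_2=1$) bounds each term by $d^{1/2}\toNorm{E_O^r}{1}\tec{\EE}^{L-r}$; summing gives at most $d^{5/2}\toNorm{E_O^r}{1}\tec{\EE}^{L-r}$. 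Combining, $\Trace{E_O^r E^{L-r}}\le\toNorm{E_O^r}{1}\kl{1+d^{5/2}\tec{\EE}^{L-r}}$, and dividing by $\Trace{E^L}\ge 1-d^{5/2}\tec{\EE}^{L}$ yields the claim. The final remark is then immediate: taking $r=1$ and $E_O\in\{A,B,C,D\}$, the ratio $\Trace{E_O E^{L-1}}/\Trace{E^L}$ is exactly the probability of the corresponding Bell outcome (e.g.\ $p(\phi^{-})$ for $E_O=B$, as in Eq.~(\ref{FidBound})), so the bound controls the local infidelity.

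The main obstacle is organisational rather than computational: one must keep $\EE_O^r$ and the trace-preserving block $\EE^{L-r}$ in an order in which trace preservation can collapse the $L-r$ neutral sites at no cost, and one must isolate the non-steady-state (traceless) sector so that its decay in $L$ is governed by the ergodicity coefficient rather than by a crude operator norm. Once this decomposition is in place, every remaining estimate is a routine application of H{\"o}lder's inequality, sub-multiplicativity of the ergodicity coefficient, and the norm relations already assembled in Sec.~\ref{Norms} and in Lemma~\ref{lem:TraceBound}.
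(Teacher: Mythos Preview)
Your proposal is correct and follows essentially the same route as the paper's own proof: expand the bond-space trace over the normalised generalised Pauli basis, bound the identity contribution by $\toNorm{\EE_O^r}{1}$ using trace preservation of $\EE$, bound the traceless contributions via H\"older and sub-multiplicativity of the ergodicity coefficient to get the $d^{5/2}\tec{\EE}^{L-r}$ correction, and invoke Lemma~\ref{lem:TraceBound} for the denominator. The only cosmetic difference is that the paper bounds the traceless sum by $d^2$ times the maximum over traceless $\sigma$ and then decouples $\sigma_1,\sigma_2$, whereas you bound each of the $d^2-1$ terms individually and sum; both arrive at the identical estimate.
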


\begin{proof} \je{We start from}
 \begin{align}
 &\Trace{E^r_OE^{L-r}} 
 = \Braket{{\id} d^{-1/2}}{E^r_OE^{L-r}}{{\id} d^{-1/2}} + \sum_{\sigma\in P/\id}{\Braket{\sigma}{E^r_OE^{L-r}_n}{\sigma}} .
\end{align}
As before, we switch to the channel picture to deal with the first term,
\begin{align}
 \Braket{{\id} d^{-1/2}}{E^r_OE^{L_n-r}_n}{{\id} d^{-1/2}} &= \frac{1}{d}\Trace{\EE_O^r\EE^{L-r}\kl{\id}} 
 = \frac{\Trace{\EE_O^r\EE^{L-r}\kl{\id}}}{\Trace{\id}}, \\
 &\leq \toNorm{\EE_O^r\EE^{L-1}}{1},  \nonumber\\
 &\leq \toNorm{\EE_O^r}{1}\toNorm{\EE^{L-1}}{1},  \nonumber\\
 &= \toNorm{\EE_O^r}{1}. \nonumber
\end{align}
The second term will be handled accordingly. In this way, we find
\begin{align}
	\hspace{-1cm}\sum_{\sigma\in P/\id}{\Braket{\sigma}{E_O^rE^{Lr}_n}{\sigma}} 
& \leq d^2 \max_{\Trace{\sigma}=0} {\Braket{\sigma}{E_O^rE^{L-1}_n}{\sigma}} \\ 
 &= d^2 \max_{\Trace{\sigma}=0} \frac{\Trace{\sigma\EE_O^r\EEn^{L_n-1}\kl{\sigma}}}{\norm{\sigma}{2}^2}  \nonumber \\
 &= d^2 \max_{\Trace{\sigma}=0} \frac{\Trace{\sigma\EE_O^r\EEn^{L_n-1}\kl{\sigma}}}{\norm{\sigma}{2}^2}  \nonumber \\
 &\leq d^2 \max_{\Trace{\sigma_1}=0} \max_{\Trace{\sigma_2}=0} \frac{\Trace{\sigma_1\EE_O^r\EEn^{L_n-1}\kl{\sigma_2}}}{\infNorm{\sigma_1}\norm{\sigma_2}{2}}. \nonumber
 \end{align}
 In the last step we made again use of $\infNorm{\sigma}\leq\norm{\sigma_2}{2}$.
 \je{We can hence conclude that}
 \begin{align}
	 \sum_{\sigma\in P/\id}{\Braket{\sigma}{E_O^rE^{Lr}_n}{\sigma}}
& \leq d^2 \max_{\Trace{\sigma_2}=0} \frac{\oneNorm{\EE_O^r\EEn^{L_n-1}\kl{\sigma_2}}}{\norm{\sigma_2}{2}}
\\
 &= d^2\sqrt{d} \max_{\Trace{\sigma_2}=0} \frac{\oneNorm{\EE_O^r\EEn^{L_n-1}\kl{\sigma_2}}}{\oneNorm{\EEn^{L_n-1}\kl{\sigma_2}}}\frac{\oneNorm{\EEn^{L_n-1}\kl{\sigma_2}}}{\oneNorm{\sigma_2}} \nonumber\\
 &= d^{5/2} \toNorm{\EE_O^r}{1}\max_{\Trace{\sigma_2}=0} \frac{\oneNorm{\EEn^{L_n-1}\kl{\sigma_2}}}{\oneNorm{\sigma_2}} \nonumber\\
 &= d^{5/2} \toNorm{\EE_O^r}{1} \tec{\EE}^{L-r}.  \nonumber
\end{align}
For the denominator we use the upper bound from Lemma \ref{lem:TraceBound} and this completes the proof. \end{proof}

Alas, lower bounding in the same fashion is not generally possible without knowing more about the eigenvectors of $E^r_O$. Lemma \ref{lem:NormBound} tells us that upper bounds for physical expectation values can easily be derived by looking at the channel norms, given a long chain and that the transfer matrix is trace-preserving.
This raises the question whether the transfer matrix remains trace-preserving over the course of the iteration.
This is only the case for deterministic protocols without post-selection since no matrices are disregarded and thus the transfer matrix $\EEn$ is a power of the previous transfer matrix.
Consequently, the proof of a threshold was much more straight forward in the deterministic case.

\section{Physical model}
\je{The machinery presented above is applicable to large classes of natural preparations
that give rise to correlations resulting from memory effects. We have seen in what way generically
the entanglement distillation protocols can be}
\jg{ successfully applied }
\je{in the presence of correlations. In this section, we present a physical model that 
highlights the functioning of the scheme for a particular choice of a Hamiltonian that reflects
a memory effect. This is not meant to be a particularly feasible model -- this will  of course  depend
on the physical architecture, and our formalism is applicable to all those scenarios. It is 
rather meant as a paradigmatic example, to stress the general functioning of the scheme.
}
\label{secPhysHam}
\jg{We start with the following set of initially uncorrelated Werner states} with $F_0$,  
\begin{align}
	\rho_0 &= F_0\phi^+  
	+
	 \frac{1-\je{F_0}}{3}\kl{\phi^-+\psi^++ \psi^-}.
\end{align}
\jg{These states subsequently undergo a unitary interaction with a memory bit on Bob's side,}
$U\kl{t,J}:=\exp\kl{itH}$, for $J>0$,
with
\begin{equation}	
H=J\kl{X\otimes X+Y\otimes Y+Z\otimes Z} + Z\otimes \id + \id\otimes Z.
\end{equation}
We further implement a de-phasing channel for \je{a} forgetful memory, \jg{which is applied to the memory in-between two interactions with Bob's qubit,}
 \begin{equation}
  D_c\kl{\sigma} = \kl{1-c_D}\sigma + c_D\id
 \end{equation}
 \je{for a suitable $c_D\in [0,1]$.}
 This procedure is depicted in \je{Fig.} 5. \je{It should be clear that while this is a
 particular example of a Hamiltonian and dissipative map chosen, this feature }\jg{of having }\je{ an interplay
 between Hamiltonians reflecting interactions with memory and dissipative channels is generic.}
 
 \begin{figure}[htb]
%
%
%
%
\includegraphics[width=0.53\textwidth]{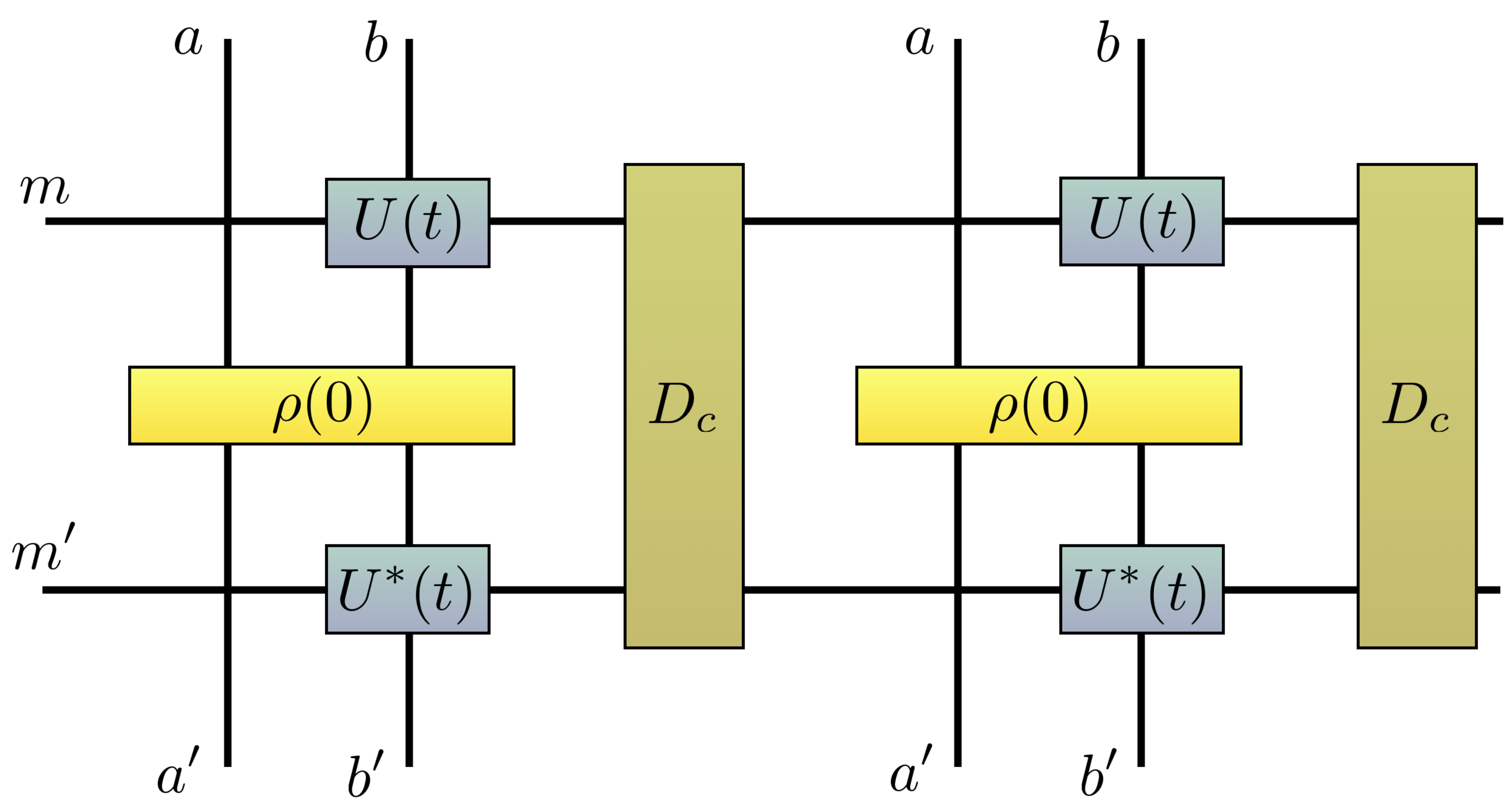}
\label{fig:physHam}
\caption{MPO diagram of the process generating our physical example.}
\end{figure}
We compare the performance of the recurrence protocol on (i) sequentially prepared states with implemented memory with (ii) perfect memoryless i.i.d.\ distributions of the same local fidelity.
As a measure of how a specific memory setting performs we introduce the notion of relative noise
\begin{equation}
  \gamma_n=\frac{1-F_n^{\mathrm{MPO}}}{1-F_n^{\mathrm{i.i.d.}}}
\end{equation}
after $n$ rounds of iteration.

\begin{figure}[ht!]
	\centering
\includegraphics[width=0.55\textwidth]{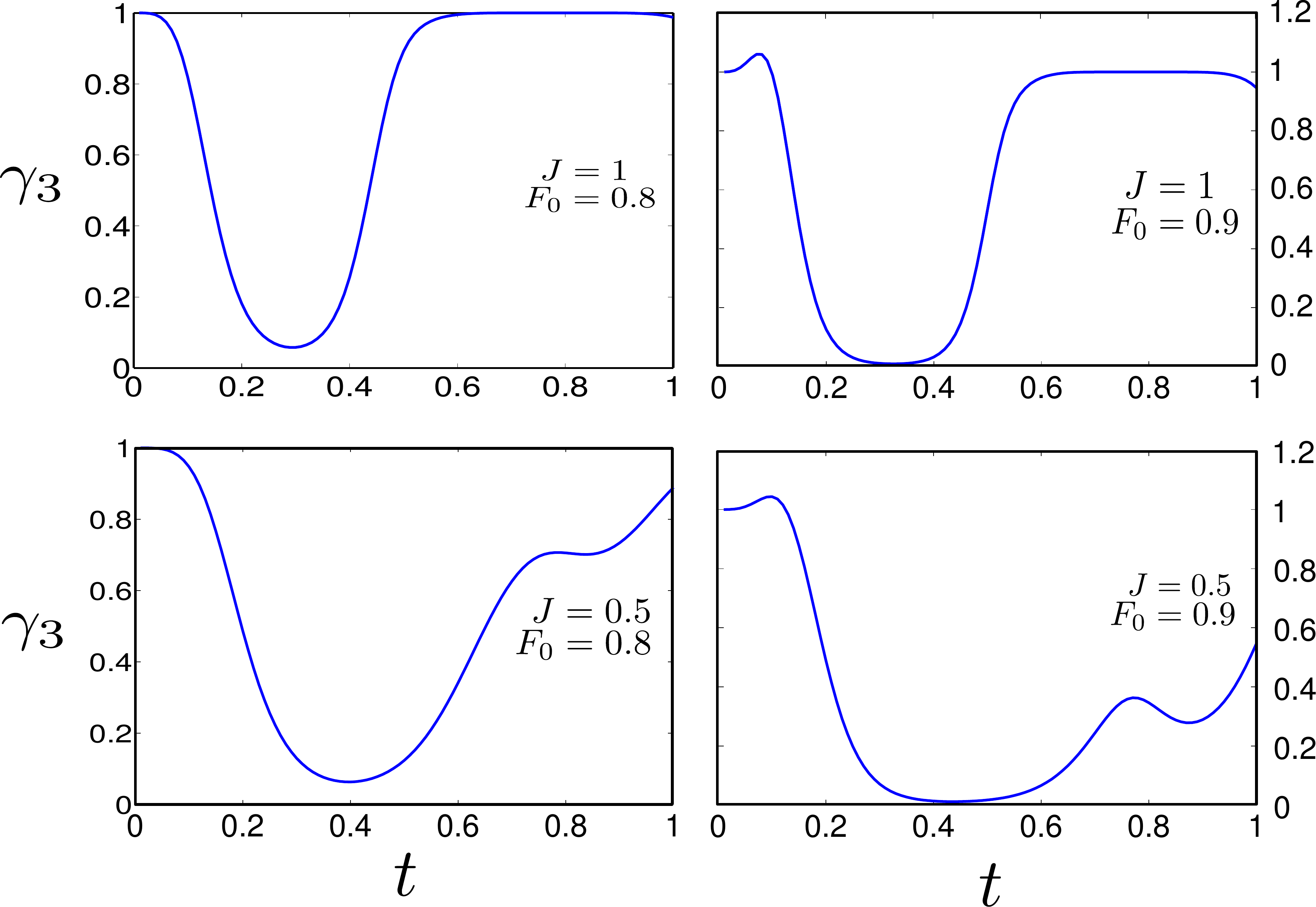}\label{fig:PhysRec}
\caption{Relative noise after three steps of iteration for different initial fidelities $F_0$ and interaction parameter $J$ with $c_D=0.04$. Whenever the relative noise is smaller than one,
we see a comparable advantage over the i.i.d.\ case.}
\end{figure}

For certain parameters, e.g. $F_0=0.9$, $J=1$, $c_D=0.04$ and $t=0.1$, the MPO setting converges significantly faster than the i.i.d.\ setting of the same local fidelity.
After one round we have $\gamma_1 =0.9$, meaning that after one round we have only 90\% of the noise compared to the i.i.d.\ case. We included this specific MPO in Fig.\ \ref{fig:taueps}.
For longer interaction times $t=0.47$ we get a local fidelity of $\leq 0.4$. An i.i.d.\ setting with this fidelity does not succeed, but the MPO setting does.
This shows that we transport the unwanted inter pair correlations introduced by the memory into the wanted correlations between the pairs. We tested our Heisenberg memory model for 
different parameters and the distillation of the correlated states performs better than the distillation in the i.i.d.\ case for a large range of interaction times, see also Fig.\  6.

 \end{document}